\tikzset{cross/.style={cross out, draw=black, minimum size=2*(#1-\pgflinewidth), inner sep=0pt, outer sep=0pt}, cross/.default={1pt}}
\tikzset{cross/.style={cross out, draw=black, minimum size=2*(#1-\pgflinewidth), inner sep=0pt, outer sep=0pt}, cross/.default={1pt}}
\DeclareRobustCommand{\SkipTocEntry}[5]{}
\definecolor{myPurple}{rgb}{0.5,0.1,0.6}
\definecolor{myOrange}{rgb}{1.0,0.5,0.0}
\definecolor{myRed}{rgb}{1.0,0.0,0.0}
\definecolor{myGreen}{rgb}{0.0,0.5,0.0}
\definecolor{LatexBlue}{rgb}{0.211765,0.227451,0.666667}
\definecolor{myBlue}{rgb}{0.0,0.0,1.0}
\definecolor{myBlack}{rgb}{0.0,0.0,0.0}
\definecolor{myGray}{rgb}{0.3,0.3,0.3}
\theoremstyle{plain}
\newtheorem{theorem}{Theorem}[section]
\newtheorem*{theorem*}{Theorem}
\newtheorem{proposition}[theorem]{Proposition}
\newtheorem*{proposition*}{Proposition}
\newtheorem{lemma}[theorem]{Lemma}
\theoremstyle{definition}
\newtheorem{definition}[theorem]{Definition}
\newtheorem{assumption}[theorem]{Assumption}
\newenvironment{remark}
  {\pushQED{\qed}\remarkx}
  {\popQED\endremarkx}
\newcommand{\tensor}[1]{{\mathfrak{#1}}}
\DeclareMathOperator{\Aut}{Aut}
\DeclareMathOperator{\Ad}{Ad}
\DeclareMathOperator{\ad}{ad}
\newcommand{\longhookrightarrow}{\lhook\joinrel\relbar\joinrel\rightarrow}
\def\dd{{\rm d}}
\def\g{\mathfrak{g}}
\def\id{\textup{id}}
\def\h{\mathfrak{h}}
\def\CS{\mathrm{CS}}
\def\bbR{\mathbb{R}}
\def\bbC{\mathbb{C}}
\newcommand{\lau}[1]{(\kern-.2em( #1 )\kern-.2em)}
\newcommand{\ip}[2]{\left\langle #1 , #2 \right\rangle}
\newcommand{\ipp}[2]{\left\langle\!\left\langle #1 , #2 \right\rangle\!\right\rangle}
\def\A{\mathcal{A}}
\def\PN{\mathbb{PN}}
\def\PT{\mathbb{PT}}
\def\CP{\mathbb{C}P^1}
\def\RR{\mathbb{R}}
\def\ZZ{\mathbb{Z}}
\def\ii{{\rm i}}
\def\1{\tensor{1}}
\def\2{\tensor{2}}
\def\3{\tensor{3}}
\def\4{\tensor{4}}
\numberwithin{equation}{section}
\def\Xint#1{\mathchoice
{\XXint\displaystyle\textstyle{#1}}%
{\XXint\textstyle\scriptstyle{#1}}%
{\XXint\scriptstyle\scriptscriptstyle{#1}}%
{\XXint\scriptscriptstyle\scriptscriptstyle{#1}}%
\!\int}
\def\XXint#1#2#3{{\setbox0=\hbox{$#1{#2#3}{\int}$ }
\vcenter{\hbox{$#2#3$ }}\kern-.6\wd0}}
\def\dashint{\Xint-}
\begin{document}

\title[5d 2-CS and 3d IFTs]{5d 2-Chern-Simons theory and\\
3d integrable field theories}

\author{Alexander Schenkel}
\address{School of Mathematical Sciences, University of Nottingham, University Park, Nottingham NG7 2RD, United Kingdom.}
\email{alexander.schenkel@nottingham.ac.uk}
\author{Beno\^{\i}t Vicedo}
\address{Department of Mathematics, University of York, York YO10 5DD, United Kingdom.}
\email{benoit.vicedo@gmail.com}

\begin{abstract}
The $4$-dimensional semi-holomorphic Chern-Simons theory of Costello and Yamazaki provides a gauge-theoretic origin for the Lax connection of $2$-dimensional integrable field theories. The purpose of this paper is to extend this framework to the setting of $3$-dimensional integrable field theories by considering a $5$-dimensional semi-holomorphic higher Chern-Simons theory for a higher connection $(A,B)$ on $\mathbb{R}^3 \times \mathbb{C}P^1$. The input data for this theory are the choice of a meromorphic $1$-form $\omega$ on $\mathbb{C}P^1$ and a strict Lie $2$-group with cyclic structure on its underlying Lie $2$-algebra. Integrable field theories on $\mathbb{R}^3$ are constructed by imposing suitable boundary conditions on the connection $(A,B)$ at the $3$-dimensional defects located at the poles of $\omega$ and choosing certain admissible meromorphic solutions of the bulk equations of motion. The latter provides a natural notion of higher Lax connection for $3$-dimensional integrable field theories, including a $2$-form component $B$ which can be integrated over Cauchy surfaces to produce conserved charges. As a first application of this approach, we show how to construct a generalization of Ward's $(2+1)$-dimensional integrable chiral model from a suitable choice of data in the $5$-dimensional theory.
\end{abstract}

\maketitle

\setcounter{tocdepth}{1}
\tableofcontents

\input{epsf}

\section{Introduction}
Although there is no universally agreed definition of integrability in the
context of field theories, a hallmark of integrability in this infinite dimensional
setting is the existence of infinitely many independent conserved charges.
One very general and powerful framework allowing for the systematic construction
of these conserved charges is the Lax formalism \cite{Lax}, in which the field equations
are expressed as the consistency condition for an overdetermined system of
linear partial differential equations. An important
incarnation of this formalism is in the context of $2$-dimensional field
theories where a \emph{Lax connection} is defined as an on-shell flat connection that
depends meromorphically on an auxiliary Riemann surface $C$, which is usually
taken to be the Riemann sphere $\CP$. If a $2$-dimensional field theory admits a
Lax connection, then its holonomy along curves of constant time, which depends
analytically on $C$, serves as a generating function for infinitely many conserved charges.
Unfortunately, a suitable Lax connection for a given $2$-dimensional field theory is
typically found by clever guesswork, making its origin quite obscure.

\medskip

In their seminal paper \cite{Costello:2019tri}, Costello and Yamazaki 
gave a very elegant gauge-theoretic origin for the Lax connection in 
$2$-dimensional integrable field theories which is based on a $4$-dimensional
semi-holomorphic variant of Chern-Simons theory 
\cite{Nekrasov,Costello:2013zra,Costello:2013sla, Witten:2016spx, Costello:2017dso, Costello:2018gyb}.
In this approach, integrable field theories on a $2$-dimensional
manifold $\Sigma$, with Lax connection depending meromorphically on a Riemann surface $C$,
arise as specific solutions to this $4$-dimensional semi-holomorphic Chern-Simons
theory on $\Sigma \times C$.
The Lagrangian of the latter is given by $\omega \wedge \CS(A)$,
where $\omega$ is a fixed meromorphic $1$-form on $C$ and $\CS(A)$ is the
Chern-Simons $3$-form for a $\g$-valued $1$-form $A$ on $\Sigma \times C$.
A $2$-dimensional integrable field theory is then determined by the choice of $1$-form $\omega$
and of boundary conditions imposed on the gauge field $A$ at the surface defects
$\Sigma \times \{ x \} \subset \Sigma \times C$ located at each pole $x$ of the $1$-form $\omega$.
Importantly, the Lax connection emerges naturally as a meromorphic solution of the bulk equations of motion
for the gauge field $A$, with pole structure determined by the zeros of $\omega$ and
satisfying the chosen boundary conditions.

This relatively recent gauge-theoretic approach to $2$-dimensional integrable field theories has
already established itself as a very powerful tool for constructing new $2$-dimensional classical
integrable field theories, leading to the discovery of vast new families of examples.
This framework also serves as a very efficient organizational
tool for navigating the ever expanding zoo of $2$-dimensional integrable field theories.
See for instance \cite{Delduc:2019whp, Schmidtt:2019otc, Bassi:2019aaf, Fukushima:2020kta, Hoare:2020mpv, Lacroix:2020flf, Caudrelier:2020xtn, Fukushima:2020tqv, He:2021xoo, Fukushima:2021eni, Fukushima:2021ako, Liniado:2023uoo, Berkovits:2024reg, Lacroix:2024wrd}.

\medskip

In stark contrast to this extremely rich $2$-dimensional setting, there are currently very few known
examples of integrable field theories in higher dimensions. Perhaps the most well-known examples
are the Kadomtsev–Petviashvili (KP) equation in $3$ dimensions and the anti-self-dual Yang-Mills (ASDYM)
equation in $4$ dimensions. Another less well-known example, which will be particularly relevant for us later,
is Ward's equation \cite{Wmodel1, Wmodel2} which describes a non-relativistic modification of the non-integrable
$3$-dimensional chiral model that was obtained as a reduction of the ASDYM equation.
The KP equation admits a more exotic Lax formalism which roughly speaking encodes the
third dimension using the language of pseudo-differential operators, see for instance \cite{BBT}
for an extensive review. The Lax formalism used to encode the ASDYM equation, on the other hand, is based
on a partial flatness condition for a certain connection on $\mathbb{R}^4$. The latter lies at the
heart of the Penrose-Ward correspondence \cite{PWcorr} which relates solutions of the ASDYM
equation on $\RR^4$ to certain holomorphic vector bundles on twistor space
$\PT = \mathbb{C}P^3 \setminus \CP$.

Inspired by the Penrose-Ward correspondence, and following a proposal of Costello,
it was shown by Bittleston and Skinner in \cite{Bittleston:2020hfv} (see also the related works
\cite{Penna:2020uky, Cole:2023umd}) that various known Lagrangians for the ASDYM equation can be derived
starting from a $6$-dimensional holomorphic Chern-Simons theory on twistor space $\PT$. More specifically,
the Lagrangian for the latter is given by $\Omega \wedge \mathsf{hCS}(\mathcal A)$, where $\Omega$
is a fixed meromorphic $(3,0)$-form on $\PT$ and $\mathsf{hCS}(\mathcal A)$ is the holomorphic
Chern-Simons $(0,3)$-form for a $\g$-valued $(0,1)$-form $\mathcal A$ on $\PT$. The inevitable presence of
poles in $\Omega$ leads to a violation of gauge invariance which can nevertheless be restored by imposing suitable
boundary conditions on $\mathcal A$ at these poles. The ASDYM equation then emerges from
the simplest choice of $\Omega$, with a pair of double poles and without zeros, in much the same
way as $2$-dimensional integrable field theories emerge from $4$-dimensional semi-holomorphic Chern-Simons theory.
Interestingly, it was also shown in \cite{Bittleston:2020hfv} that applying the same procedure starting from
more general $(3,0)$-forms $\Omega$ leads to other $4$-dimensional actions with field
equations generalizing the ASDYM equation and which are again manifestly integrable by
the Penrose-Ward correspondence.

\medskip

It is therefore tempting to regard $6$-dimensional holomorphic Chern-Simons theory on $\PT$
as playing an analogous role to $4$-dimensional semi-holomorphic Chern-Simons theory for describing
$4$-dimensional integrable field theories.
However, although the ASDYM equation is certainly a $4$-dimensional integrable field theory \--- it is exactly
solvable by the ADHM construction \cite{ADHM} which is rooted in the Penrose-Ward
correspondence \--- its Lax formalism lacks certain features one would expect of a 
higher-dimensional integrable field theory.

Recall that the defining properties of the Lax connection for a $2$-dimensional integrable field theory
ensure that its holonomy along curves of constant time is both conserved and depends analytically on an
auxiliary Riemann surface $C$.
For a field theory on a $(d+1)$-dimensional spacetime with $d \geq 1$, a conserved charge should be
given by the integral of a $d$-form over a $d$-dimensional submanifold representing a constant time slice.
One should therefore expect an adequate notion of Lax connection for a $(d+1)$-dimensional integrable field theory
to involve a $d$-form. This observation was already made nearly 30 years ago in \cite{Alvarez:1997ma}
and further explored in subsequent works, see for instance \cite{Gianzo:1998ez, Adam:2008jx}.
However, in these works, although the right notion of higher connections was used, a meromorphic
dependence on an auxiliary Riemann surface was never considered. For a recent review of existing
descriptions of classical integrable field theories in $3$ dimensions see \cite{Gubarev:2023jtp}.

\medskip

The goal of this paper is to initiate the exploration of integrable field theories in higher dimensions
using higher gauge-theoretic methods. Specifically, we will focus in this paper on the problem of constructing
$3$-dimensional integrable field theories. Our proposal is to start from a $5$-dimensional higher gauge
theory variant of the $4$-dimensional semi-holomorphic Chern-Simons theory from \cite{Costello:2019tri}
which is defined on a product manifold $X=M\times C$ with $M$ representing a $3$-dimensional
spacetime and $C$ a Riemann surface. The structure group $G$ from the ordinary approach 
is generalized to a strict Lie $2$-group which we describe explicitly in terms of
a crossed module of Lie groups $(G,H,t,\alpha)$, see e.g.\ 
\cite{SchreiberWaldorf,Joao,Waldorf1,Waldorf2,Saemann} and also Section \ref{sec:prelims}
for a review. A connection (or gauge field) 
in this context is given by a pair $(A,B)\in \Omega^1(X,\g)\times \Omega^2(X,\h)$
consisting of both a $1$-form and a $2$-form taking values in the underlying Lie $2$-algebra
of the structure Lie $2$-group. Such higher connections have $1$-dimensional and also 
$2$-dimensional holonomies (see e.g.\ \cite{SchreiberWaldorf,Joao,Waldorf2}), 
which provides additional flexibility for the construction of conserved charges in a 
$3$-dimensional integrable field theory, see also Subsection \ref{subsec:2hol}. The Lagrangian of our theory takes
the form $\omega\wedge \CS(A,B)$, where $\omega$ is a fixed meromorphic $1$-form
on $C$ and $\CS(A,B)$ is the $2$-Chern-Simons $4$-form for the higher connection $(A,B)$
which is associated with the choice of a suitable non-degenerate invariant pairing on the Lie $2$-algebra,
see e.g.\ \cite[Section 5.2]{Jurco} and also Section \ref{sec:prelims} for a review.
In analogy to the case of $4$-dimensional semi-holomorphic Chern-Simons theory,
the action of our theory is not automatically gauge-invariant so that one has to 
impose suitable boundary conditions for the connection $(A,B)$ on the $3$-dimensional defects
$M\times\{x\}\subset X$ located at each pole $x$ of $\omega$. Meromorphic solutions
to the Euler-Lagrange equations of this action naturally 
provide flat higher connections $(A,B)$ on $M$ which depend meromorphically
on $C$, i.e.\ candidates for Lax connections for $3$-dimensional integrable field theories on $M$.
In our approach, an integrable field theory is specified by the choice of 
1.)~a structure Lie $2$-group with
non-degenerate invariant pairing on its Lie $2$-algebra, 
2.)~a meromorphic $1$-form $\omega$ on $C$, and 3.)~suitable
boundary conditions for $(A,B)$ at the defects located at each pole $x$ of $\omega$.

\medskip

We will now describe in more detail our results by outlining
the content of this paper. In Section \ref{sec:prelims},
we provide a brief introduction to higher gauge theory.
The reader can find more details in the 
articles \cite{SchreiberWaldorf,Joao,Waldorf1,Waldorf2} and the review \cite{Saemann}.
This includes a quick recap of crossed modules of Lie groups and Lie algebras 
(Subsection \ref{subsec:crossedmod}), higher gauge fields and their gauge transformations
(Subsection \ref{subsec:2grpd}), higher parallel transports (Subsection \ref{subsec:2hol}) 
and the construction of the $2$-Chern-Simons $4$-form
(Subsection \ref{subsec:2CSform}). The material presented in this section is rather
standard and well-known, probably with the exception of the gauge transformation
property of the $2$-Chern-Simons $4$-form in Proposition \ref{propo:2CS 4form gauge},
which was derived earlier in \cite{Zucchini:2021bnn} only under additional assumptions on the 
crossed module. We would like to emphasize that it is important for us to consider
also connections $(A,B)$ which \textit{do not} necessarily satisfy the so-called
fake-flatness condition $\dd A+\tfrac{1}{2}[A,A] - t_\ast(B)=0$
since otherwise our action functional would degenerate. This pushes
us out of the standard framework for higher connections developed in 
\cite{SchreiberWaldorf,Joao,Waldorf1,Waldorf2}, which as a 
consequence prevents us from considering also $2$-gauge transformations
between gauge transformations, see Remark \ref{rem:2gauge}. (More concisely,
this means that our non-fake-flat higher connections only form a groupoid and not a $2$-groupoid.)
There are recent developments towards a theory of non-fake-flat higher connections
through so-called adjusted connections, see e.g.\ 
\cite{AdjustmentsRSW,Kim:2019owc,Tellez-Dominguez}, 
but the additional adjustment data seem to be incompatible 
with the type of boundary conditions we would like to impose on our connections,
see Subsections \ref{subsec:bdy simple} and \ref{subsec:edge modes simple}.
These inconveniences associated with non-fake-flat connections disappear once we go
on-shell since solutions of our $5$-dimensional semi-holomorphic $2$-Chern-Simons theory
are fully flat connections, and hence in particular fake-flat. This means that 
the construction of $2$-dimensional holonomies from \cite{SchreiberWaldorf,Joao,Waldorf1,Waldorf2},
which are needed in our approach to generate conserved charges, is directly applicable
in our context once we go on-shell.

In Section \ref{sec:simplepoles}, we study in detail our $5$-dimensional semi-holomorphic
$2$-Chern-Simons theory in the special case where the meromorphic $1$-form
$\omega$ has only simple poles. In Subsection \ref{subsec:action simple pole}
we spell out concretely the action functional for this theory
and in Subsection \ref{subsec:gauge simple} we analyze the properties
of this action under gauge transformations. It is shown that there
are violations to gauge invariance which are localized at the $3$-dimensional 
defects $M\times\{x\}\subset X$ located at the poles $x$ of $\omega$, see 
Proposition \ref{propo:bdy gauge violation}. In Subsection \ref{subsec:bdy simple}
we restore gauge invariance by imposing suitable boundary conditions
which are determined by the choice of an isotropic crossed submodule 
$(G^\diamond,H^\diamond,t^{\bm{z}},\alpha^{\bm{z}})
\subseteq (G^{\bm{z}},H^{\bm{z}},t^{\bm{z}},\alpha^{\bm{z}})$ of 
the crossed module associated with the defect. We will show in
Subsection \ref{subsec:edge modes simple} that these boundary conditions
admit an equivalent (homotopical) interpretation in terms of edge mode
fields living on the defect, which take the form of pairs 
$(k,\kappa)$ with $k\in C^\infty(M,G^{\bm{z}})$ a group-valued function
and $\kappa\in\Omega^1(M,\h^{\bm{z}})$ a Lie algebra valued $1$-form
on the $3$-dimensional spacetime $M$. While the group-valued
edge modes $k$ are familiar from $4$-dimensional semi-holomorphic Chern-Simons theory, see 
e.g.\ \cite{Benini:2020skc}, the $1$-form edge modes $\kappa$ are a novel feature 
of our higher gauge theoretic approach to $3$-dimensional integrable field theories.
Our approach results in an explicit action functional \eqref{eqn:extended action simple} 
for the edge modes on $M$ which gives them their dynamics.
In Subsection \ref{subsec:EOM} we derive the Euler-Lagrange equations
of our extended bulk+defect action functional \eqref{eqn:extended action simple}
and find solutions which describe flat connections on $M$ that are meromorphic on $C$,
as required for a Lax connection.

In Section \ref{sec:higherpoles}, we generalize the results from Section \ref{sec:simplepoles}
to the case where $\omega$ has poles of arbitrary order. Using the concept of 
regularized integrals from \cite{Li:2020ljm} and \cite{Benini:2020skc},
this is easily achievable and does not pose any additional challenges. 
The main new feature of the higher-order pole case is that the defect crossed
module $(G^{\hat{\bm z}},H^{\hat{\bm z}},t^{\hat{\bm z}},\alpha^{\hat{\bm z}})$ consists
of products of jet groups associated with $(G,H,t,\alpha)$.

The aim of Section \ref{sec:integrable} is to apply our approach to 
construct explicit examples of $3$-dimensional integrable field theories.
The key concept which enables these constructions
is that of admissible connections from \cite{Benini:2020skc},
which we generalize to our present context of higher gauge theory.
In Subsection \ref{subsec:degreecounting} we introduce a suitable concept
of maximality for isotropic crossed submodules under which it can be 
expected that the Lax connection $(A,B)$
can be expressed uniquely in terms of the edge mode fields $(k,\kappa)$.
In Subsection \ref{subsec:ChernSimons} we present a toy-model
to illustrate our proposed construction of $3$-dimensional integrable
field theories by focusing on one of the simplest choices of $\omega$ 
given by a meromorphic $1$-form with a single zero, a simple pole and a double pole.
The resulting $3$-dimensional field theory in this case is given by
Chern-Simons theory. This is integrable in the sense that the Chern-Simons
equations of motion, i.e.\ flatness of the connection, arise as the flatness
of the associated Lax connection, but the Lax connection in this toy-example
is rather trivial from the perspective integrable field theory
since it is constant in the complex coordinate $z\in C$.

A more interesting example is presented in Subsection \ref{subsec:Ward}, where the
meromorphic $1$-form $\omega$ is taken to have four simple zeros and three double poles.
By choosing a suitable isotropic crossed submodule, we derive equations of motion
for the edge mode fields and observe that they are related to Ward's equation, also
known as the integrable chiral model \cite{Wmodel1, Wmodel2}.
In particular, our approach provides a direct derivation of
Ward's consistency assumption that the distinguished vector in Ward's equation
is normalized and spacelike. The Ward equation was
originally obtained from a particular choice of gauge in the Yang-Mills-Higgs system on $\mathbb{R}^{2,1}$,
which itself arises as a symmetry reduction of the ASDYM equation on $\mathbb{R}^{2,2}$
by the action of a one-parameter group of non-null translations. It was shown in \cite{Bittleston:2020hfv}
that the same symmetry reduction applied to $6$-dimensional holomorphic Chern-Simons theory
on twistor space $\PT$ leads to a partially holomorphic $5$-dimensional variant of Chern-Simons theory
on the quotient $\PN$ of $\PT$ by this group of translations. Moreover, it was
mentioned in \cite{Bittleston:2020hfv} that Ward's model can be naturally obtained
from this $5$-dimensional partially holomorphic Chern-Simons theory. We stress,
however, that there is no immediate relationship
between our $5$-dimensional $2$-Chern-Simons theory, which is based on higher gauge fields
$A \in \Omega^1(X, \g)$ and $B \in \Omega^2(X, \h)$, and the
$5$-dimensional theory considered in \cite{Bittleston:2020hfv}, which is based on
an ordinary gauge field $A' \in \Omega^1(\PN, \g)$. In fact, the model
we construct in Subsection \ref{subsec:Ward} is a considerable generalization
of the Ward model which involves three $1$-form fields valued in $\h$ as well as
the $G$-valued field of the original Ward model. We expect that the presence of
$\h$-valued $1$-form fields is a general feature of $3$-dimensional integrable field theories
constructed via our approach, which is ultimately connected to the existence
of a $2$-form component $B \in \Omega^2(X, \h)$ of the higher Lax connection.

\subsubsection*{Acknowledgments}
We would like to thank Severin Bunk, Christian S\"amann and Konrad Waldorf
for useful discussions about higher connections.
A.S.\ gratefully acknowledges the support of 
the Royal Society (UK) through a Royal Society University 
Research Fellowship (URF\textbackslash R\textbackslash 211015)
and Enhancement Grants (RF\textbackslash ERE\textbackslash 210053 and 
RF\textbackslash ERE\textbackslash 231077). B.V.\ gratefully acknowledges
the support of the Leverhulme Trust 
through a Leverhulme Research Project Grant (RPG-2021-154).


\section{\label{sec:prelims}Preliminaries on higher gauge theory}
In this section we recall some basic aspects of higher gauge theory, 
i.e.\ the theory of connections on higher-categorical analogues of principal bundles
in which the structure group is generalized from a Lie group to a Lie $2$-group.
We consider only strict Lie $2$-groups, which we describe in terms of crossed modules of Lie groups, 
and globally trivial principal $2$-bundles. More details about higher gauge theory can be found in
the articles \cite{SchreiberWaldorf,Joao,Waldorf1,Waldorf2} and the review \cite{Saemann}.

\subsection{\label{subsec:crossedmod}Crossed modules of Lie groups and Lie algebras}
A convenient and computationally efficient model for strict Lie $2$-groups 
is given by the following
\begin{definition}
A \textit{crossed module of Lie groups} 
is a tuple $(G, H, t, \alpha)$ consisting of two Lie groups $G$ and $H$, 
a Lie group homomorphism $t : H \to G$, 
and a smooth action $\alpha : G \times H \to H$ of $G$ on $H$ 
in terms of Lie group automorphisms,
such that
\begin{subequations} \label{eqn:CM axioms Lie}
\begin{flalign}
\label{eqn:CM axiom Lie a} t\big( \alpha(g, h) \big) \,&=\, g \,t(h)\, g^{-1}\quad, \\
\label{eqn:CM axiom Lie b} \alpha\big( t(h), h^\prime \big) \,&=\, h \,h^\prime\, h^{-1}\quad,
\end{flalign}
\end{subequations}
for all $g \in G$ and $h, h^\prime \in H$.
\end{definition}

Associated to each crossed module of Lie groups $(G,H,t,\alpha)$
is a \textit{crossed module of Lie algebras} $(\g,\h,t_\ast,\alpha_\ast)$
which models the Lie $2$-algebra of the corresponding strict Lie $2$-group.
Here $\g$ and $\h$ denote the Lie algebras of, respectively, $G$ and $H$.
The Lie algebra homomorphism 
\begin{flalign} \label{eqn:t star}
t_\ast \coloneqq dt\vert_{1_H} \,:\, \h \,\longrightarrow\, \g
\end{flalign}
is the differential of the Lie group homomorphism $t$ at the identity $1_H\in H$
and the Lie algebra homomorphism
\begin{subequations}\label{eqn:alpha ast}
\begin{flalign}
\alpha_\ast \coloneqq d\alpha\vert_{1_G}\,:\, \g ~\longrightarrow~\mathrm{Der}(\h)
\end{flalign}
is the differential of the adjunct $\alpha : G\to \Aut(H)$ of $\alpha$ at $1_G\in G$. 
One can equivalently regard $\alpha_\ast$ as a linear map 
\begin{flalign}
\alpha_\ast \,:\, \g\otimes\h  ~\longrightarrow~ \h\quad,
\end{flalign}
\end{subequations}
for which the Lie derivation property reads as
\begin{flalign}
\alpha_\ast\big(x,[y,y^\prime]\big)\,=\, \big[\alpha_\ast(x,y),y^\prime\big] + \big[y,\alpha_\ast(x,y^\prime)\big]\quad,
\end{flalign}
for all $x\in\g$ and $y,y^\prime\in\h$. The two properties \eqref{eqn:CM axioms Lie} 
differentiate to
\begin{subequations} \label{eqn:CM axioms differential}
\begin{flalign}
\label{eqn:CM axiom differential a} t_\ast\big( \alpha_\ast(x, y) \big) \,&=\, \big[x,t_\ast(y)\big]\quad, \\
\label{eqn:CM axiom differential b} \alpha_\ast\big( t_\ast(y), y^\prime \big) \,&=\, [y,y^\prime]\quad,
\end{flalign}
\end{subequations}
for all $x\in\g$ and $y,y^\prime\in\h$.

There are two additional differentials of the smooth action $\alpha : G\times H\to H$
which play an important role in higher gauge theory. With a slight abuse of notation,
we shall denote all these differentials by the same symbol $\alpha_\ast$. 
First, for every $g\in G$, one has a Lie group homomorphism $\alpha_g \coloneqq \alpha(g,\,\cdot\,) : H\to H$
of which one can take the differential $d\alpha_g\vert_{1_H} :\h\to \h$ at $1_H\in H$.
Allowing $g\in G$ to vary, one obtains the map
\begin{flalign}\label{eqn:alpha ast G}
\alpha_\ast \coloneqq d\alpha_{(\,\cdot\,)}\vert_{1_H}\,:\,G\times\h\,\longrightarrow\,\h\quad.
\end{flalign}
Second, for every $h\in H$, one has a smooth map $\tilde{\alpha}_h\coloneqq \alpha(\,\cdot\,,h)\,h^{-1} : G\to H$
which preserves the identity elements, i.e.\ $\tilde{\alpha}_h(1_G)=1_H$, 
but not necessarily the group multiplications.
Taking the differential at $1_G\in G$ defines a linear map $d\tilde{\alpha}_{h}\vert_{1_G} : \g\to \h$. 
Allowing $h\in H$ to vary, one obtains the map
\begin{flalign}\label{eqn:alpha ast H}
\alpha_\ast \coloneqq d\tilde{\alpha}_{(\,\cdot\,)}\vert_{1_G}\,:\,\g\times H\,\longrightarrow\,\h\quad.
\end{flalign}
Note that the three maps in \eqref{eqn:alpha ast}, \eqref{eqn:alpha ast G} and
\eqref{eqn:alpha ast H} can be distinguished from their source,
so denoting all of them by the same symbol will likely cause no confusion.

\subsection{\label{subsec:2grpd}Higher gauge fields and gauge transformations}
Given any manifold $X$ and crossed module of Lie groups $(G,H,t,\alpha)$,
we will always consider the corresponding trivial principal $2$-bundle over $X$. 
The following definition is from \cite[Appendix A.1]{Waldorf2}.
\begin{definition}\label{def:connections}
Let $X$ be a manifold and $(G,H,t,\alpha)$ crossed module of Lie groups.
\begin{itemize}
\item[(a)] A \textit{connection} is a pair $(A,B)$ consisting of 
a $\g$-valued $1$-form $A\in\Omega^1(X,\g)$ and an $\h$-valued 
$2$-form $B\in \Omega^2(X,\h)$. 

\item[(b)] A \textit{gauge transformation} 
is a pair $(g, \gamma)$ consisting of a $G$-valued 
smooth function $g\in C^\infty(X,G)$ and an $\h$-valued $1$-form $\gamma\in\Omega^1(X,\h)$.
It transforms a connection $(A,B)$ to the connection ${}^{(g,\gamma)}(A,B)$ specified by
\begin{subequations}\label{eqn:gauge transformation}
\begin{flalign}
\label{eqn:gauge transformation b} {}^{(g,\gamma)}A\,&\coloneqq\, g \,A\, g^{-1} - \dd g \,g^{-1} - t_\ast (\gamma)\quad,\\
\label{eqn:gauge transformation c} {}^{(g,\gamma)}B\,&\coloneqq\, \alpha_\ast(g, B) - F(\gamma) 
- \alpha_\ast\big({}^{(g,\gamma)}A, \gamma\big)  \quad,
\end{flalign}
\end{subequations}
where $F(\gamma) \coloneqq \dd \gamma + \tfrac{1}{2} [\gamma, \gamma] \in \Omega^2(X, \h)$.

\item[(c)] We denote by $\mathrm{Con}_{(G,H,t,\alpha)}(X)$ the groupoid whose objects
are all connections $(A,B)$ and whose morphisms $(g,\gamma) : (A,B)\to {}^{(g,\gamma)}(A,B)$ 
are all gauge transformations between connections. The composition of two morphisms 
$(g_1,\gamma_1): (A,B)\to {}^{(g_1,\gamma_1)}(A,B)$ and $(g_2,\gamma_2) :{}^{(g_1,\gamma_1)}(A,B)\to
{}^{(g_2,\gamma_2)}{}^{(g_1,\gamma_1)}(A,B)$ is defined by
\begin{flalign}
(g_2, \gamma_2) \, (g_1, \gamma_1) \,\coloneqq\, \big( g_2 \,g_1,\, \gamma_2 + \alpha_\ast(g_2, \gamma_1)  \big)\quad,
\end{flalign}
for all $g_1, g_2 \in C^\infty(X, G)$ and $\gamma_1, \gamma_2 \in \Omega^1(X, \h)$, and the identity
morphisms are  $(1_G,0) : (A,B)\to (A,B)$. The inverse of a 
morphism $(g,\gamma) : (A,B)\to {}^{(g,\gamma)}(A,B)$ is given explicitly by
\begin{flalign}
(g,\gamma)^{-1}\,=\,\big(g^{-1}, - \alpha_\ast(g^{-1},\gamma)\big)\quad.
\end{flalign}
\end{itemize}
\end{definition}

\begin{remark}\label{rem:2gauge}
For the purpose of our paper, it is important that 
we do \textit{not} demand the fake-flatness condition  
$\mathsf{fcurv}(A,B) \,:=\, F(A) - t_\ast(B) \,:=\, \dd A +\tfrac{1}{2}[A,A] - t_\ast (B)=0$
because this would degenerate the action functionals studied in the later sections.
The theory of non-fake-flat connections is unfortunately not yet well
understood, especially when it comes to their $2$-categorical aspects. 
In the fake-flat case,
there also exists a concept of $2$-gauge transformations $a : (g,\gamma)\Rightarrow {}^{a}(g,\gamma)$ 
between gauge transformations, which are parametrized by $H$-valued smooth functions $a\in C^\infty(X,H)$
and transform according to
\begin{subequations}
\begin{flalign}
{}^a g \,&:=\, t(a)\,g\quad,\\
{}^a\gamma \,&:=\,a\,\gamma\,a^{-1} - \dd a\,a^{-1} - \alpha_\ast\big({}^{(g,\gamma)}A,a\big)\quad.
\end{flalign}
\end{subequations}
These $2$-gauge transformations are only well-defined in the non-fake-flat case provided
that the rather unnatural and highly restrictive constraint $\alpha_\ast\big(\mathsf{fcurv}(A,B),a\big)=0$ holds true.
Because of these issues, we will neglect such $2$-gauge transformations in our constructions
and consider only the ordinary groupoid $\mathrm{Con}_{(G,H,t,\alpha)}(X)$
of connections and gauge transformations instead of a potential $2$-groupoid refinement
involving also $2$-gauge transformations.
\end{remark}

\subsection{\label{subsec:2hol}Higher parallel transport}
Given any connection $(A,B)$ as in Definition \ref{def:connections},
one can define parallel transports along $1$-dimensional paths
$\gamma : [0,1]\to X\,,s\mapsto \gamma(s)$ and also along $2$-dimensional surfaces 
$\Gamma: [0,1]^2\to X\,,~(s,u)\mapsto \Gamma(s,u)$. The parallel transport
along a $1$-dimensional path $\gamma$ is given by evaluating the path-ordered exponential
\begin{subequations}
\begin{flalign}
\mathsf{g}_{\gamma}(s)\,=\,\mathcal{P}\exp\bigg(\int_0^s A\big(\tfrac{d}{ds^\prime}\gamma(s^\prime)\big)\,\dd s^\prime\bigg)\,\in\,G
\end{flalign}
at $s=1$, where $\tfrac{d}{ds^\prime}\gamma(s^\prime)\in T_{\gamma(s^\prime)}X$ 
denotes the tangent vector. Recall that the path-ordered exponential is defined as the solution 
of the differential equation
\begin{flalign}
\tfrac{d}{ds}\mathsf{g}_{\gamma}(s)\,=\, \mathsf{g}_{\gamma}(s)~A\big(\tfrac{d}{ds}\gamma(s)\big)
\end{flalign}
\end{subequations}
for the initial condition $\mathsf{g}_{\gamma}(0)=1_G$.
The parallel transport along a $2$-dimensional surface $\Gamma$ is given by evaluating the surface-ordered exponential
\begin{subequations}
\begin{flalign}
\mathsf{h}_{\Gamma}(s,u)\,=\,\mathcal{S}\exp\bigg(\int_0^u\int_0^s B\Big(\tfrac{\partial}{\partial s^\prime}\Gamma(s^\prime,u^\prime),\tfrac{\partial}{\partial u^\prime}\Gamma(s^\prime,u^\prime)\Big)\,\dd s^\prime\,\dd u^\prime\bigg)\,\in\,H
\end{flalign}
at $(s,u)=(1,1)$. The latter is defined as the solution of the differential equation
\begin{flalign}
\tfrac{\partial}{\partial u}\mathsf{h}_{\Gamma}(s,u)\,=\, \mathsf{h}_{\Gamma}(s,u)~\int_0^s\alpha_\ast\bigg(\mathsf{g}_{\gamma^0}(u)\,\mathsf{g}_{\gamma_u}(s^\prime),\,
B\Big(\tfrac{\partial}{\partial s^\prime}\Gamma(s^\prime,u),\tfrac{\partial}{\partial u}\Gamma(s^\prime,u)\Big)\bigg)
\,\dd s^\prime
\end{flalign}
\end{subequations}
for the initial condition $\mathsf{h}_{\Gamma}(s,0)=1_H$, for all $s\in[0,1]$,
where we used the notation $\Gamma(s,u)=\gamma^s(u)=\gamma_u(s) $.

In \cite{SchreiberWaldorf,Joao,Waldorf2}, it is shown that in the case where
the connection $(A,B)$ is fake-flat, i.e.\ the fake-curvature
$F(A) - t_\ast(B) =0$ vanishes, 
the $1$- and $2$-dimensional parallel transports assemble into a $2$-functor from the path $2$-groupoid 
of the manifold $X$ to the classifying space of the structure Lie $2$-group $(G,H,t,\alpha)$.
This means that parallel transports satisfy suitable $1$- and $2$-dimensional
composition properties, which allow one to compute them also along non-trivial 
shapes such as $2$-spheres or $2$-tori.

In our context of $3$-dimensional integrable field theories, as
discussed later in Section \ref{sec:integrable}, the $2$-dimensional
parallel transport of a fully flat connection $(A,B)$ on a $3$-dimensional spacetime $X=\bbR^3$ 
gives rise to conserved charges which are localized on fixed-time surfaces
$\{t\}\times \bbR^2\subset X$. Let us explain this crucial point more explicitly.
In order to ensure that the relevant integrals
exist, we assume that the connection $(A,B)$ has spacelike compact support on $X$,
i.e.\ its restriction $(A,B)\vert_t$ to each fixed-time surface $\{t\}\times\bbR^2\subset X$ is compactly supported.
Denoting by $\Gamma_t : [0,1]^2\to \{t\}\times\bbR^2\subset X$ any fixed-time 
$2$-dimensional surface which contains the support of $(A,B)\vert_t$, we consider
the $2$-dimensional parallel transport
\begin{flalign}\label{eqn:surfacetransport}
\mathsf{h}_{\Gamma_t}\,=\,\mathcal{S}\exp\bigg(\int_0^1\int_0^1 B\Big(\tfrac{\partial}{\partial s^\prime}\Gamma_t(s^\prime,u^\prime),\tfrac{\partial}{\partial u^\prime}\Gamma_t(s^\prime,u^\prime)\Big)\,\dd s^\prime\,\dd u^\prime\bigg)\,\in\,H\quad.
\end{flalign}
In the case where $(A,B)$ is fully flat, i.e.\ the fake-curvature
$F(A) - t_\ast(B) =0$ and the $3$-form curvature
$\dd B + \alpha_\ast(A,B) =0$ both vanish,
this parallel transport is time-independent, i.e.\
\begin{flalign}
\mathsf{h}_{\Gamma_t} \,=\,\mathsf{h}_{\Gamma_{t^\prime}}
\end{flalign}
for all $t,t^\prime\in\bbR$. (This statement follows from 
the same argument which proves thin homotopy invariance of $2$-dimensional parallel transports,
see e.g.\ \cite[Appendix A.3]{SchreiberWaldorf} or \cite[Theorem 2.30]{Joao}.)
This implies that $\mathsf{h}_{\Gamma_t}\in H$ is a conserved charge.
In the case of a $3$-dimensional integrable field theory (see Section \ref{sec:integrable}), 
the fully flat higher Lax connection $(A,B)$ further depends meromorphically on 
an auxiliary Riemann surface $C$, and hence so does the conserved charge $\mathsf{h}_{\Gamma_t}$.
Laurent expansion then leads to an infinite tower of conserved charges which are localized
on the fixed-time surfaces $\{t\}\times \bbR^2\subset X$.
\begin{remark}
We would like to point out a potential relationship between the higher parallel transports
from this subsection and the concept of generalized (higher-form) symmetries from
\cite{Gaiotto:2014kfa,Benini:2018reh}. The conserved charges 
\eqref{eqn:surfacetransport} constructed from $2$-dimensional
parallel transports of a fully flat connection $(A,B)$ are localized in codimension $1$
of spacetime $X=\bbR^3$, so they correspond to $0$-form symmetries
according the terminology of generalized symmetries. Considering 
loops in a fixed-time surface, i.e.\ $1$-dimensional paths 
$\gamma: [0,1]\to \{t\}\times\bbR^2\subset X$ such that $\gamma(0)=\gamma(1)$,
we can consider also $1$-dimensional parallel transports $\mathsf{g}_\gamma\in G$
which are localized in codimension $2$ of spacetime. It is important
to stress that the fake-flatness condition $F(A) - t_\ast(B)=0$
does in general \textit{not} imply that these codimension $2$ charges
are conserved under deformations of the loop, so they are in general \textit{not}
$1$-form symmetries in the usual sense of generalized symmetries.
However, the violation of the conservation of $\mathsf{g}_\gamma\in G$
under deformations of the loop is controlled by $2$-dimensional
parallel transports since the curvature $F(A)=t_\ast(B)$ of the $1$-form 
$A$ is determined via fake-flatness in terms of the $2$-form $B$. This seems 
to indicate that we are dealing with a system of $0$-form and $1$-form symmetries
which have a non-trivial interplay with each other. It would be interesting to 
make the latter point more precise by identifying the algebraic structures 
underlying this interplay. We would like to note that for a crossed module
with $t_\ast=0$, fake-flatness implies that $F(A)=0$, hence the 
codimension $2$ charges $\mathsf{g}_\gamma\in G$ decouple from the codimension
$1$ charges $\mathsf{h}_{\Gamma_t}\in H$ and, as a consequence, are strictly conserved.
\end{remark}

\subsection{\label{subsec:2CSform}The $2$-Chern-Simons $4$-form}
Let $X$ be a manifold of dimension $n\geq 4$ and $(G,H,t,\alpha)$ a crossed module of Lie groups.
Suppose that we are given a non-degenerate pairing
\begin{subequations}\label{eqn: pairing}
\begin{flalign}
\ip{\,\cdot\,}{\,\cdot\,} \,:\, \g\otimes \h \,\longrightarrow\,\bbC
\end{flalign} 
on the underlying crossed module of Lie algebras $(\g,\h,t_\ast,\alpha_\ast)$
which is $G$-invariant, i.e.\
\begin{flalign}\label{eqn: pairing G invariant}
\ip{g\,x\,g^{-1}}{\alpha_\ast(g,y)}\,=\,\ip{x}{y}\quad,
\end{flalign}
for all $g\in G$, $x\in\g$ and $y\in \h$, and satisfies
the symmetry property
\begin{flalign}\label{eqn: pairing symmetry}
\ip{t_\ast(y)}{y^\prime} \,= \, \ip{t_\ast(y^\prime)}{y}\quad,
\end{flalign}
\end{subequations}
for all $y,y^\prime\in\h$.
Note that the $G$-invariance property
\eqref{eqn: pairing G invariant} differentiates to 
the $\g$-invariance property
\begin{flalign}\label{eqn: pairing g invariant}
\ip{[x,x^\prime]}{y}+ \ip{x^\prime}{\alpha_\ast(x,y)} \,=\,0\quad,
\end{flalign}
for all $x,x^\prime\in\g$ and $y\in \h$.

We will now construct from these data a $4$-form on $X$ which is a 
higher-dimensional generalization of the usual Chern-Simons $3$-form in ordinary 
gauge theory. As a first step, let us observe that there exists a double complex
\begin{flalign}
\begin{gathered}
\xymatrix{
\stackrel{(0,0)}{\Omega^0(X,\g)} \ar[r]^-{\dd}& \stackrel{(0,1)}{\Omega^1(X,\g)} \ar[r]^-{\dd}&\cdots \ar[r]^-{\dd}&\stackrel{(0,n)}{\Omega^n(X,\g)}\\
\ar[u]^-{t_\ast}\stackrel{(-1,0)}{\Omega^0(X,\h)} \ar[r]^-{\dd}& \ar[u]^-{t_\ast}\stackrel{(-1,1)}{\Omega^1(X,\h)} \ar[r]^-{\dd}&\cdots \ar[r]^-{\dd}& \ar[u]^-{t_\ast}\stackrel{(-1,n)}{\Omega^n(X,\h)}
}
\end{gathered}
\end{flalign}
whose horizontal differential is the de Rham differential $\dd$ and vertical differential is the linear map $t_\ast$
from the crossed module of Lie algebras $(\g,\h,t_\ast,\alpha_\ast)$. The bi-degrees are as indicated in the parentheses.
Totalizing this double complex leads to the cochain complex
\begin{flalign}
L\,\coloneqq\,\left(
{\footnotesize\xymatrix@C=2em{
{\begin{array}{c}
0\\
\oplus\\
\Omega^0(X,\h)
\end{array}}
\ar[r]^-{\dd_L}
&
{\begin{array}{c}
\Omega^0(X,\g)\\
\oplus\\
\Omega^1(X,\h)
\end{array}}
\ar[r]^-{\dd_L}
&
~\cdots~
\ar[r]^-{\dd_L}
&
{\begin{array}{c}
\Omega^{n-1}(X,\g)\\
\oplus\\
\Omega^n(X,\h)
\end{array}}
\ar[r]^-{\dd_L}
&
{\begin{array}{c}
\Omega^n(X,\g)\\
\oplus\\
0
\end{array}}
}}
\right)
\end{flalign}
concentrated in degrees $\{-1,0,\dots,n\}$.
More explicitly, a cochain of degree $p$ in $L$
is a pair $\ell = \omega\oplus \eta\in L^p = \Omega^p(X,\g)\oplus\Omega^{p+1}(X,\h)$
of Lie algebra-valued differential forms and the differential $\dd_L$ reads explicitly as
\begin{flalign}
\dd_L\ell \,\coloneqq\,\dd_L\big(\omega\oplus \eta\big)\,\coloneqq\,
\big(\dd \omega + (-1)^p \,t_\ast(\eta)\big)\oplus \dd\eta\quad.
\end{flalign}
Note that the degree $1$ cochains $\A = A \oplus B\in L^1 =\Omega^1(X,\g)\oplus\Omega^2(X,\h)$
in this complex are precisely the connections on the trivial principal $2$-bundle associated with $(G,H,t,\alpha)$
from Definition \ref{def:connections}.

Using the additional structures from the crossed module of Lie algebras 
$(\g,\h,t_\ast,\alpha_\ast)$, one can endow the cochain complex $L$
with the structure of a dg-Lie algebra. The Lie bracket reads explicitly
as
\begin{flalign}
\big[\ell,\ell^\prime\big]_L\, \coloneqq\, [\omega,\omega^\prime] \oplus\Big(\alpha_\ast(\omega,\eta^\prime) -
(-1)^{p p^\prime} \alpha_\ast(\omega^\prime,\eta)\Big)\quad,
\end{flalign}
for all $\ell = \omega\oplus \eta\in L^p$ and $\ell^\prime = \omega^\prime\oplus \eta^\prime\in L^{p^\prime}$.
Finally, using also the non-degenerate pairing \eqref{eqn: pairing}, one obtains
a differential form-valued cyclic structure 
\begin{flalign}
\ip{\,\cdot\,}{\,\cdot\,}_L\,:\,L\otimes L \,\longrightarrow\,\Omega^{\bullet+1}(X)
\end{flalign}
of degree $1$ on the dg-Lie algebra $L$. This reads explicitly as
\begin{flalign}
\ip{\ell}{\ell^\prime}_L\, \coloneqq\,\ip{\omega\oplus\eta}{\omega^\prime\oplus\eta^\prime}_L\, \coloneqq\,
\ip{\omega}{\eta^\prime}+ (-1)^{pp^\prime}\,\ip{\omega^\prime}{\eta}\quad,
\end{flalign}
for all $\ell = \omega\oplus \eta\in L^p$ and $\ell^\prime = \omega^\prime\oplus \eta^\prime\in L^{p^\prime}$.

The following definition is motivated by the construction of higher Chern-Simons actions
in terms of Maurer-Cartan theory, see e.g.\ \cite[Section 5.2]{Jurco}.
\begin{definition}\label{def:2CS 4 form}
The \textit{$2$-Chern-Simons $4$-form} associated to a connection 
$\A=A\oplus B \in L^1 = \Omega^1(X,\g)\oplus \Omega^2(X,\h)$ is defined as
\begin{flalign}\label{eqn:2CS 4 form def}
\CS(\A)\,\coloneqq\, \CS(A,B) \,\coloneqq\, \ip{\A}{\tfrac{1}{2}\dd_L \A + \tfrac{1}{3!}[\A,\A]_L}_L\,\in\,\Omega^4(X)\quad.
\end{flalign}
\end{definition}

\begin{lemma}
The $2$-Chern-Simons $4$-form \eqref{eqn:2CS 4 form def} 
can be expanded in components as
\begin{flalign}\label{eqn:2CS 4 form}
\CS(A, B) \,=\, \ip{F(A) - \tfrac{1}{2} t_\ast (B)}{B}- \tfrac{1}{2} \dd \ip{A}{B}\quad,
\end{flalign}
where we recall that $F(A) = \dd A + \tfrac{1}{2} [A,A] \in \Omega^2(X, \g)$.
\end{lemma}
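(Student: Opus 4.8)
The plan is to unwind Definition \ref{def:2CS 4 form} by substituting the explicit formulas for $\dd_L$, $[\,\cdot\,,\,\cdot\,]_L$ and $\ip{\,\cdot\,}{\,\cdot\,}_L$ specialised to the connection $\A = A\oplus B\in L^1$, and then to simplify the resulting $4$-form using the $\g$-invariance property \eqref{eqn: pairing g invariant} and the symmetry property \eqref{eqn: pairing symmetry} of the pairing.

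First I would compute the degree-$2$ cochain $\tfrac12\dd_L\A + \tfrac{1}{3!}[\A,\A]_L\in L^2 = \Omega^2(X,\g)\oplus\Omega^3(X,\h)$. From $\dd_L(A\oplus B) = \big(\dd A - t_\ast(B)\big)\oplus \dd B$ (using $p=1$, so $(-1)^p=-1$) and from $[\A,\A]_L = [A,A]\oplus\big(\alpha_\ast(A,B) - (-1)^{1\cdot 1}\alpha_\ast(A,B)\big) = [A,A]\oplus 2\alpha_\ast(A,B)$, one gets
\begin{flalign*}
\tfrac12\dd_L\A + \tfrac{1}{3!}[\A,\A]_L \,=\, \Big(\tfrac12\dd A - \tfrac12 t_\ast(B) + \tfrac16[A,A]\Big)\oplus\Big(\tfrac12\dd B + \tfrac13\alpha_\ast(A,B)\Big)\quad.
\end{flalign*}
Next I would pair this with $\A = A\oplus B$ using $\ip{\omega\oplus\eta}{\omega'\oplus\eta'}_L = \ip{\omega}{\eta'} + (-1)^{pp'}\ip{\omega'}{\eta}$, here with $p=1$ and $p'=2$, so $(-1)^{pp'}=-1$. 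This produces
\begin{flalign*}
\CS(\A) \,=\, \Big\langle A\,,\,\tfrac12\dd B + \tfrac13\alpha_\ast(A,B)\Big\rangle - \Big\langle \tfrac12\dd A - \tfrac12 t_\ast(B) + \tfrac16[A,A]\,,\,B\Big\rangle\quad.
\end{flalign*}

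The remaining work is to massage the two "cross terms" into the claimed form. The term $-\tfrac12\ip{\dd A}{B} + \tfrac12\ip{A}{\dd B}$ should be recombined via the Leibniz rule for $\dd$ on the form-valued pairing: $\dd\ip{A}{B} = \ip{\dd A}{B} - \ip{A}{\dd B}$ (the sign coming from $A$ being a $1$-form), so this combination equals $-\dd\ip{A}{B} + \tfrac12\ip{\dd A}{B} - \tfrac12\ip{A}{\dd B}$, wait — more carefully, $-\tfrac12\ip{\dd A}{B} + \tfrac12\ip{A}{\dd B} = -\tfrac12\dd\ip{A}{B}$. Then I must show the cubic term $\tfrac13\ip{A}{\alpha_\ast(A,B)} - \tfrac16\ip{[A,A]}{B}$ equals $\tfrac12\ip{[A,A]}{B}$: indeed $\g$-invariance \eqref{eqn: pairing g invariant} gives $\ip{[A,A]}{B} = -\ip{A}{\alpha_\ast(A,B)}$ up to the sign bookkeeping for the interchange of the two $1$-form slots of $A$, so $\tfrac13\ip{A}{\alpha_\ast(A,B)} = -\tfrac13\ip{[A,A]}{B}$, and combined with $-\tfrac16\ip{[A,A]}{B}$ one gets $-\tfrac12\ip{[A,A]}{B}$, matching the $\tfrac12[A,A]$ inside $F(A)$ against the overall minus sign. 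Finally the $t_\ast$ term $+\tfrac12\ip{t_\ast(B)}{B}$ should be rewritten using the symmetry property \eqref{eqn: pairing symmetry} to identify it cleanly as $-\tfrac12\ip{\tfrac12 t_\ast(B)}{B}$ assembled into the first summand $\ip{F(A) - \tfrac12 t_\ast(B)}{B}$; collecting all pieces yields \eqref{eqn:2CS 4 form}.

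The main obstacle is sign and degree bookkeeping: every application of $\g$-invariance or of the Leibniz rule introduces a Koszul sign depending on the form-degrees, and the cubic term in particular requires care because $\alpha_\ast(A,B)$ has $A$ appearing with a $1$-form's odd parity, so the identity $\ip{[A,A]}{B} = -\ip{A}{\alpha_\ast(A,B)}$ must be checked against the graded-antisymmetry conventions for differential-form-valued brackets rather than the plain Lie-algebra version \eqref{eqn: pairing g invariant}. I expect this to be routine once the conventions are pinned down, and the symmetry property \eqref{eqn: pairing symmetry} is exactly what guarantees that the $t_\ast(B)$-term does not vanish or double-count but contributes with the stated coefficient $\tfrac12$.
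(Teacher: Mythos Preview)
Your approach is the same as the paper's, but you have a concrete sign error that cascades through the rest of the argument. In the pairing
\[
\ip{\omega\oplus\eta}{\omega'\oplus\eta'}_L \,=\, \ip{\omega}{\eta'} + (-1)^{pp'}\ip{\omega'}{\eta}
\]
with $p=1$ and $p'=2$, one has $(-1)^{pp'} = (-1)^{2} = +1$, not $-1$. With the correct sign the expansion reads
\[
\CS(\A) \,=\, \tfrac12\ip{A}{\dd B} + \tfrac13\ip{A}{\alpha_\ast(A,B)} + \tfrac12\ip{\dd A - t_\ast(B)}{B} + \tfrac16\ip{[A,A]}{B}\quad,
\]
and then the $\dd A$ and $\dd B$ terms combine via $\ip{A}{\dd B} = \ip{\dd A}{B} - \dd\ip{A}{B}$ to give $\ip{\dd A}{B} - \tfrac12\dd\ip{A}{B}$ rather than just $-\tfrac12\dd\ip{A}{B}$. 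This is why in your version the $\ip{\dd A}{B}$ contribution to $\ip{F(A)}{B}$ went missing.

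There is a second sign slip in your treatment of the cubic term. From $\g$-invariance one has $\ip{x'}{\alpha_\ast(x,y)} = -\ip{[x,x']}{y}$, but when both slots are occupied by the same $1$-form $A$, swapping the two wedge factors of $A$ absorbs that minus sign, so the graded identity is $\ip{A}{\alpha_\ast(A,B)} = +\ip{[A,A]}{B}$. Combined with the corrected $+\tfrac16\ip{[A,A]}{B}$, the cubic piece becomes $\tfrac13\ip{[A,A]}{B} + \tfrac16\ip{[A,A]}{B} = \tfrac12\ip{[A,A]}{B}$, matching $\ip{\tfrac12[A,A]}{B}$ directly, with no ``overall minus sign'' to invoke. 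Finally, the symmetry property \eqref{eqn: pairing symmetry} is not needed here: the $t_\ast(B)$ term already appears with the correct coefficient $-\tfrac12$ once the sign of $(-1)^{pp'}$ is fixed.
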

\begin{proof}
We have
\begin{flalign}
\nonumber \CS(A,B) \,&=\, \ip{ A \oplus B }{ \tfrac{1}{2} \big((\dd A - t_\ast (B)) \oplus \dd B\big) 
+ \tfrac{1}{3!} \big( [A, A] \oplus 2 \,\alpha_\ast(A, B) \big)}_L\\
\nonumber \,&=\, \tfrac{1}{2} \ip{ A }{ \dd B } + \tfrac{1}{3} \ip{ A }{ \alpha_\ast(A,B) }
+ \tfrac{1}{2} \ip{ \dd A - t_\ast (B) }{ B } + \tfrac{1}{3!} \ip{ [A,A] }{ B }\\
\nonumber\, &=\, \tfrac{1}{2} \ip{ A }{ \dd B } + \tfrac{1}{2} \ip{ \dd A  + [A, A] - t_\ast (B) }{ B }\\
\, &=\,  \ip{ \dd A + \tfrac{1}{2} [A,A] - \tfrac{1}{2} t_\ast(B) }{ B } - \tfrac{1}{2} \dd \ip{ A }{ B }\quad,
\end{flalign}
where in the third step we used the $\g$-invariance property \eqref{eqn: pairing g invariant}
and the fact that $A$ is a $1$-form to write $\ip{A}{\alpha_\ast(A,B)}=\ip{[A,A]}{B}$, 
and in last step we used $\ip{A}{\dd B} = \ip{\dd A}{B} - \dd \ip{A}{B}$.
\end{proof}

We will require later the following result about the transformation behavior 
of the $2$-Chern-Simons $4$-form under the gauge transformations
from Definition \ref{def:connections}.
\begin{proposition}\label{propo:2CS 4form gauge}
The $2$-Chern-Simons $4$-form \eqref{eqn:2CS 4 form} transforms 
under gauge transformations $(g,\gamma) : (A,B)\to {}^{(g,\gamma)}(A,B)$,
for $g\in C^\infty(X,G)$ and $\gamma\in\Omega^1(X,\h)$, as
\begin{flalign}
\CS\big({}^{(g,\gamma)}(A,B)\big)\,=\, \CS(A,B)  -\tfrac{1}{2}\dd\Big(&\ip{g A g^{-1}}{F(\gamma)} 
+ \ip{t_\ast(\gamma)}{\dd \gamma+\tfrac{1}{3}[\gamma,\gamma]} \\
\nonumber &- \ip{\dd g \,g^{-1}+t_\ast(\gamma)}{\alpha_\ast(g,B) + F(\gamma)}\Big)\quad,
\end{flalign}
where we recall that $F(\gamma) =\dd\gamma +\tfrac{1}{2}[\gamma,\gamma] \in \Omega^2(X, \h)$.
\end{proposition}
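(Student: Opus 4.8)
The plan is to split an arbitrary gauge transformation $(g,\gamma) : (A,B)\to{}^{(g,\gamma)}(A,B)$ into the composite $(g,\gamma)=(1_G,\gamma)\,(g,0)$ --- which is immediate from the composition law in Definition \ref{def:connections} --- and to handle the two elementary gauge transformations $(g,0)$ and $(1_G,\gamma)$ separately. Since $\CS$ will transform under each of them by the addition of a total derivative, the two correction terms simply add up when the transformations are composed, so all that remains afterwards is to evaluate the $(1_G,\gamma)$-correction on the $(g,0)$-transformed connection and reorganize. The \emph{pure $1$-form} piece $(g,0)$, for which ${}^{(g,0)}A=g\,A\,g^{-1}-\dd g\,g^{-1}$ and ${}^{(g,0)}B=\alpha_\ast(g,B)$, is short: the standard gauge covariance of the curvature gives $F({}^{(g,0)}A)=g\,F(A)\,g^{-1}$, and the differential of the crossed-module axiom \eqref{eqn:CM axiom Lie a} gives $t_\ast(\alpha_\ast(g,B))=g\,t_\ast(B)\,g^{-1}$, whence $F({}^{(g,0)}A)-\tfrac12 t_\ast({}^{(g,0)}B)=g\big(F(A)-\tfrac12 t_\ast(B)\big)g^{-1}$. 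Substituting into the component formula \eqref{eqn:2CS 4 form} and using the $G$-invariance \eqref{eqn: pairing G invariant} of the pairing, the term $\ip{F(A)-\tfrac12 t_\ast(B)}{B}$ is left unchanged while the term $-\tfrac12\dd\ip{A}{B}$ only picks up the contribution of the inhomogeneous piece $-\dd g\,g^{-1}$, yielding
\[
\CS\big({}^{(g,0)}(A,B)\big)\;=\;\CS(A,B)+\tfrac12\,\dd\ip{\dd g\,g^{-1}}{\alpha_\ast(g,B)}\quad.
\]

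The substance of the proof is the \emph{pure $2$-form} piece $(1_G,\gamma)$, for which ${}^{(1_G,\gamma)}A=A-t_\ast(\gamma)$ and ${}^{(1_G,\gamma)}B=B-F(\gamma)-\alpha_\ast\big(A-t_\ast(\gamma),\gamma\big)$. Here I would substitute directly into \eqref{eqn:2CS 4 form} and expand, the essential tools being: (i) the differentiated crossed-module identities \eqref{eqn:CM axioms differential}, notably $t_\ast(\alpha_\ast(x,y))=[x,t_\ast(y)]$ and $\alpha_\ast(t_\ast(y),y^\prime)=[y,y^\prime]$, which let one rewrite $F(A-t_\ast\gamma)$ and $t_\ast\big({}^{(1_G,\gamma)}B\big)$ in terms of $F(A)$, $t_\ast(B)$ and $\h$-valued forms only; (ii) the $\g$-invariance \eqref{eqn: pairing g invariant} of the pairing together with its symmetry property \eqref{eqn: pairing symmetry}; (iii) the Bianchi identity for $F(A)$; and (iv) repeated integration by parts $\dd\ip{\,\cdot\,}{\,\cdot\,}=\ip{\dd\,\cdot\,}{\,\cdot\,}\pm\ip{\,\cdot\,}{\dd\,\cdot\,}$ to peel off total derivatives. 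After these manipulations all genuine (non-exact) $4$-form contributions should cancel, leaving
\[
\CS\big({}^{(1_G,\gamma)}(A,B)\big)\;=\;\CS(A,B)-\tfrac12\,\dd\Big(\ip{A}{F(\gamma)}-\ip{t_\ast(\gamma)}{B}-\tfrac16\ip{t_\ast(\gamma)}{[\gamma,\gamma]}\Big)\quad.
\]
(One can note that $(1_G,\gamma)$ is precisely the action of the degree-$0$ element $0\oplus\gamma\in L^0$ on $\A=A\oplus B$ in the dg-Lie algebra $L$, so that this identity is an instance of a transgression formula for $2$-Chern-Simons forms; but formulating such a formula in the required generality is no easier than the direct calculation.) I expect this step to be the main obstacle: it is a finite but delicate computation, with the delicacy lying almost entirely in the graded-sign bookkeeping for brackets and pairings of Lie-algebra-valued forms and in applying the crossed-module identities in just the right order so that the would-be bulk terms vanish --- as a useful consistency check one can verify the $A=B=0$ case, where the identity reduces to $\ip{t_\ast(\gamma)}{[\dd\gamma,\gamma]}=-\ip{t_\ast([\gamma,\gamma])}{\dd\gamma}$, which follows from \eqref{eqn:CM axiom differential b}, \eqref{eqn: pairing g invariant} and \eqref{eqn: pairing symmetry}.

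To conclude, I would combine the two steps via ${}^{(g,\gamma)}(A,B)={}^{(1_G,\gamma)}\big({}^{(g,0)}(A,B)\big)$: apply the \emph{pure $2$-form} identity with $(A,B)$ replaced by $\big({}^{(g,0)}A,{}^{(g,0)}B\big)=\big(g\,A\,g^{-1}-\dd g\,g^{-1},\,\alpha_\ast(g,B)\big)$, and add the \emph{pure $1$-form} correction. Gathering the total derivatives and regrouping the pairings --- using $F(\gamma)=\dd\gamma+\tfrac12[\gamma,\gamma]$ to absorb $-\tfrac16\ip{t_\ast(\gamma)}{[\gamma,\gamma]}$ together with the $-\ip{t_\ast(\gamma)}{F(\gamma)}$ contribution into $\ip{t_\ast(\gamma)}{\dd\gamma+\tfrac13[\gamma,\gamma]}$, and merging the $F(\gamma)$- and $\alpha_\ast(g,B)$-pairings with $g\,A\,g^{-1}$ and with $\dd g\,g^{-1}+t_\ast(\gamma)$ respectively --- one recovers exactly the asserted expression for $\CS\big({}^{(g,\gamma)}(A,B)\big)$.
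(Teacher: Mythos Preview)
Your proposal is correct and takes a genuinely different route from the paper's proof. The paper computes the transformation of $\CS(A,B)$ under the \emph{full} gauge transformation $(g,\gamma)$ in one pass, but splits the calculation along the two summands of \eqref{eqn:2CS 4 form}: it first shows that $F\big({}^{(g,\gamma)}A\big)-\tfrac12 t_\ast\big({}^{(g,\gamma)}B\big)$ equals $g\big(F(A)-\tfrac12 t_\ast(B)\big)g^{-1}$ plus a $t_\ast$-exact correction, then uses the symmetry property \eqref{eqn: pairing symmetry} to show that pairing this with ${}^{(g,\gamma)}B$ gives the original first term plus an exact form, and separately computes how $-\tfrac12\dd\ip{A}{B}$ transforms. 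Your approach instead factorizes the gauge transformation as $(1_G,\gamma)(g,0)$ and treats the two factors sequentially. This buys you a genuine simplification: the $(g,0)$ step is essentially one line from $G$-invariance, and the hard $(1_G,\gamma)$ step is free of all the $g$-dependent clutter, so the delicate cancellations happen in a smaller expression. The paper's splitting, by contrast, exploits the nice structure of the combination $F(A)-\tfrac12 t_\ast(B)$ under the full transformation, which is a useful observation in its own right. Both decompositions lead to comparable amounts of algebra in total; I checked that your stated intermediate identities for the $(g,0)$ and $(1_G,\gamma)$ pieces are correct and that your final recombination reproduces the claimed formula exactly.
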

\begin{proof}
This is a lengthy but straightforward computation, 
so we just highlight the main steps. For the first term of
the $2$-Chern-Simons $4$-form \eqref{eqn:2CS 4 form},
one uses the explicit form of the gauge transformations 
\eqref{eqn:gauge transformation b} and \eqref{eqn:gauge transformation c}
to show that $F\big({}^{(g,\gamma)}A\big) - \tfrac{1}{2} t_\ast \big({}^{(g,\gamma)}B\big) 
= g \,\big(F(A) - \tfrac{1}{2} t_\ast (B)\big)\, g^{-1} - 
\tfrac{1}{2} t_\ast \big( F(\gamma) + \alpha_\ast\big({}^{(g,\gamma)}A, \gamma\big) \big)$. 
From this it then follows that
\begin{flalign}
 \big\langle F\big({}^{(g,\gamma)}A\big)& - 
\tfrac{1}{2} t_\ast \big({}^{(g,\gamma)}B\big) , {}^{(g,\gamma)}B \big\rangle
= \big\langle F(A) - \tfrac{1}{2} t_\ast(B), B \big\rangle \\
\nonumber &- \big\langle F\big({}^{(g,\gamma)}A\big) + 
\tfrac{1}{2} t_\ast \big( F(\gamma) + \alpha_\ast\big({}^{(g,\gamma)}A, \gamma\big) \big), 
F(\gamma) + \alpha_\ast\big({}^{(g,\gamma)}A, \gamma\big) \big\rangle\quad.
\end{flalign}
The second term on the right-hand side is found to be exact by making 
repeated use of the properties \eqref{eqn:CM axioms differential} and 
\eqref{eqn: pairing g invariant}, and the Jacobi identities for $\g$ and $\h$.
Explicitly, one finds
\begin{flalign}
\big\langle F\big({}^{(g,\gamma)}A\big) 
&- \tfrac{1}{2} t_\ast \big({}^{(g,\gamma)}B\big), {}^{(g,\gamma)}B \big\rangle
= \big\langle F(A) - \tfrac{1}{2} t_\ast (B), B \big\rangle \\
\nonumber &- \tfrac{1}{2} 
\dd \Big( \big\langle {}^{(g,\gamma)}A, 2 F(\gamma) + \alpha_\ast\big({}^{(g,\gamma)}A, \gamma\big) \big\rangle 
+ \big\langle t_\ast(\gamma), \dd \gamma + \tfrac{1}{3} [\gamma, \gamma] \big\rangle \Big)\quad.
\end{flalign}
For the second term of the $2$-Chern-Simons $4$-form \eqref{eqn:2CS 4 form}, one finds
\begin{flalign}
- \tfrac{1}{2} \dd &\big\langle {}^{(g,\gamma)}A, {}^{(g,\gamma)}B \big\rangle = 
- \tfrac{1}{2} \dd \langle A, B \rangle \\
\nonumber &+ \tfrac{1}{2} \dd \Big( \big\langle \dd g \,g^{-1} + t_\ast (\gamma),
\alpha_\ast(g, B) \big\rangle + \big\langle {}^{(g,\gamma)}A, F(\gamma) + \alpha_\ast\big({}^{(g,\gamma)}A, \gamma\big) \big\rangle \Big)\quad.
\end{flalign}
The result now follows by combining the above transformation properties of 
the first and second term in the $2$-Chern-Simons $4$-form \eqref{eqn:2CS 4 form}.
\end{proof}


\section{\label{sec:simplepoles}$5d$ $2$-Chern-Simons theory with simple poles}
In this section we define and analyze a $5$-dimensional generalization
of $4$-dimensional semi-holomorphic Chern-Simons theory 
\cite{Costello:2013sla,Costello:2017dso,Costello:2018gyb,Costello:2019tri}.
Our notations and conventions follow \cite{Benini:2020skc}.
In order to simplify our presentation, we consider in this section
first the special case where $\omega$ is a meromorphic $1$-form on the Riemann sphere $\CP$
which has only simple poles and postpone the more involved 
case of higher-order poles to Section \ref{sec:higherpoles}.
We denote by $\bm{z}\subset \CP$ the set of poles and by $\bm{\zeta}\subset \CP$
the set of zeros of $\omega$. We assume that $\omega$ has at least one zero,
i.e.\ $\vert \bm{\zeta}\vert \geq 1$.

The field theory we consider is defined 
on the $5$-dimensional manifold
\begin{flalign}
X\,\coloneqq\, M\times C\quad,
\end{flalign}
where $M= \bbR^3$ is the $3$-dimensional Cartesian space, 
which we interpret as spacetime, and $C \coloneqq \CP\setminus\bm{\zeta}$ 
is the Riemann sphere with all zeros of $\omega$ removed. 
We choose a global coordinate $z : C\to \bbC$ on the factor $C$,
which exists because it is assumed that $\vert \bm{\zeta}\vert \geq 1$.
 
\subsection{\label{subsec:action simple pole}Action} 
Let $(G,H,t,\alpha)$ be a crossed module of Lie groups 
endowed with a non-degenerate invariant pairing $\ip{\,\cdot\,}{\,\cdot\,}:\g\otimes\h\to \bbC$ 
as in \eqref{eqn: pairing}. We assume throughout the whole paper
that both $G$ and $H$ are connected and simply-connected.
Using the $2$-Chern-Simons $4$-form from 
Definition \ref{def:2CS 4 form}, we define the action 
\begin{flalign}\label{eqn:5d CS action}
S_\omega(A,B)\,\coloneqq\, \frac{\ii}{2 \pi} \int_X \omega\wedge\CS(A,B)
\end{flalign}
on the set of connections $(A,B) \in\Omega^1(X,\g)\times \Omega^2(X,\h)$,
where with a slight abuse of notation we denote the pullback of the meromorphic
$1$-form $\omega$ along the projection $X = M\times C\to C$ by the same symbol.

Note that the components of $A$ and $B$ with a leg along $\dd z$ do not 
contribute to the action \eqref{eqn:5d CS action} because $\omega = \varphi\,\dd z$ 
is a meromorphic $1$-form. To remove these non-dynamical fields 
from the theory, we consider in what follows connections which are modeled on the quotient
\begin{subequations}\label{eqn:dR quotient}
\begin{flalign}
\overline{\Omega}^\bullet(X) \,\coloneqq\, \Omega^\bullet(X)\big/(\dd z)
\end{flalign}
of the de Rham calculus on $X$, endowed with the differential
\begin{flalign}
\overline{\dd}\,\coloneqq\,\dd_M + \overline{\partial}
\end{flalign}
\end{subequations}
given by the sum of the differential $\dd_M$ along the factor $M$ and 
the Dolbeault differential $\overline{\partial}$ along the factor $C$
of the product manifold $X = M\times C$.
We denote by
\begin{flalign}\label{eqn:overlineCon}
\overline{\mathrm{Con}}_{(G,H,t,\alpha)}(X)
\end{flalign}
the analogue of the groupoid from Definition \ref{def:connections}
where connections and gauge transformations are modeled on the quotient 
de Rham calculus $(\overline{\Omega}^\bullet(X),\overline{\dd})$.

We observe that the integrand of the action \eqref{eqn:5d CS action} is singular at the $3$-dimensional defect
\begin{flalign}\label{eqn:surface defect}
D\,\coloneqq\,M\times \bm{z} \,=\,\bigsqcup_{x\in \bm{z}} \big(M\times \{x\}\big)\,\subset \,X\
\end{flalign}
which is localized at the poles $x\in \bm{z}$ of $\omega$.
In the present case of simple poles, it follows by the same argument as
in \cite[Lemma 2.1]{Benini:2020skc} that these singularities
are locally integrable near each component $M\times\{x\}\subset X$ of the defect,
hence the action \eqref{eqn:5d CS action} is well-defined. 
(In the case of higher-order poles, the integral requires a regularization, 
see Section \ref{sec:higherpoles} for the details.)

\subsection{\label{subsec:gauge simple}Gauge transformations}
In analogy to the $4$-dimensional case, the action \eqref{eqn:5d CS action} 
of $5$-dimensional semi-holomorphic $2$-Chern-Simons theory is not invariant 
under arbitrary gauge transformations $(g,\gamma) : (A,B) \to {}^{(g,\gamma)}(A,B)$
as in Definition \ref{def:connections} (with the differential $\dd$ replaced
by $\overline{\dd}$ from the quotient de Rham calculus 
\eqref{eqn:dR quotient}), where $g\in C^\infty(X,G)$ and $\gamma\in \overline{\Omega}^1(X,\h)$.
Indeed, using Proposition \ref{propo:2CS 4form gauge}, we compute
\begin{flalign}\label{eqn:2CS transformation}
S_\omega\big({}^{(g,\gamma)}(A,B)\big)\,=\, S_\omega(A,B) -\frac{\ii}{4 \pi}\int_X \omega\wedge \overline{\dd}\Big(&\ip{g A g^{-1}}{\overline{F}(\gamma)} 
+ \ip{t_\ast(\gamma)}{\overline{\dd} \gamma+\tfrac{1}{3}[\gamma,\gamma]} \\
\nonumber &- \ip{\overline{\dd} g \,g^{-1}+t_\ast(\gamma)}{\alpha_\ast(g,B)+\overline{F}(\gamma)}\Big)\quad,
\end{flalign}
where $\overline{F}(\gamma) :=\overline{\dd}\gamma + \tfrac{1}{2}[\gamma,\gamma]\in \overline{\Omega}^2(X,\h)$
denotes the curvature with respect to the quotient de Rham calculus \eqref{eqn:dR quotient},
and note that the second term does in general not vanish. 
The violation of gauge invariance is due to the poles of $\omega$ and hence
it is localized at the defect \eqref{eqn:surface defect}. 
To state this observation precisely, we have to introduce some more notations
and terminology. Let us denote the inclusions of the defect into $X$ by
\begin{flalign}
\iota_x \,:\, M\times\{x\} \,\longhookrightarrow \, X\quad,\qquad \bm{\iota}\,\coloneqq\,\bigsqcup_{x\in\bm{z}}\iota_x\,:\, D\,\longhookrightarrow\,X\quad.
\end{flalign}
Pulling back functions or differential forms
along these inclusions defines maps
\begin{subequations}\label{eqn:iota pullback maps}
\begin{flalign}
\bm{\iota}^\ast\,:\,C^\infty(X,N)\,&\longrightarrow\, C^\infty(D,N)\,\cong\, C^\infty(M,N^{\bm{z}})\quad,\\
\bm{\iota}^\ast\,:\, \overline{\Omega}^q(X,V)\,&\longrightarrow\,\Omega^q(D,V) \,\cong\, \Omega^q(M,V^{\bm{z}})\quad,
\end{flalign}
\end{subequations}
where $N$ is any smooth manifold (e.g.\ one of the Lie groups $G$ or $H$)
and $V$ is any vector space (e.g.\ one of the Lie algebras $\g$ or $\h$).
Here $(\,\cdot\,)^{\bm{z}} := \prod_{x\in \bm{z}}(\,\cdot\,)$ 
denotes the product over all poles. We extend the pairing 
$\ip{\,\cdot\,}{\,\cdot\,} : \g\otimes\h\to\bbC$ to these products by setting
\begin{subequations}\label{eqn:defectpairing simple}
\begin{flalign}
\ipp{\,\cdot\,}{\,\cdot\,}_{\omega}\,:\, \g^{\bm{z}}\otimes\h^{\bm{z}}\,\longrightarrow\bbC~~,\quad
\mathcal{X}\otimes \mathcal{Y}\,\longmapsto\,\ipp{\mathcal{X}}{\mathcal{Y}}_\omega\,\coloneqq\,
\sum_{x\in\bm{z}} k^x\ip{\mathcal{X}^x}{\mathcal{Y}^x}\quad,
\end{flalign}
where $\mathcal{X} = (\mathcal{X}^x)_{x\in{\bm{z}}}\in\g^{\bm{z}}$,
$\mathcal{Y} = (\mathcal{Y}^x)_{x\in{\bm{z}}}\in\h^{\bm{z}}$ and 
the coefficients 
\begin{flalign}
k^x\,:=\,\mathrm{Res}_{x}(\omega)\, \in\, \bbC
\end{flalign} 
\end{subequations} 
are the residues of $\omega$ at its poles $x\in \bm{z}$.
\begin{proposition}\label{propo:bdy gauge violation}
Under a gauge transformation $(g,\gamma) : (A,B) \to {}^{(g,\gamma)}(A,B)$, 
with $g\in C^\infty(X, G)$ and $\gamma\in\overline{\Omega}^1(X,\h)$, the 
action \eqref{eqn:5d CS action} transforms as
\begin{flalign}
\nonumber S_\omega\big({}^{(g,\gamma)}(A,B)\big) &\,=\,  S_{\omega}(A,B) +\frac{1}{2}\int_M
\Big(\ipp{\bm{\iota}^\ast(g)\, \bm{\iota}^\ast(A)\, \bm{\iota}^\ast(g)^{-1}}{F_M\big(\bm{\iota}^\ast (\gamma)\big)}_{\omega}\\ 
\nonumber &\quad + \ipp{\bm{\iota}^\ast (t_\ast(\gamma))}{\dd_M \bm{\iota}^\ast (\gamma) +\tfrac{1}{3}\big[\bm{\iota}^\ast (\gamma),\bm{\iota}^\ast (\gamma)\big]}_\omega\\
&\quad - \ipp{\dd_M \bm{\iota}^\ast (g) \,\bm{\iota}^\ast (g)^{-1}+\bm{\iota}^\ast (t_\ast(\gamma))}{ \bm{\iota}^\ast\big(\alpha_\ast( g,B) \big)+F_M\big({\bm{\iota}^\ast (\gamma)}\big)}_\omega\Big)\quad,
\end{flalign}
where $F_M\big({\bm{\iota}^\ast (\gamma)}\big) := \dd_M \bm{\iota}^\ast (\gamma) + 
\tfrac{1}{2}\big[\bm{\iota}^\ast (\gamma),\bm{\iota}^\ast (\gamma)\big]\in \Omega^2(M,\h^{\bm{z}})$ 
denotes the curvature and $\dd_M$ the de Rham differential on the $3$-manifold $M$.
\end{proposition}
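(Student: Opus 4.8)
The plan is to feed the gauge-transformation law for the $2$-Chern-Simons $4$-form from Proposition~\ref{propo:2CS 4form gauge} into the action functional and then to localize the resulting total-derivative term onto the defect by a residue computation. Substituting Proposition~\ref{propo:2CS 4form gauge} (with $\dd$ replaced by $\overline{\dd}$) into the definition \eqref{eqn:5d CS action} of $S_\omega$ already produces \eqref{eqn:2CS transformation}, so the task reduces to evaluating $-\tfrac{\ii}{4\pi}\int_X\omega\wedge\overline{\dd}\eta$, where $\eta\in\overline{\Omega}^3(X)$ denotes the $3$-form appearing under the differential $\overline{\dd}(\,\cdot\,)$ in \eqref{eqn:2CS transformation}. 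The proposition then follows from the identity
\[
\frac{\ii}{2\pi}\int_X\omega\wedge\overline{\dd}\eta\,=\,-\sum_{x\in\bm{z}}\mathrm{Res}_x(\omega)\int_M\iota_x^\ast\eta\,,
\]
together with an unravelling of the definitions on the right-hand side, which I carry out at the end.

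First I would use that $\omega$ is meromorphic and pulled back from $C$, so that $\overline{\dd}\omega=0$ on $X\setminus D$ and hence $\omega\wedge\overline{\dd}\eta=-\overline{\dd}(\omega\wedge\eta)$ there; since $\omega$ has only simple poles, both $\omega\wedge\eta$ and $\omega\wedge\overline{\dd}\eta$ are locally integrable. Excising tubular neighbourhoods $M\times D_\epsilon(x)$ of radius $\epsilon$ around each pole $x\in\bm{z}$ and applying Stokes' theorem on $X_\epsilon\coloneqq X\setminus\bigsqcup_{x\in\bm{z}}(M\times D_\epsilon(x))$, exactly as in \cite[Lemma~2.1]{Benini:2020skc}, yields
\[
\int_{X_\epsilon}\omega\wedge\overline{\dd}\eta\,=\,-\int_{X_\epsilon}\overline{\dd}(\omega\wedge\eta)\,=\,\sum_{x\in\bm{z}}\int_{M\times S^1_\epsilon(x)}\omega\wedge\eta\,,
\]
the boundary orientation accounting for the overall sign. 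Letting $\epsilon\to 0$, the left-hand side converges to $\int_X\omega\wedge\overline{\dd}\eta$ by dominated convergence, while on each circle $S^1_\epsilon(x)$ the Laurent expansion $\omega=\big(\tfrac{\mathrm{Res}_x(\omega)}{z-x}+\textup{hol.}\big)\,\dd z$ gives, upon parametrizing $z-x=\epsilon\,e^{\ii\theta}$ so that the polar part restricts to $\mathrm{Res}_x(\omega)\,\ii\,\dd\theta$ while the regular part as well as the $\dd\bar z$-components of $\eta$ are $O(\epsilon)$, the limit $\int_{M\times S^1_\epsilon(x)}\omega\wedge\eta\to 2\pi\ii\,\mathrm{Res}_x(\omega)\int_M\iota_x^\ast\eta$. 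Multiplying by $-\tfrac{\ii}{4\pi}$ produces the coefficient $\tfrac12$ and establishes the displayed identity; throughout I assume the decay of the fields in the noncompact directions of $M=\bbR^3$ and near the punctures $\bm{\zeta}$ needed to justify the exhaustion and Stokes arguments.

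It then remains to unravel the right-hand side. Since $\iota_x^\ast$ annihilates $\dd z$ and $\dd\bar z$, one has $\iota_x^\ast\circ\overline{\dd}=\dd_M\circ\iota_x^\ast$ on the $M$-legs of forms, whence $\iota_x^\ast\overline{F}(\gamma)=F_M(\iota_x^\ast\gamma)$; using also that $\iota_x^\ast$ commutes with $t_\ast$, $\alpha_\ast$ and the brackets, one checks that $\iota_x^\ast\eta$ equals the integrand obtained from the bracketed expression in the statement by extracting its $x$-component and replacing $\overline{\dd}$ by $\dd_M$. Summing over $x\in\bm{z}$ with weights $\mathrm{Res}_x(\omega)=k^x$ and comparing with the definition \eqref{eqn:defectpairing simple} of $\ipp{\,\cdot\,}{\,\cdot\,}_\omega$ on $\g^{\bm{z}}\otimes\h^{\bm{z}}$ then shows that $\tfrac12\sum_{x\in\bm{z}}\mathrm{Res}_x(\omega)\int_M\iota_x^\ast\eta$ is precisely the asserted three-term expression, which completes the argument. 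The one genuinely delicate point is the $\epsilon\to 0$ limit of the excised Stokes formula, i.e.\ making the heuristic $\overline{\partial}(1/z)\propto\delta^{(2)}$ precise and pinning down the orientation sign; since $\omega$ has only simple poles this is entirely parallel to the ordinary $4$-dimensional situation treated in \cite{Benini:2020skc}, the higher-gauge-theoretic content entering solely through the particular form of $\eta$.
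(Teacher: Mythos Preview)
Your proof is correct and follows the same approach as the paper. The paper's own proof is a one-line invocation of the Cauchy--Pompeiu integral formula, together with a reference to \cite[Lemma~2.2]{Benini:2020skc}; you have simply spelled out this residue computation in detail via the excision-and-Stokes argument, and then carried out explicitly the unravelling of the pullbacks $\bm{\iota}^\ast$ and the defect pairing $\ipp{\,\cdot\,}{\,\cdot\,}_\omega$, which the paper leaves implicit.
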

\begin{proof}
This is a direct consequence of \eqref{eqn:2CS transformation} and 
the Cauchy-Pompeiu integral formula. See \cite[Lemma 2.2]{Benini:2020skc} for more details.
\end{proof}

\subsection{\label{subsec:bdy simple}Boundary conditions at the defect}
The result in Proposition \ref{propo:bdy gauge violation} 
suggests that one has to impose suitable boundary conditions
at the defect $D\subset X$ in order to obtain a gauge-invariant action. 
Note that such boundary conditions must be imposed on both the connections $(A,B)$
and their gauge transformations $(g,\gamma): (A,B)\to {}^{(g,\gamma)}(A,B)$.
We shall focus on a simple class of boundary conditions
which are determined by the choice of a crossed submodule of $(G^{\bm{z}},H^{\bm{z}},t^{\bm{z}},\alpha^{\bm{z}})$
that is isotropic with respect to the pairing $\ipp{\,\cdot\,}{\,\cdot\,}_\omega$ in 
\eqref{eqn:defectpairing simple}. More precisely, this is given by the choice of two Lie subgroups
\begin{subequations}\label{eqn:subcrossed module}
\begin{flalign}
G^{\diamond}\,\subseteq \, G^{\bm{z}}=\prod_{x\in \bm{z}}G\quad,\qquad H^{\diamond}\,\subseteq \, H^{\bm{z}}=\prod_{x\in \bm{z}}H\quad,
\end{flalign}
such that the two structure maps $t^{\bm{z}}=\prod_{x\in\bm{z}}t$ and $\alpha^{\bm{z}}=\prod_{x\in\bm{z}}\alpha$ 
restrict to
\begin{flalign}
t^{\bm{z}}\,:\,H^{\diamond}\,\longrightarrow\,G^{\diamond}\quad,\qquad \alpha^{\bm{z}} \,: \,G^{\diamond}\times H^{\diamond}\,\longrightarrow\,H^{\diamond}\quad.
\end{flalign}
Denoting the associated crossed module of Lie algebras by 
$(\g^{\diamond},\h^{\diamond},t^{\bm{z}}_\ast,\alpha^{\bm{z}}_\ast)$,
the requirement of isotropy means that the restricted pairing vanishes, i.e.
\begin{flalign} \label{eqn:subcrossed module c}
\ipp{\,\cdot\,}{\,\cdot\,}_\omega \big\vert_{\g^{\diamond}\otimes \h^{\diamond}}\,=\,0 \quad.
\end{flalign}
\end{subequations}
We will refer to such a crossed submodule $(G^{\diamond}, H^{\diamond}, t^{\bm{z}}, \alpha^{\bm{z}})$ 
of $(G^{\bm{z}}, H^{\bm{z}}, t^{\bm{z}}, \alpha^{\bm{z}})$ as being isotropic.
\begin{definition}\label{defi:boundarycondition}
The \textit{groupoid of boundary conditioned fields} for the isotropic crossed submodule \eqref{eqn:subcrossed module}
is defined as the subgroupoid $\mathcal{F}^\diamond\subseteq \overline{\mathrm{Con}}_{(G,H,t,\alpha)}(X)$ 
of the groupoid of connections and gauge transformations from \eqref{eqn:overlineCon}
which is specified by the following data:
\begin{itemize}
\item An object in $\mathcal{F}^\diamond$ is a connection 
$(A,B) \in \overline{\Omega}^1(X,\g)\times \overline{\Omega}^2(X,\h)$
which satisfies the boundary condition $\bm{\iota}^\ast(A,B)\in \Omega^1(M,\g^\diamond)\times \Omega^2(M,\h^\diamond)$.

\item A morphism in $\mathcal{F}^\diamond$ is a gauge transformation
$(g,\gamma): (A,B)\to {}^{(g,\gamma)}(A,B)$, with $g\in C^\infty(X,G)$ and $\gamma\in\overline{\Omega}^1(X,\h)$,
which satisfies the boundary condition 
$ \bm{\iota}^\ast(g,\gamma)\in C^\infty(M,G^\diamond)\times \Omega^1(M,\h^\diamond)$.
\end{itemize}
\end{definition}

\begin{proposition}\label{propo:action strict pullback}
For any choice of isotropic crossed submodule 
$(G^{\diamond}, H^{\diamond}, t^{\bm{z}}, \alpha^{\bm{z}})$ as in \eqref{eqn:subcrossed module},
the action \eqref{eqn:5d CS action} restricts to a gauge-invariant function
$S_\omega : \mathcal{F}^\diamond\to \bbC$ on the subgroupoid 
$\mathcal{F}^\diamond\subseteq \overline{\mathrm{Con}}_{(G,H,t,\alpha)}(X)$
of boundary conditioned fields.
\end{proposition}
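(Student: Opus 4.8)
The plan is to show two things: first, that the action $S_\omega$ is well-defined on objects of $\mathcal{F}^\diamond$, which is immediate since $\mathcal{F}^\diamond$ is a full subgroupoid on objects of $\overline{\mathrm{Con}}_{(G,H,t,\alpha)}(X)$ and the action is already well-defined there (local integrability near each simple pole, as noted after \eqref{eqn:5d CS action}); and second, the nontrivial part, that for every morphism $(g,\gamma):(A,B)\to{}^{(g,\gamma)}(A,B)$ in $\mathcal{F}^\diamond$ one has $S_\omega\big({}^{(g,\gamma)}(A,B)\big)=S_\omega(A,B)$. This is precisely the statement that the three defect-localized terms appearing in Proposition \ref{propo:bdy gauge violation} all vanish once the boundary conditions $\bm{\iota}^\ast(A,B)\in\Omega^1(M,\g^\diamond)\times\Omega^2(M,\h^\diamond)$ and $\bm{\iota}^\ast(g,\gamma)\in C^\infty(M,G^\diamond)\times\Omega^1(M,\h^\diamond)$ are imposed.

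The key observation is that each of the three integrands in Proposition \ref{propo:bdy gauge violation} is a $3$-form on $M$ of the form $\ipp{\mathcal{X}}{\mathcal{Y}}_\omega$ where $\mathcal{X}$ is built from $\bm{\iota}^\ast(g)$, $\bm{\iota}^\ast(A)$, $\dd_M\bm{\iota}^\ast(g)$, $\bm{\iota}^\ast(t_\ast(\gamma))$, i.e.\ $\g^{\bm z}$-valued quantities, and $\mathcal{Y}$ is built from $F_M(\bm{\iota}^\ast(\gamma))$, $\dd_M\bm{\iota}^\ast(\gamma)+\tfrac13[\bm{\iota}^\ast(\gamma),\bm{\iota}^\ast(\gamma)]$, $\bm{\iota}^\ast(\alpha_\ast(g,B))$, i.e.\ $\h^{\bm z}$-valued quantities. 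I would go through the three terms in turn. For the $\g^{\bm z}$-component: $\bm{\iota}^\ast(g)$ takes values in $G^\diamond$ by the morphism boundary condition; since $G^\diamond$ is a Lie subgroup with Lie algebra $\g^\diamond$, also $\dd_M\bm{\iota}^\ast(g)\,\bm{\iota}^\ast(g)^{-1}\in\Omega^1(M,\g^\diamond)$; $\bm{\iota}^\ast(A)\in\Omega^1(M,\g^\diamond)$ by the object boundary condition and $\g^\diamond$ is $\Ad$-stable under $G^\diamond$ (crossed submodule), so $\bm{\iota}^\ast(g)\,\bm{\iota}^\ast(A)\,\bm{\iota}^\ast(g)^{-1}\in\Omega^1(M,\g^\diamond)$; and $t^{\bm z}_\ast(\h^\diamond)\subseteq\g^\diamond$ gives $\bm{\iota}^\ast(t_\ast(\gamma))=t^{\bm z}_\ast(\bm{\iota}^\ast(\gamma))\in\Omega^1(M,\g^\diamond)$. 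For the $\h^{\bm z}$-component: $\bm{\iota}^\ast(\gamma)\in\Omega^1(M,\h^\diamond)$, so $[\bm{\iota}^\ast(\gamma),\bm{\iota}^\ast(\gamma)]$ and hence $F_M(\bm{\iota}^\ast(\gamma))$ and $\dd_M\bm{\iota}^\ast(\gamma)+\tfrac13[\bm{\iota}^\ast(\gamma),\bm{\iota}^\ast(\gamma)]$ lie in $\Omega^2(M,\h^\diamond)$, using that $\h^\diamond$ is a Lie subalgebra; and $\alpha^{\bm z}_\ast(\g^\diamond,\h^\diamond)\subseteq\h^\diamond$ (differentiated crossed-submodule condition) together with $\bm{\iota}^\ast(\alpha_\ast(g,B))=\alpha^{\bm z}_\ast(\bm{\iota}^\ast(g),\bm{\iota}^\ast(B))$ and $\bm{\iota}^\ast(g)\in G^\diamond$, $\bm{\iota}^\ast(B)\in\Omega^2(M,\h^\diamond)$ gives $\bm{\iota}^\ast(\alpha_\ast(g,B))\in\Omega^2(M,\h^\diamond)$. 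Having checked that in all three terms the $\g^{\bm z}$-slot lands in $\g^\diamond$ and the $\h^{\bm z}$-slot lands in $\h^\diamond$, the isotropy condition \eqref{eqn:subcrossed module c}, namely $\ipp{\,\cdot\,}{\,\cdot\,}_\omega\big\vert_{\g^\diamond\otimes\h^\diamond}=0$, forces each integrand to vanish pointwise as a $\bbC$-valued $3$-form on $M$, hence each integral over $M$ vanishes and $S_\omega\big({}^{(g,\gamma)}(A,B)\big)=S_\omega(A,B)$.

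The main (and essentially only) obstacle is purely bookkeeping: one must be careful that the pullback $\bm{\iota}^\ast$ commutes appropriately with the algebraic operations — with the de Rham differential one now uses $\dd_M$ on $M$ rather than $\overline{\dd}$ on $X$ (this is why Proposition \ref{propo:bdy gauge violation}, via the Cauchy–Pompeiu formula, already produced $F_M$ and $\dd_M$), and with the crossed-module maps $t_\ast$ and $\alpha_\ast$ one uses that these are evaluated on the product crossed module $(G^{\bm z},H^{\bm z},t^{\bm z},\alpha^{\bm z})$ and restrict to the isotropic sub-crossed-module by hypothesis. No further computation is needed beyond invoking Proposition \ref{propo:bdy gauge violation} and the defining properties of an isotropic crossed submodule in \eqref{eqn:subcrossed module}, so the proof is short. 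I would also remark explicitly that gauge-invariance of the \emph{restricted} action on $\mathcal{F}^\diamond$ is exactly what is needed, and that it does not require any flatness or fake-flatness of $(A,B)$, since Proposition \ref{propo:bdy gauge violation} holds off-shell.
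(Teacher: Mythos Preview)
Your proof is correct and follows exactly the same approach as the paper, which simply states that the result follows directly from Proposition~\ref{propo:bdy gauge violation} and the isotropy condition~\eqref{eqn:subcrossed module c}. You have merely spelled out in detail the verification that each factor in the defect integrands lands in $\g^\diamond$ or $\h^\diamond$ as appropriate, which the paper leaves implicit.
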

\begin{proof}
This follows directly from Proposition \ref{propo:bdy gauge violation} and the
isotropy condition \eqref{eqn:subcrossed module c} with respect to $\ipp{\,\cdot\,}{\,\cdot\,}_\omega$.
\end{proof}

\subsection{\label{subsec:edge modes simple}Edge modes}
The groupoid of boundary conditioned fields from 
Definition \ref{defi:boundarycondition}
arises as the \textit{strict} pullback
\begin{flalign}\label{eqn:pullback diagram}
\begin{gathered}
\xymatrix{
\mathcal{F}^{\diamond} \ar@{-->}[r] \ar@{-->}[d]~&~ 
\ar[d]^-{\bm{\iota}^\ast} \overline{\mathrm{Con}}_{(G,H,t,\alpha)}(X)\\
\mathrm{Con}_{(G^{\diamond},H^{\diamond},t^{\bm{z}},\alpha^{\bm{z}})}(M)\ar@{^{(}->}[r] ~&~
\mathrm{Con}_{(G^{\bm{z}},H^{\bm{z}},t^{\bm{z}},\alpha^{\bm{z}})}(M)
}
\end{gathered}
\end{flalign}
in the category of groupoids. Since groupoids naturally form a $2$-category,
it is more appropriate to consider a \textit{homotopy} pullback instead of a strict one.
We shall now spell out an explicit model for the homotopy pullback
of the diagram \eqref{eqn:pullback diagram} and explain how this admits 
an interpretation in terms of edge modes living on the defect $D\subset X$.
\begin{proposition}\label{propo:homotopy pullback}
A model for the homotopy pullback $\mathcal{F}^\diamond_{\mathrm{ho}}$ 
of the diagram \eqref{eqn:pullback diagram} 
is given by the groupoid which is specified by the following data:
\begin{itemize}
\item An object in $\mathcal{F}^\diamond_{\mathrm{ho}}$ is a tuple
$\big((A,B),(k,\kappa)\big)$  consisting of a connection 
$(A,B)\in \overline{\Omega}^1(X,\g)\times \overline{\Omega}^2(X,\h)$
on $X$ and a gauge transformation $(k,\kappa)\in C^\infty(M,G^{\bm{z}})\times\Omega^1(M,\h^{\bm{z}})$
on $M$, such that
\begin{flalign}
{}^{(k,\kappa)}{\bm\iota}^\ast(A,B)\,\in\, \Omega^1(M,\g^\diamond)\times \Omega^2(M,\h^\diamond)\quad.
\end{flalign}

\item A morphism in $\mathcal{F}^\diamond_{\mathrm{ho}}$ is a tuple
\begin{flalign}
\big((g,\gamma),(g^\diamond,\gamma^\diamond)\big)\,:\, \big((A,B),(k,\kappa)\big)~\longrightarrow~
\big({}^{(g,\gamma)}(A,B), (g^\diamond,\gamma^\diamond)\,(k,\kappa)\, \bm{\iota}^\ast(g,\gamma)^{-1}\big) 
\end{flalign}
consisting of a gauge transformation $(g,\gamma)\in C^\infty(X,G)\times\overline{\Omega}^1(X,\h)$
on $X$ and a gauge transformation $(g^\diamond,\gamma^\diamond)\in 
C^\infty(M,G^{\diamond})\times \Omega^1(M,\h^{\diamond})$ on $M$.
Using the composition and inversion formulas for gauge transformations 
from Definition \ref{def:connections}, the second component of the
target of this morphism reads explicitly as
\begin{flalign}\label{eqn:edgemode gauge transformation}
(g^\diamond,\gamma^\diamond)\,(k,\kappa)\, \bm{\iota}^\ast(g,\gamma)^{-1}\,=\,
\Big(g^\diamond\,k\,\bm{\iota}^\ast(g)^{-1}\,,~
\gamma^\diamond + \alpha^{\bm{z}}_\ast(g^\diamond,\kappa) - \alpha^{\bm{z}}_\ast\big(g^\diamond\,k\, \bm{\iota}^\ast(g)^{-1},\bm{\iota}^\ast(\gamma)\big)\Big)\quad.
\end{flalign}
\end{itemize}
\end{proposition}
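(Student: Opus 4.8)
The plan is to verify directly that the proposed groupoid satisfies the universal property of the homotopy pullback of the diagram \eqref{eqn:pullback diagram}, using the standard explicit model for homotopy pullbacks in the $2$-category of groupoids. Recall that for a cospan of groupoids $\mathcal{A}\xrightarrow{F}\mathcal{C}\xleftarrow{G}\mathcal{B}$, the homotopy pullback has objects given by triples $(a,b,\phi)$ with $a\in\mathcal{A}$, $b\in\mathcal{B}$ and $\phi\colon F(a)\xrightarrow{\cong} G(b)$ an isomorphism in $\mathcal{C}$, and morphisms given by pairs of morphisms in $\mathcal{A}$ and $\mathcal{B}$ that intertwine the chosen isomorphisms. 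First I would instantiate this with $\mathcal{A}=\overline{\mathrm{Con}}_{(G,H,t,\alpha)}(X)$, $\mathcal{C}=\mathrm{Con}_{(G^{\bm z},H^{\bm z},t^{\bm z},\alpha^{\bm z})}(M)$, $\mathcal{B}=\mathrm{Con}_{(G^\diamond,H^\diamond,t^{\bm z},\alpha^{\bm z})}(M)$, with $F=\bm{\iota}^\ast$ and $G$ the inclusion of the subgroupoid.

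The key step is to identify the abstract datum of an isomorphism $\phi\colon \bm{\iota}^\ast(A,B)\xrightarrow{\cong} (A^\diamond,B^\diamond)$ in $\mathcal{C}$, where $(A^\diamond,B^\diamond)\in\mathcal{B}$, with the edge mode $(k,\kappa)$. Since every morphism in the connection groupoid $\mathrm{Con}_{(G^{\bm z},H^{\bm z},t^{\bm z},\alpha^{\bm z})}(M)$ is a gauge transformation, $\phi$ is precisely a pair $(k,\kappa)\in C^\infty(M,G^{\bm z})\times\Omega^1(M,\h^{\bm z})$ together with the condition that its target ${}^{(k,\kappa)}\bm{\iota}^\ast(A,B)$ lands in $\Omega^1(M,\g^\diamond)\times\Omega^2(M,\h^\diamond)$, i.e.\ defines an object of $\mathcal{B}$; this is exactly the object description in the statement (the object $(A^\diamond,B^\diamond)$ is then determined, not extra data). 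For morphisms, I would unwind the abstract condition: a morphism in the homotopy pullback is a pair consisting of $(g,\gamma)\colon(A,B)\to{}^{(g,\gamma)}(A,B)$ in $\mathcal{A}$ and a morphism $(g^\diamond,\gamma^\diamond)\colon(A^\diamond,B^\diamond)\to{}^{(g^\diamond,\gamma^\diamond)}(A^\diamond,B^\diamond)$ in $\mathcal{B}$ such that the square
\begin{flalign*}
\begin{gathered}
\xymatrix{
\bm{\iota}^\ast(A,B) \ar[r]^-{(k,\kappa)} \ar[d]_-{\bm{\iota}^\ast(g,\gamma)} & (A^\diamond,B^\diamond) \ar[d]^-{(g^\diamond,\gamma^\diamond)}\\
\bm{\iota}^\ast\big({}^{(g,\gamma)}(A,B)\big) \ar[r]_-{(k',\kappa')} & {}^{(g^\diamond,\gamma^\diamond)}(A^\diamond,B^\diamond)
}
\end{gathered}
\end{flalign*}
commutes in $\mathcal{C}$; commutativity forces $(k',\kappa') = (g^\diamond,\gamma^\diamond)\,(k,\kappa)\,\bm{\iota}^\ast(g,\gamma)^{-1}$, which is the target formula claimed. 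The final component \eqref{eqn:edgemode gauge transformation} is then obtained by a routine substitution of the composition law $(g_2,\gamma_2)(g_1,\gamma_1) = (g_2g_1,\,\gamma_2+\alpha_\ast(g_2,\gamma_1))$ and inversion law $(g,\gamma)^{-1}=(g^{-1},-\alpha_\ast(g^{-1},\gamma))$ from Definition \ref{def:connections}, applied with the product crossed module $(G^{\bm z},H^{\bm z},t^{\bm z},\alpha^{\bm z})$; I would just carry out this expansion and record that it matches.

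I do not anticipate a serious obstacle here: the result is essentially a repackaging of the definition of the homotopy pullback together with the fact that a connection groupoid is a transformation/action-type groupoid in which every morphism is a gauge transformation with a prescribed source and an explicitly determined target. The one point requiring a little care is checking that the proposed groupoid is \emph{equivalent} to (a model for) the homotopy pullback rather than merely mapping to it — concretely, that the evident functor from the proposed groupoid to the abstract homotopy pullback is fully faithful and essentially surjective. Essential surjectivity is immediate since the abstract object $(A^\diamond,B^\diamond)$ is recovered as ${}^{(k,\kappa)}\bm{\iota}^\ast(A,B)$, and fully faithfulness follows because, as noted, the component $(g^\diamond,\gamma^\diamond)$ together with the commuting-square condition pins down the whole abstract morphism. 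I would state this equivalence explicitly as the content of the proof and leave the bracket-chasing verification of \eqref{eqn:edgemode gauge transformation} to the reader as a direct computation.
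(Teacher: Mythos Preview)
Your proposal is correct and follows exactly the same approach as the paper: the paper's proof simply states that the result follows immediately by applying the standard homotopy pullback construction for groupoids (citing \cite[Appendix A]{HomEdgeModes}) to the present example, and what you have written is precisely a spelled-out version of that construction.
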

\begin{proof}
This follows immediately by applying the usual homotopy pullback 
construction for groupoids, see e.g.\ \cite[Appendix A]{HomEdgeModes}, to the present example.
\end{proof}

\begin{remark}\label{rem:data edge mode}
This result is analogous to the case of $4$-dimensional semi-holomorphic Chern-Simons theory
from \cite[Section 4.2]{Benini:2020skc}. The gauge transformation component
$(k,\kappa)\in C^\infty(M,G^{\bm{z}})\times\Omega^1(M,\h^{\bm{z}})$ 
of an object $\big((A,B),(k,\kappa)\big)$ in $\mathcal{F}^\diamond_{\mathrm{ho}}$
can be interpreted as an edge mode living on the defect whose role
is to witness, through its induced transformation, the boundary condition from 
Definition \ref{defi:boundarycondition} for the connection 
$(A,B)\in \overline{\Omega}^1(X,\g)\times \overline{\Omega}^2(X,\h)$.
\end{remark}

\begin{theorem}\label{theo:strict = weak}
The canonical functor
\begin{flalign}
\Phi \,:\, \mathcal{F}^\diamond\,&\longrightarrow\,\mathcal{F}^{\diamond}_\mathrm{ho}\quad,\\
\nonumber (A,B)\,&\longmapsto\,\big((A,B),(1_{G^{\bm z}},0)\big)\quad,\\
\nonumber (g,\gamma)\,&\longmapsto\,\big((g,\gamma),\bm{\iota}^\ast(g,\gamma)\big)\quad,
\end{flalign}
from the strict to the homotopy pullback is an equivalence of groupoids.
\end{theorem}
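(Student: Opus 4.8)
The plan is to show that $\Phi$ is fully faithful and essentially surjective, which suffices for an equivalence of groupoids. Essential surjectivity is where the real content lies, while full faithfulness is a short bookkeeping argument. I would begin by recording the (well-known) abstract principle: in the homotopy pullback of $R \xrightarrow{\bm{\iota}^\ast} \overline{\mathrm{Con}}_{(G,H,t,\alpha)}(X) \xleftarrow{} \mathrm{Con}_{(G^{\bm{z}},H^{\bm{z}},\ldots)}(M)$ with one leg a strict inclusion of a subgroupoid, the canonical comparison from the strict pullback is an equivalence precisely because the inclusion of $\mathrm{Con}_{(G^{\diamond},H^{\diamond},\ldots)}(M)$ into $\mathrm{Con}_{(G^{\bm{z}},H^{\bm{z}},\ldots)}(M)$ is \emph{replete} on the relevant sub-image, i.e.\ any object gauge-isomorphic in the larger groupoid to an object of the form $\bm{\iota}^\ast(A,B)$ lying in the smaller groupoid can be tracked. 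So the proof reduces to unwinding the explicit models in Proposition \ref{propo:homotopy pullback}.

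Concretely, for essential surjectivity I would take an arbitrary object $\big((A,B),(k,\kappa)\big)$ of $\mathcal{F}^\diamond_{\mathrm{ho}}$, so that ${}^{(k,\kappa)}\bm{\iota}^\ast(A,B)$ lies in $\Omega^1(M,\g^\diamond)\times\Omega^2(M,\h^\diamond)$. The idea is to gauge the bulk connection by a transformation that restricts on the defect to $(k,\kappa)$ and thereby absorb the edge mode. Since $G$ and $H$ are connected and simply-connected (and $M = \bbR^3$, hence $X$ retracts suitably), the pullback map $\bm{\iota}^\ast : C^\infty(X,G)\times\overline{\Omega}^1(X,\h) \to C^\infty(M,G^{\bm z})\times\Omega^1(M,\h^{\bm z})$ is surjective: one extends $k$ to a $G$-valued function $\tilde g$ on $X$ (possible by collaring the defect and using smooth extension, with no topological obstruction), and likewise extends $\kappa$ to $\tilde\gamma\in\overline{\Omega}^1(X,\h)$. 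Then $\big((\tilde g,\tilde\gamma),(1_{G^{\bm z}},0)\big)$ is a morphism in $\mathcal{F}^\diamond_{\mathrm{ho}}$ from $\big({}^{(\tilde g,\tilde\gamma)}(A,B),(1_{G^{\bm z}},0)\big)$ — no wait, one must be slightly careful with directions: applying the morphism $\big((\tilde g,\tilde\gamma),(k',\kappa')\big)$ of Proposition \ref{propo:homotopy pullback} with a suitable defect-component sends $\big((A,B),(k,\kappa)\big)$ to an object whose edge-mode component is $(k',\kappa')\,(k,\kappa)\,\bm{\iota}^\ast(\tilde g,\tilde\gamma)^{-1}$, and choosing $(k',\kappa')$ and $(\tilde g,\tilde\gamma)$ so that this equals $(1_{G^{\bm z}},0)$ (using $\bm\iota^\ast(\tilde g,\tilde\gamma)=(k,\kappa)$ and taking $(k',\kappa')$ to be the identity, after checking the cocycle formula \eqref{eqn:edgemode gauge transformation}) exhibits the isomorphism with $\big({}^{(\tilde g,\tilde\gamma)}(A,B),(1_{G^{\bm z}},0)\big) = \Phi\big({}^{(\tilde g,\tilde\gamma)}(A,B)\big)$. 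One checks ${}^{(\tilde g,\tilde\gamma)}(A,B)$ indeed lies in $\mathcal{F}^\diamond$: its pullback is ${}^{\bm\iota^\ast(\tilde g,\tilde\gamma)}\bm\iota^\ast(A,B) = {}^{(k,\kappa)}\bm\iota^\ast(A,B)$, which by hypothesis is $\g^\diamond\oplus\h^\diamond$-valued.

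For full faithfulness, given $(A,B),(A',B')\in\mathcal{F}^\diamond$, a morphism $\Phi(A,B)\to\Phi(A',B')$ in $\mathcal{F}^\diamond_{\mathrm{ho}}$ is a pair $\big((g,\gamma),(g^\diamond,\gamma^\diamond)\big)$ with $(g,\gamma):(A,B)\to(A',B')$ in the bulk and $(g^\diamond,\gamma^\diamond)$ a $(G^\diamond,H^\diamond)$-valued defect transformation, subject to the compatibility that the target's edge-mode component $(g^\diamond,\gamma^\diamond)\,(1_{G^{\bm z}},0)\,\bm\iota^\ast(g,\gamma)^{-1}$ equals the edge-mode component $(1_{G^{\bm z}},0)$ of $\Phi(A',B')$; this forces $(g^\diamond,\gamma^\diamond) = \bm\iota^\ast(g,\gamma)$, so the morphism is determined by $(g,\gamma)$ alone and is visibly in the image of $\Phi$, while the constraint $\bm\iota^\ast(g,\gamma)\in C^\infty(M,G^\diamond)\times\Omega^1(M,\h^\diamond)$ is exactly the morphism condition in $\mathcal{F}^\diamond$. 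Hence $\Phi$ induces a bijection on Hom-sets. I expect the main obstacle to be the extension/surjectivity argument for $\bm\iota^\ast$ in essential surjectivity — one needs that a smooth $G$-valued function (and $\h$-valued $1$-form) on the codimension-$2$ submanifold $D = M\times\bm z\subset X$ extends to all of $X$; this is unobstructed here because $D$ is a closed embedded submanifold with trivial normal bundle and $G$ is connected (indeed simply-connected), so one may use a tubular neighbourhood and a partition of unity, but it is the one place where the standing hypotheses on $G,H$ and the geometry of $X$ are genuinely used.
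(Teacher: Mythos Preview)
Your proposal is correct and follows essentially the same approach as the paper's proof: for essential surjectivity you extend the edge mode $(k,\kappa)$ along the defect inclusion to a bulk gauge transformation $(\tilde g,\tilde\gamma)$ and exhibit the isomorphism to $\Phi\big({}^{(\tilde g,\tilde\gamma)}(A,B)\big)$, and for full faithfulness you observe that the compatibility constraint forces $(g^\diamond,\gamma^\diamond) = \bm\iota^\ast(g,\gamma)$. Your discussion of the extension step is in fact more detailed than the paper's (which simply asserts the extension exists), and your identification of this as the place where the geometry of $X$ and the standing hypotheses on $G$ enter is apt, though the brief direction-wobble in the essential surjectivity paragraph could be cleaned up in a final write-up.
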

\begin{proof}
We have to verify that $\Phi$ is essentially surjective on objects and
fully faithful on morphisms.

\textit{Essential surjectivity:} Consider any object $\big((A,B),(k,\kappa)\big)$ in $\mathcal{F}^\diamond_\mathrm{ho}$.
Then the gauge transformation $(k,\kappa)\in C^\infty(M,G^{\bm{z}})\times \Omega^1(M,\h^{\bm{z}})\cong 
C^\infty(D, G)\times \Omega^1(D,\h)$ can be extended along the defect inclusion $D\subset X$, which yields
an element $(\tilde{k},\tilde{\kappa})\in C^\infty(X,G)\times \overline{\Omega}^1(X,\h)$ satisfying
$\bm{\iota}^\ast(\tilde{k},\tilde{\kappa}) = (k,\kappa)$. This defines a gauge 
transformation $(\tilde{k},\tilde{\kappa}) : (A,B)\to {}^{(\tilde{k},\tilde{\kappa})}(A,B)$ 
in $\overline{\mathrm{Con}}_{(G,H,t,\alpha)}(X)$
such that $\bm{\iota}^\ast\big({}^{(\tilde{k},\tilde{\kappa})}(A,B)\big)=
{}^{(k,\kappa)}\bm{\iota}^\ast(A,B)\in \Omega^1(M,\g^\diamond)\times \Omega^2(M,\h^\diamond)$, i.e.\
$ {}^{(\tilde{k},\tilde{\kappa})}(A,B)$ 
defines an object in the strict pullback $\mathcal{F}^\diamond$.
Essential surjectivity on objects is then proven by considering
the morphism 
\begin{flalign}
\big((\tilde{k},\tilde{\kappa}),(1_{G^\diamond},0)\big) \,:\, 
\big((A,B),(k,\kappa)\big)~\longrightarrow~\big({}^{(\tilde{k},\tilde{\kappa})}(A,B),(1_{G^{\bm{z}}},0)\big)
\end{flalign}
in $\mathcal{F}^\diamond_\mathrm{ho}$.

\textit{Fully faithfulness:} Consider two objects $(A,B)$ and $(A^\prime,B^\prime)$
in the strict pullback $\mathcal{F}^\diamond$. Then a morphism
\begin{flalign}
\big((g,\gamma),(g^\diamond,\gamma^\diamond)\big)\,:\, 
\big(((A,B),(1_{G^{\bm z}},0)\big)\,\longrightarrow\,
\big((A^\prime,B^\prime),(1_{G^{\bm z}},0)\big)
\end{flalign}
between their images in $\mathcal{F}^\diamond_{\mathrm{ho}}$ 
exists if and only if 
\begin{flalign}
(1_{G^{\bm z}},0)\, =\,(g^\diamond,\gamma^\diamond) \, (1_{G^{\bm z}},0)\, \bm{\iota}^\ast(g,\gamma)^{-1}
\,=\,  (g^\diamond,\gamma^\diamond)\,\bm{\iota}^\ast(g,\gamma)^{-1}\quad,
\end{flalign}
i.e.\ $(g^\diamond,\gamma^\diamond) = \bm{\iota}^\ast(g,\gamma) $.
This is equivalent to the statement that the morphism lies in the image of 
the functor $\Phi$, which proves fully faithfulness on morphisms.
\end{proof}

The gauge-invariant action $S_\omega: \mathcal{F}^\diamond\to\bbC$ 
from Proposition \ref{propo:action strict pullback}
can be transferred along the equivalence 
$\Phi : \mathcal{F}^\diamond\to \mathcal{F}^{\diamond}_{\mathrm{ho}}$ from Theorem \ref{theo:strict = weak}
to a gauge-invariant action $S^\mathrm{ext}_\omega :\mathcal{F}_{\mathrm{ho}}^\diamond\to\bbC $
on the groupoid $\mathcal{F}_{\mathrm{ho}}^\diamond$ in which the edge modes are manifestly included.
This equivalent point of view will be useful later to identify an integrable field theory on the
$3$-dimensional manifold $M$. The value of the extended action 
$S^\mathrm{ext}_\omega\big((A,B),(k,\kappa)\big)$ on an object in $\mathcal{F}^\diamond_\mathrm{ho}$
is determined as follows: In the proof of essential surjectivity in Theorem \ref{theo:strict = weak}, we have 
constructed a gauge transformation $(\tilde{k},\tilde{\kappa})$ from
$\big((A,B),(k,\kappa)\big)$ to the image under 
$\Phi:\mathcal{F}^\diamond\to \mathcal{F}^{\diamond}_{\mathrm{ho}}$ 
of an object ${}^{(\tilde{k},\tilde{\kappa})}(A,B)$ in $\mathcal{F}^\diamond$. We then define
\begin{flalign}
&S^\mathrm{ext}_\omega\big((A,B),(k,\kappa)\big)\,:=\,S_\omega\big({}^{(\tilde{k},\tilde{\kappa})}(A,B)\big) 
\,=\, \frac{\ii}{2 \pi} \int_X \omega\wedge \ip{\overline{F}(A) - \tfrac{1}{2} t_\ast(B)}{B}\\
\nonumber  &\quad  +\frac{1}{2} \int_M
\Big(\ipp{{}^{(k,\kappa)}\bm{\iota}^\ast(A)}{\alpha^{\bm{z}}_\ast\big(k,\bm{\iota}^\ast (B)\big)+ F_M(\kappa)}_{\omega} + \ipp{t^{\bm{z}}_\ast(\kappa)}{\dd_M \kappa +\tfrac{1}{3} [\kappa,\kappa]}_\omega\Big)\quad,
\end{flalign}
where we used Proposition \ref{propo:bdy gauge violation} and the explicit expression
\eqref{eqn:2CS 4 form} for the $2$-Chern-Simons $4$-form. Recalling also 
\eqref{eqn:gauge transformation}, which in our present context reads as
\begin{subequations}\label{eqn:diamond fields}
\begin{flalign}
{}^{(k,\kappa)}\bm{\iota}^\ast(A) \,&=\, k\, {\bm \iota}^\ast(A)\,k^{-1} -\dd_M k\, k^{-1} - t^{\bm{z}}_\ast(\kappa)\,\in\,\Omega^1(M,\g^\diamond)\quad,\\
{}^{(k,\kappa)}\bm{\iota}^\ast(B)\,&=\,\alpha^{\bm{z}}_\ast\big(k,{\bm \iota}^\ast(B)\big) -F_M(\kappa) - \alpha^{\bm{z}}_\ast\big({}^{(k,\kappa)}\bm{\iota}^\ast(A),\kappa\big)  \,\in\,\Omega^2(M,\h^\diamond)\quad,
\end{flalign}
\end{subequations}
we can rewrite the extended action further by solving the second identity
for $\alpha^{\bm{z}}_\ast\big(k,{\bm \iota}^\ast(B)\big)$.
Using also that $\ipp{{}^{(k,\kappa)}\bm{\iota}^\ast(A)}{{}^{(k,\kappa)}\bm{\iota}^\ast(B)}_\omega=0$ 
as a consequence of isotropy, we then obtain
\begin{flalign}\label{eqn:extended action simple}
&S^\mathrm{ext}_\omega\big((A,B),(k,\kappa)\big)\, =\, \frac{\ii}{2 \pi} \int_X \omega\wedge \ip{\overline{F}(A) - \tfrac{1}{2} t_\ast(B)}{B}\\
\nonumber &\quad  + \frac{1}{2}  \int_M
\Big(\ipp{{}^{(k,\kappa)}\bm{\iota}^\ast(A)}{\alpha^{\bm{z}}_\ast\big({}^{(k,\kappa)}\bm{\iota}^\ast(A),\kappa\big) + 2 F_M(\kappa)}_{\omega} + \ipp{t^{\bm{z}}_\ast(\kappa)}{\dd_M \kappa +\tfrac{1}{3} [\kappa,\kappa]}_\omega\Big)\quad. 
\end{flalign}
Note that the defect action on $M$
is a function of the edge mode $(k,\kappa)\in C^\infty(M,G^{\bm z})\times\Omega^1(M,\h^{\bm z})$ and the pullback
${\bm\iota}^\ast(A,B)\in \Omega^1(M,\g^{\bm z})\times \Omega^2(M,\h^{\bm z})$ of the connection.

\subsection{\label{subsec:EOM}Equations of motion}
We shall now derive the Euler-Lagrange equations of the extended action
\eqref{eqn:extended action simple}, which will yield bulk equations of motion
on $X$ and also defect equations of motion on the $3$-dimensional manifold $M$.
In anticipation of our applications to integrable field theories,
we shall restrict as in \cite[Section 5]{Benini:2020skc} the extended 
action to the full subgroupoid
\begin{flalign}\label{eqn:Mlegconnections}
\mathcal{F}^{\diamond,0}_{\mathrm{ho}}\,\subseteq\,\mathcal{F}^{\diamond}_{\mathrm{ho}}
\end{flalign}
whose objects $\big((A,B),(k,\kappa)\big)$ are such that the connection
$(A,B)\in \overline{\Omega}^{1,0}(X,\g)\times \overline{\Omega}^{2,0}(X,\h)$
does not have legs along $\dd \overline{z}$. (Recall from \eqref{eqn:dR quotient}
that the $\dd z$ legs are already quotiented out since they do not contribute 
to the action. Hence, connections in $\mathcal{F}^{\diamond,0}_{\mathrm{ho}}$
have only legs along the factor $M$ of the product manifold $X = M\times C$.)
Ideally, one would like to interpret this restriction as a gauge choice, but 
it is currently not clear to us if the inclusion \eqref{eqn:Mlegconnections} defines 
an equivalence of groupoids, see also \cite[Remark 5.1]{Benini:2020skc} for a similar 
issue in the $4$-dimensional case.
Restricting the extended action \eqref{eqn:extended action simple} to 
the full subgroupoid \eqref{eqn:Mlegconnections} leads to a further simplification
\begin{flalign}\label{eqn:extended action simple Mlegs}
&S^\mathrm{ext}_\omega\big((A,B),(k,\kappa)\big)\, =\, \frac{\ii}{2 \pi} \int_X \omega\wedge \ip{\overline{\partial}A}{B}\\
\nonumber &\quad +\frac{1}{2} \int_M
\Big(\ipp{{}^{(k,\kappa)}\bm{\iota}^\ast(A)}{\alpha^{\bm{z}}_\ast\big({}^{(k,\kappa)}\bm{\iota}^\ast(A),\kappa\big) + 2 F_M(\kappa)}_{\omega} + \ipp{t^{\bm{z}}_\ast(\kappa)}{\dd_M \kappa +\tfrac{1}{3} [\kappa,\kappa]}_\omega\Big)\quad,
\end{flalign}
where we recall that $\overline{\partial}$ is the Dolbeault differential on the factor $C$ 
of the product manifold $X = M\times C$.

Varying the action \eqref{eqn:extended action simple Mlegs} is slightly non-trivial because the individual components 
of an object $\big((A,B),(k,\kappa)\big)$ in $\mathcal{F}^{\diamond,0}_{\mathrm{ho}}$ 
are constrained by the condition that ${}^{(k,\kappa)}{\bm\iota}^\ast(A,B)\in 
\Omega^1(M,\g^\diamond)\times\Omega^2(M,\h^\diamond)$, see Proposition \ref{propo:homotopy pullback}.
A suitable way to parametrize such variations is by
$(A^\epsilon,B^\epsilon) := (A+\epsilon\,\mathsf{a},B+\epsilon\,\mathsf{b})$ 
and $(k^\epsilon,\kappa^\epsilon):= (e^{\epsilon \,\chi}\,k,\kappa + \epsilon\, \rho)$, 
for $\epsilon$ a small parameter.
Using \eqref{eqn:diamond fields} and performing a $1^{\mathrm{st}}$-order 
Taylor expansion in $\epsilon$, we obtain the induced variations
\begin{subequations}\label{eqn:induced variations}
\begin{flalign}
\delta \big({}^{(k,\kappa)}\bm{\iota}^\ast(A)\big)\,&=\, \big[\chi,k \,{\bm\iota}^\ast(A) \,k^{-1} -\dd_M k\,k^{-1} \big]-\dd_M\chi + k\,{\bm\iota}^\ast(\mathsf{a})\,k^{-1} - t^{\bm{z}}_\ast(\rho)\quad,\\
\delta\big({}^{(k,\kappa)}\bm{\iota}^\ast(B)\big)\,&=\,\alpha^{\bm{z}}_\ast\big(\chi,\alpha^{\bm{z}}_\ast\big(k,{\bm \iota}^\ast(B)\big)\big) + \alpha^{\bm{z}}_\ast(k,{\bm\iota}^\ast(\mathsf{b})) \\
\nonumber &\qquad - \dd_M \rho -[\kappa,\rho] -\alpha^{\bm{z}}_\ast\big(\delta \big({}^{(k,\kappa)}\bm{\iota}^\ast(A)\big),\kappa\big) 
- \alpha^{\bm{z}}_\ast\big({}^{(k,\kappa)}\bm{\iota}^\ast(A),\rho\big)\quad
\end{flalign}
\end{subequations}
for the combinations of fields entering the defect action.
Note that the constraint on the variations is then fulfilled to $1^{\mathrm{st}}$-order
in $\epsilon$ if and only if $\delta \big({}^{(k,\kappa)}\bm{\iota}^\ast(A)\big) \in \Omega^1(M,\g^\diamond)$
and $\delta \big({}^{(k,\kappa)}\bm{\iota}^\ast(B)\big) \in \Omega^2(M,\h^\diamond)$.
 
One can now work out the variation of the extended action
\eqref{eqn:extended action simple Mlegs} and one finds
after a simplification using \eqref{eqn:diamond fields} 
and \eqref{eqn:induced variations} that
\begin{flalign}\label{eqn:variation Sext}
&\delta S^{\mathrm{ext}}_\omega \,=\, \frac{\ii}{2 \pi} \int_X\omega\wedge \Big(\ip{\mathsf{a}}{\overline{\partial} B} 
+ \ip{\overline{\partial}A}{\mathsf{b}}\Big) \\
\nonumber  &\quad + \int_M \Big(\ipp{k^{-1}\chi k}{{\bm\iota}^\ast\big(\dd_M B 
+ \alpha_\ast(A,B)\big)}_\omega + \ipp{{\bm\iota}^\ast\big(F_M(A) - t_\ast(B)\big)}{\alpha^{\bm{z}}_\ast(k^{-1},\rho)}_\omega\Big)\quad,
\end{flalign}
where we use that under the pullback $\bm{\iota}^\ast$ only the $\dd_M$ component 
of the differential $\overline{\dd} =\dd_M + \overline{\partial}$ 
on the quotient de Rham calculus $\overline{\Omega}^\bullet(X)$ from \eqref{eqn:dR quotient} survives.
Let us also note that the term 
$-\ipp{\delta \big({}^{(k,\kappa)}\bm{\iota}^\ast(A)\big)}{{}^{(k,\kappa)}\bm{\iota}^\ast(B)}_{\omega}$ 
on $M$ which one finds in this calculation vanishes manifestly as a consequence of isotropy. 
We summarize this result in the following
\begin{proposition}\label{propo:EOM}
The Euler-Lagrange equations of the extended action $S_\omega^{\mathrm{ext}}$ in \eqref{eqn:extended action simple Mlegs}
on the full subgroupoid \eqref{eqn:Mlegconnections} are given by the bulk equations of motion on $X$ 
\begin{flalign}\label{eqn:bulk EOM simple}
\omega\wedge \overline{\partial} B \,=\,0\quad,\qquad 
\omega\wedge \overline{\partial}A\,=\,0\quad,
\end{flalign}
and by the defect equations of motion on $M$
\begin{flalign}\label{eqn:defect EOM simple}
{\bm \iota}^\ast\big(\dd_M B + \alpha_\ast(A,B)\big)\,=\ 0\quad,\qquad
{\bm \iota}^\ast\big(F_M(A) - t_\ast(B)\big)\,=\ 0\quad.
\end{flalign}
\end{proposition}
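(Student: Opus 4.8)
The plan is to read off the Euler--Lagrange equations from the variational formula \eqref{eqn:variation Sext}, which we take as already established. The only genuine subtlety is that the variations $(\mathsf{a},\mathsf{b},\chi,\rho)$ of an object $\big((A,B),(k,\kappa)\big)\in\mathcal{F}^{\diamond,0}_{\mathrm{ho}}$ are not independent: they must preserve, to first order in $\epsilon$, the defining condition of the homotopy pullback, i.e.\ $\delta\big({}^{(k,\kappa)}\bm\iota^\ast(A)\big)\in\Omega^1(M,\g^\diamond)$ and $\delta\big({}^{(k,\kappa)}\bm\iota^\ast(B)\big)\in\Omega^2(M,\h^\diamond)$, with the induced variations as in \eqref{eqn:induced variations}. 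The structural fact that makes the argument work, and which is already visible in \eqref{eqn:variation Sext} once isotropy has been used, is that $\delta S_\omega^{\mathrm{ext}}$ splits into a bulk integral over $X$ involving only $(\mathsf{a},\mathsf{b})$ and a defect integral over $M$ involving only $(\chi,\rho)$.

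I would first obtain the bulk equations. Taking $\chi=0$, $\rho=0$ and $\mathsf{a}\in\overline\Omega^{1,0}(X,\g)$, $\mathsf{b}\in\overline\Omega^{2,0}(X,\h)$ with support in the complement $X\setminus D$ of the defect leaves ${}^{(k,\kappa)}\bm\iota^\ast(A,B)$ unchanged, so the constraint is automatically satisfied and these variations are admissible. The defect integral in \eqref{eqn:variation Sext} then vanishes, and since $\mathsf{a},\mathsf{b}$ are otherwise arbitrary on $X\setminus D$, the fundamental lemma of the calculus of variations together with the non-degeneracy of $\ip{\,\cdot\,}{\,\cdot\,}$ forces $\omega\wedge\overline\partial B=0$ and $\omega\wedge\overline\partial A=0$; since $\omega$ is a nowhere-vanishing holomorphic $1$-form on $C\setminus\bm z$, these are equivalent to $\overline\partial A=\overline\partial B=0$ away from $D$, and hence hold on all of $X$ by continuity.

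I would then obtain the defect equations. With the bulk equations now imposed, the bulk integral in \eqref{eqn:variation Sext} vanishes identically for any $\mathsf{a},\mathsf{b}$. The key point is that an arbitrary pair $(\chi,\rho)\in C^\infty(M,\g^{\bm z})\times\Omega^1(M,\h^{\bm z})$ can be accompanied by a choice of $(\mathsf{a},\mathsf{b})$ keeping the configuration in $\mathcal{F}^{\diamond,0}_{\mathrm{ho}}$: in \eqref{eqn:induced variations} the pullbacks $\bm\iota^\ast(\mathsf{a})$ and $\bm\iota^\ast(\mathsf{b})$ enter linearly and can be solved for so as to cancel the components of the induced variations failing to land in $\g^\diamond$, $\h^\diamond$ (using that $\alpha^{\bm z}_\ast(k,\,\cdot\,)$ is pointwise invertible), and the resulting data on $D$ can then be extended off $D$ into $X$ exactly as in the essential-surjectivity step of the proof of Theorem~\ref{theo:strict = weak}. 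Since such $\mathsf{a},\mathsf{b}$ contribute nothing to $\delta S_\omega^{\mathrm{ext}}$, stationarity reduces to the vanishing of the defect integral in \eqref{eqn:variation Sext} for \emph{all} $\chi$ and $\rho$. Setting $\rho=0$, the map $\chi\mapsto k^{-1}\chi k$ is a bijection of $C^\infty(M,\g^{\bm z})$ and $\ipp{\,\cdot\,}{\,\cdot\,}_\omega$ is non-degenerate, since $\ip{\,\cdot\,}{\,\cdot\,}$ is and the residues $k^x=\mathrm{Res}_x(\omega)$ at the simple poles $x\in\bm z$ are non-zero; the fundamental lemma then gives $\bm\iota^\ast\big(\dd_M B+\alpha_\ast(A,B)\big)=0$. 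Setting $\chi=0$, the map $\rho\mapsto\alpha^{\bm z}_\ast(k^{-1},\rho)$ is pointwise invertible, so the fundamental lemma likewise gives $\bm\iota^\ast\big(F_M(A)-t_\ast(B)\big)=0$.

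The step I expect to need the most care is the one just described: making precise that the compensating bulk variations required to keep a prescribed $(\chi,\rho)$ admissible drop out of $\delta S_\omega^{\mathrm{ext}}$ once the bulk equations are imposed, so that $(\chi,\rho)$ may be treated as effectively free. Everything else --- invertibility of conjugation by $k$ and of $\alpha_\ast(g,\,\cdot\,)$, non-degeneracy of $\ipp{\,\cdot\,}{\,\cdot\,}_\omega$, and the extension of $M$-leg forms off the defect --- is routine.
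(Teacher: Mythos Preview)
Your proof is correct and follows essentially the same strategy as the paper: first use variations supported away from the defect to obtain the bulk equations, then exhibit admissible variations with arbitrary $(\chi,\rho)$ by solving \eqref{eqn:induced variations} for $\bm\iota^\ast(\mathsf{a}),\bm\iota^\ast(\mathsf{b})$ and extending off $D$. The only difference is emphasis: the paper solves for the induced variations to vanish \emph{exactly} (not merely to land in $\g^\diamond,\h^\diamond$) and leaves implicit that the bulk integral drops out once the bulk equations hold, whereas you make this latter point explicit and allow the induced variations merely to satisfy the constraint; both versions are valid and the underlying argument is the same.
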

\begin{proof}
To derive the bulk equations, one uses \eqref{eqn:variation Sext} for all variations 
$(A^\epsilon,B^\epsilon) := (A+\epsilon\,\mathsf{a},B+\epsilon\,\mathsf{b})$ 
and $(k^\epsilon,\kappa^\epsilon):= (k,\kappa)$ with $\mathsf{a}\in\overline{\Omega}^{1,0}(X,\g)$ 
and $\mathsf{b}\in\overline{\Omega}^{2,0}(X,\h)$
supported on the complement $X\setminus D$ of the defect $D\subset X$.
Note that such variations manifestly satisfy the constraints 
$\delta \big({}^{(k,\kappa)}\bm{\iota}^\ast(A)\big) \in \Omega^1(M,\g^\diamond)$ and 
$\delta\big({}^{(k,\kappa)}\bm{\iota}^\ast(B)\big)\in \Omega^2(M,\h^\diamond)$
on the induced variations.

To derive the defect equations, we consider any 
variation $(k^\epsilon,\kappa^\epsilon):= (e^{\epsilon\,\chi}\,k,\kappa+\epsilon\,\rho)$
of the edge modes with $\chi\in C^\infty(M,\g^{{\bm z}})$ and $\rho\in\Omega^1(M,\h^{\bm z})$.
From the explicit form of the induced variations \eqref{eqn:induced variations},
one finds that the conditions $\delta \big({}^{(k,\kappa)}\bm{\iota}^\ast(A)\big) = 0$ and 
$\delta \big({}^{(k,\kappa)}\bm{\iota}^\ast(B)\big) =0$,
which in particular imply the constraint, can be solved uniquely for 
${\bm \iota}^\ast (\mathsf{a})\in\Omega^1(M,\g^{\bm z})\cong\Omega^1(D,\g)$
and ${\bm \iota}^\ast(\mathsf{b})\in \Omega^2(M,\h^{\bm z})\cong\Omega^2(D,\h)$. 
Choosing any extensions $\mathsf{a}\in \overline{\Omega}^{1,0}(X,\g)$ and
$\mathsf{b}\in\overline{\Omega}^{2,0}(X,\h)$ of these forms 
along the defect inclusion $D\subset X$ gives
a compatible variation of the connection 
$(A^\epsilon,B^\epsilon):= (A + \epsilon\,\mathsf{a},B+
\epsilon \,\mathsf{b})$. The result then follows from \eqref{eqn:variation Sext}.
\end{proof}


\section{\label{sec:higherpoles}$5d$ $2$-Chern-Simons theory with higher poles}
All constructions and results from Section \ref{sec:simplepoles} 
can be generalized to the case where the meromorphic $1$-form $\omega$ on $\CP$
has higher-order poles by using the concept of \textit{regularized integrals} from 
\cite{Li:2020ljm} and \cite{Benini:2020skc}. We shall now briefly state the relevant results
and refer the reader to \cite[Section 3]{Benini:2020skc} for more details and complete proofs
in the similar case of $4$-dimensional semi-holomorphic Chern-Simons theory.

As in the previous section, we consider the $5$-dimensional manifold $X = M\times C$
with $M = \bbR^3$ the $3$-dimensional Cartesian space, interpreted as spacetime, 
and $C =\CP\setminus {\bm \zeta} $ the Riemann sphere with all zeros ${\bm \zeta}\subset \CP$ of $\omega$ removed.
We can and will choose a global coordinate $z : C\to\bbC$ on $C$
since by our hypotheses $\vert {\bm \zeta}\vert \geq 1$. Using this coordinate,
we can write the meromorphic $1$-form as
\begin{flalign}\label{eqn:omega explicit}
\omega\,=\, \sum_{x\in{\bm z}} \sum_{p=0}^{n_x-1}\frac{k^x_p}{(z-x)^{p+1}}\,\dd z\quad,
\end{flalign}
where $n_x\in\mathbb{Z}_{\geq 1}$ denotes the order of the pole $x\in{\bm z}$ and $k^x_p\in\bbC$ are constants.

Let us denote by $n\coloneqq \mathrm{max}(n_x)_{x\in{\bm z}}$ the maximal order among all poles of $\omega$.
We introduce the corresponding Weil algebra
\begin{flalign}
\mathcal{T}^n\,\coloneqq\, \bbC[\epsilon]\big/ (\epsilon^n)
\end{flalign}
of order $n$ and define the holomorphic $(n-1)$-jet prolongation 
\begin{flalign}
j_X^\ast\,:\,\overline{\Omega}^q(X)~\longrightarrow\,\overline{\Omega}^q(X)\otimes\mathcal{T}^n~~,\quad\eta\,\longmapsto\,
\sum_{p=0}^{n-1}\frac{1}{p!}\,\partial_z^p\eta\otimes \epsilon^p\quad,
\end{flalign}
where we recall that $(\overline{\Omega}^\bullet(X),\overline{\dd})$
denotes the quotient de Rham calculus from \eqref{eqn:dR quotient}.
The generalization of the action \eqref{eqn:5d CS action} to the case of higher-order poles
is then given by the regularized integral
\begin{subequations}\label{eqn:5d CS action higher}
\begin{flalign}
S_\omega(A,B)\,\coloneqq\, \frac{\ii}{2 \pi} \,\,\dashint_X \omega\wedge \CS(A,B)\,\coloneqq\,
\frac{\ii}{2 \pi} \int_X\big(\omega\wedge j^\ast_X\CS(A,B)\big)_{\mathrm{reg}}\quad,
\end{flalign}
where 
\begin{flalign}
\big(\omega\wedge j^\ast_X\CS(A,B)\big)_{\mathrm{reg}} \,\coloneqq\,
\sum_{x\in{\bm z}}\sum_{p=0}^{n_x-1}\frac{k_p^x}{z-x}\,\dd z\wedge \frac{1}{p!}\partial_z^p\CS(A,B)\quad.
\end{flalign}
\end{subequations}
The key property of this regularization construction (see \cite[Lemma 3.2]{Benini:2020skc})
is that the $5$-form $\big(\omega\wedge j^\ast_X\CS(A,B)\big)_{\mathrm{reg}}$
is locally integrable near all poles $x\in{\bm z}$ of $\omega$ 
and that it agrees with the ordinary wedge product 
$ \omega\wedge \CS(A,B) = \big(\omega\wedge j^\ast_X\CS(A,B)\big)_{\mathrm{reg}} + \dd \psi$ 
up to an exact term which is singular and non-integrable near the poles of $\omega$.
Note that the action coincides with \eqref{eqn:5d CS action} in the case where $\omega$
has only simple poles, i.e.\ $n_x=1$ for all $x\in{\bm z}$.

The generalization of the defect \eqref{eqn:surface defect}
to the higher-order pole case is given by the formal manifold
\begin{flalign}\label{eqn:surface defect higher}
\widehat{D}\,\coloneqq\,\bigsqcup_{x\in \bm{z}}\big( M\times \ell \mathcal{T}_x^{n_x}\big)\quad,
\end{flalign}
where $\ell \mathcal{T}_x^{n_x}$ denotes the locus (in the sense of synthetic geometry \cite{Kock})
of the Weil algebra $\mathcal{T}_x^{n_x}=\bbC[\epsilon_x]/(\epsilon_x^{n_x})$ 
of order $n_x$ given by the order of the pole $x\in{\bm z}$. 
One should interpret $\ell \mathcal{T}_x^{n_x}$ as an infinitesimally thickened point.
The formal defect \eqref{eqn:surface defect higher} embeds
\begin{flalign}
{\bm j}\,:\, \widehat{D}\,\longhookrightarrow \,X
\end{flalign}
into $X$, which induces pullback maps
\begin{subequations}\label{eqn:jpull maps}
\begin{flalign}
{\bm j}^\ast\,:\, C^\infty(X,N)~&\longrightarrow\, C^\infty(\widehat{D},N)\,\cong\,C^\infty(M,N^{\hat{\bm z}})\\
{\bm j}^\ast\,:\, \overline{\Omega}^q(X,V)~&\longrightarrow\, \Omega^q(\widehat{D},V)\,\cong\,\Omega^q(M,V^{\hat{\bm z}})
\end{flalign}
\end{subequations}
generalizing \eqref{eqn:iota pullback maps}, where
\begin{flalign}\label{eqn:jet manifold}
N^{\hat{{\bm z}}}\,\coloneqq\, \prod_{x\in{\bm z}} C^\infty(\ell \mathcal{T}_x^{n_x},N)\quad
\end{flalign}
denotes the product of $(n_x-1)$-jet manifolds over the manifold $N$ and
\begin{flalign}
V^{\hat{\bm z}}\,\coloneqq\,\prod_{x\in{\bm z}}\big( V \otimes \mathcal{T}_x^{n_x}\big)\quad.
\end{flalign}
Explicitly, the pullback maps \eqref{eqn:jpull maps}
are pullbacks along the inclusions $\iota_x : M\times\{x\} \hookrightarrow X$ 
of holomorphic jet prolongations, i.e.\
\begin{flalign}\label{eqn:jpull maps explicit}
\bm{j}^\ast(\,\cdot\,)\,=\,\bigg(\sum_{p=0}^{n_x-1}\tfrac{1}{p!} \iota_x^\ast\big(\partial_z^p (\,\cdot\,)\big)\otimes \epsilon_x^p\bigg)_{x\in{\bm z}}\quad.
\end{flalign}
The generalization of the pairing \eqref{eqn:defectpairing simple} 
to the higher-order pole case is given by
\begin{flalign}\label{eqn:defectpairing higher}
\ipp{\,\cdot\,}{\,\cdot\,}_{\omega}\,:\, \g^{\hat{\bm{z}}}\otimes\h^{\hat{\bm{z}}}\,\longrightarrow\bbC~~,\quad
\mathcal{X}\otimes \mathcal{Y}\,\longmapsto\,\ipp{\mathcal{X}}{\mathcal{Y}}_\omega\,\coloneqq\,
\sum_{x\in\bm{z}}\sum_{p,q=0}^{n_x-1} k_{p+q}^x\,\ip{\mathcal{X}^x_p}{\mathcal{Y}^x_q}\quad,
\end{flalign}
where $\mathcal{X} = 
\big(\sum_{p=0}^{n_x-1} \mathcal{X}^x_p\otimes \epsilon_x^p\big)_{x\in{\bm z}}\in \g^{\hat{\bm{z}}}$,
$\mathcal{Y} = \big(\sum_{q=0}^{n_x-1} \mathcal{Y}^x_q\otimes \epsilon_x^q\big)_{x\in{\bm z}}\in \h^{\hat{\bm{z}}}$
and the coefficients $k_{p+q}^x$ are determined from $\omega$, see \eqref{eqn:omega explicit}.
(Our convention is that $k_{p+q}^x =0$ for all $p+q>n_x-1$.) 

The following result is the generalization of Proposition \ref{propo:bdy gauge violation}
to the case of higher-order poles.
\begin{proposition}\label{propo:bdy gauge violation higher}
Under a gauge transformation $(g,\gamma) : (A,B) \to {}^{(g,\gamma)}(A,B)$, 
with $g\in C^\infty(X, G)$ and $\gamma\in\overline{\Omega}^1(X,\h)$, the 
regularized action \eqref{eqn:5d CS action higher} transforms as
\begin{flalign}
\nonumber S_\omega\big({}^{(g,\gamma)}(A,B)\big) &\,=\,  S_{\omega}(A,B) + \frac{1}{2} \int_M
\Big(\ipp{\bm{j}^\ast(g)\, \bm{j}^\ast(A)\, \bm{j}^\ast(g)^{-1}}{F_M\big({\bm{j}^\ast (\gamma)}\big)}_{\omega}\\ 
\nonumber &\quad + \ipp{\bm{j}^\ast\big(t_\ast(\gamma)\big)}{\dd_M \bm{j}^\ast (\gamma) +\tfrac{1}{3}\big[\bm{j}^\ast (\gamma),\bm{j}^\ast (\gamma)\big]}_\omega\\
&\quad - \ipp{\dd_M \bm{j}^\ast (g) \,\bm{j}^\ast (g)^{-1}+\bm{j}^\ast\big(t_\ast(\gamma)\big)}{ \bm{j}^\ast\big(\alpha_\ast(g,B)\big)+F_M\big({\bm{j}^\ast (\gamma)}\big)}_\omega\Big)\quad.
\end{flalign}
\end{proposition}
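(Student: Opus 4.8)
The plan is to derive this exactly as in the simple-pole case of Proposition~\ref{propo:bdy gauge violation}, reducing the higher-order pole statement to that one via the regularization construction. The starting point is Proposition~\ref{propo:2CS 4form gauge}, applied with $\dd$ replaced throughout by the quotient differential $\overline{\dd}$: it gives
\begin{flalign}
\CS\big({}^{(g,\gamma)}(A,B)\big)\,=\,\CS(A,B)\,-\,\tfrac{1}{2}\,\overline{\dd}\,\Theta\quad,
\end{flalign}
where $\Theta\in\overline{\Omega}^3(X)$ is the explicit sum of three pairing $3$-forms appearing in the parentheses of that proposition. Inserting this into the regularized action \eqref{eqn:5d CS action higher} and using that the holomorphic jet prolongation $j^\ast_X$ commutes with $\overline{\dd}$ (mixed partials in $z$ commute, and $\partial_z^p$ commutes with $\dd_M$), one obtains $S_\omega\big({}^{(g,\gamma)}(A,B)\big)=S_\omega(A,B)-\tfrac{\ii}{4\pi}\int_X\big(\omega\wedge j^\ast_X(\overline{\dd}\Theta)\big)_{\mathrm{reg}}$, and by the very definition of the regularized form this correction is the finite sum $-\tfrac{\ii}{4\pi}\sum_{x\in\bm{z}}\sum_{p=0}^{n_x-1}\int_X \tfrac{k^x_p}{z-x}\,\dd z\wedge\overline{\dd}\big(\tfrac{1}{p!}\partial^p_z\Theta\big)$.

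Each summand is now of the \emph{simple-pole} type, so it can be treated exactly by the Cauchy--Pompeiu argument used to prove Proposition~\ref{propo:bdy gauge violation}, see \cite[Lemma~2.2]{Benini:2020skc}: the $\dd_M$-component of $\overline{\dd}$ drops out because $\tfrac{k^x_p}{z-x}\,\dd z\wedge\dd_M\big(\tfrac{1}{p!}\partial^p_z\Theta\big)$ is $\dd_M$-exact and $M=\bbR^3$ with spacelike-compact support data, while the $\overline{\partial}$-component produces, via $\partial_{\bar z}\tfrac{1}{z-x}\propto\delta^{(2)}(\,\cdot\,-x)$, a factor $2\pi\ii$ and an evaluation at $z=x$. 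Concretely one finds $\int_X \tfrac{k^x_p}{z-x}\,\dd z\wedge\overline{\dd}\big(\tfrac{1}{p!}\partial^p_z\Theta\big)=2\pi\ii\,k^x_p\int_M\tfrac{1}{p!}\,\iota^\ast_x\big(\partial^p_z\Theta\big)$, so that the correction term becomes $\tfrac{1}{2}\sum_{x}\sum_{p}k^x_p\int_M\tfrac{1}{p!}\,\iota^\ast_x\big(\partial^p_z\Theta\big)$, the prefactor $-\tfrac{\ii}{4\pi}\cdot 2\pi\ii=\tfrac{1}{2}$ being exactly the one in the statement.

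It then remains to recognize this as $\tfrac{1}{2}\int_M$ of the claimed expression. Since $\Theta$ is a sum of terms of the form $\ip{\mathsf{X}}{\mathsf{Y}}$ with $\mathsf{X}$ a $\g$-valued and $\mathsf{Y}$ an $\h$-valued form built algebraically from $A,B,g,\gamma$, the Leibniz rule gives $\tfrac{1}{p!}\iota^\ast_x\big(\partial^p_z\ip{\mathsf{X}}{\mathsf{Y}}\big)=\sum_{a+b=p}\ip{\tfrac{1}{a!}\iota^\ast_x(\partial^a_z\mathsf{X})}{\tfrac{1}{b!}\iota^\ast_x(\partial^b_z\mathsf{Y})}$, and therefore $\sum_{x,p}k^x_p\,\tfrac{1}{p!}\iota^\ast_x(\partial^p_z\ip{\mathsf{X}}{\mathsf{Y}})=\sum_x\sum_{a,b}k^x_{a+b}\ip{(\bm{j}^\ast\mathsf{X})^x_a}{(\bm{j}^\ast\mathsf{Y})^x_b}=\ipp{\bm{j}^\ast\mathsf{X}}{\bm{j}^\ast\mathsf{Y}}_\omega$ by the definition \eqref{eqn:defectpairing higher} of the defect pairing and the explicit form \eqref{eqn:jpull maps explicit} of $\bm{j}^\ast$. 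Finally, because $\bm{j}^\ast$ is multiplicative on $G$-valued functions and a morphism of differential graded algebras whose image has legs only along $M$ (the formal thickening $\ell\mathcal{T}^{n_x}_x$ has no $\dd\bar z$-direction), one has $\bm{j}^\ast(gAg^{-1})=\bm{j}^\ast(g)\,\bm{j}^\ast(A)\,\bm{j}^\ast(g)^{-1}$, $\bm{j}^\ast(\overline{\dd}g\,g^{-1})=\dd_M\bm{j}^\ast(g)\,\bm{j}^\ast(g)^{-1}$, $\bm{j}^\ast\big(F(\gamma)\big)=F_M\big(\bm{j}^\ast(\gamma)\big)$, and $\bm{j}^\ast\big(\alpha_\ast(g,B)\big)=\alpha_\ast\big(\bm{j}^\ast(g),\bm{j}^\ast(B)\big)$; applying these to the three summands of $\Theta$ reproduces the three terms in the statement.

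I expect the only genuine work to be the second paragraph, i.e.\ the regularized Stokes/Cauchy--Pompeiu step localizing each simple-pole summand onto the defect. This is, however, not new: the dependence on the Riemann-surface factor $C$ is literally the same as in $4$-dimensional semi-holomorphic Chern--Simons theory, with the Chern--Simons $3$-form replaced by the $2$-Chern--Simons $4$-form, which is inert in the $z$-direction, so this step can be taken over from \cite[Section~3]{Benini:2020skc} after invoking \cite{Li:2020ljm}. The remaining care is pure bookkeeping: the sign from the leading minus in Proposition~\ref{propo:2CS 4form gauge}, the splitting $\overline{\dd}=\dd_M+\overline{\partial}$, and the binomial identity that collapses the double sum into $\ipp{\,\cdot\,}{\,\cdot\,}_\omega$.
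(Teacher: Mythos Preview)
Your proposal is correct and follows essentially the same approach as the paper: the paper's own proof is a one-line citation of Proposition~\ref{propo:2CS 4form gauge} together with \cite[Lemma~3.3 and Proposition~3.4]{Benini:2020skc}, and what you have written is precisely an unpacking of those cited steps (the regularized Cauchy--Pompeiu localization followed by the Leibniz-rule identification with the jet pairing $\ipp{\,\cdot\,}{\,\cdot\,}_\omega$).
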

\begin{proof}
This is a direct consequence of Proposition \ref{propo:2CS 4form gauge}
and the same arguments as in \cite[Lemma 3.3 and Proposition 3.4]{Benini:2020skc}.
\end{proof}

To impose boundary conditions, we choose an isotropic crossed submodule
\begin{flalign}
(G^\diamond,H^\diamond,t^{\hat{\bm z}},\alpha^{\hat{\bm z}})\,\subseteq\, (G^{\hat{\bm z}},H^{\hat{\bm z}},t^{\hat{\bm z}},\alpha^{\hat{\bm z}})
\end{flalign}
with respect to the pairing \eqref{eqn:defectpairing higher},
where as a consequence of \eqref{eqn:jet manifold} the ambient crossed module consists of products of jet groups
\begin{flalign}\label{eqn:jetgroups}
G^{\hat{{\bm z}}}\,=\, \prod_{x\in{\bm z}} C^\infty(\ell \mathcal{T}_x^{n_x},G)\quad,\qquad
H^{\hat{{\bm z}}}\,=\, \prod_{x\in{\bm z}} C^\infty(\ell \mathcal{T}_x^{n_x},H)\quad,
\end{flalign}
see also \cite{Vizman} for a more explicit description of such jet groups.
The construction of the groupoid $\mathcal{F}^\diamond_{\mathrm{ho}}$ of 
boundary conditioned fields with edge modes from Proposition \ref{propo:homotopy pullback} generalizes in 
the evident way: One simply replaces the crossed module
$(G^{{\bm z}},H^{{\bm z}},t^{\bm{z}},\alpha^{\bm{z}})$ of product groups
by the crossed module $(G^{\hat{\bm z}},H^{\hat{\bm z}},t^{\hat{\bm z}},\alpha^{\hat{\bm z}})$
of products of jet groups, and further replaces 
the pullback maps ${\bm \iota}^\ast$ by the maps
${\bm j}^\ast$ from \eqref{eqn:jpull maps}.
Following the same steps as in Section \ref{sec:simplepoles}, one then 
arrives at the extended action
\begin{flalign}\label{eqn:extended action higher}
&S^\mathrm{ext}_\omega\big((A,B),(k,\kappa)\big)\, =\, \frac{\ii}{2 \pi}\,\, \dashint_X \omega\wedge \ip{\overline{F}(A) - \tfrac{1}{2} t_\ast(B)}{B}\\
\nonumber  &\quad +\frac{1}{2} \int_M
\Big(\ipp{{}^{(k,\kappa)}\bm{j}^\ast(A)}{\alpha^{\hat{\bm z}}_\ast\big({}^{(k,\kappa)}\bm{j}^\ast(A),\kappa\big) + 2 F_M(\kappa)}_{\omega} + \ipp{t^{\hat{\bm z}}_\ast(\kappa)}{\dd_M \kappa +\tfrac{1}{3} [\kappa,\kappa]}_\omega\Big)\quad,
\end{flalign}
which generalizes \eqref{eqn:extended action simple} to the higher-order pole case.
Restricting as in \eqref{eqn:Mlegconnections} to the full subgroupoid 
\begin{flalign}\label{eqn:Mlegconnectionshigher}
\mathcal{F}^{\diamond,0}_{\mathrm{ho}}\,\subseteq\, \mathcal{F}^\diamond_{\mathrm{ho}}
\end{flalign}
whose objects $\big((A,B),(k,\kappa)\big)$ are such that the connection
$(A,B)\in \overline{\Omega}^{1,0}(X,\g)\times \overline{\Omega}^{2,0}(X,\h)$
does not have legs along $\dd \overline{z}$,
we obtain as in \eqref{eqn:extended action simple Mlegs} a further simplification
\begin{flalign}\label{eqn:extended action higher Mlegs}
 &S^\mathrm{ext}_\omega\big((A,B),(k,\kappa)\big)\, =\, \frac{\ii}{2 \pi} \,\, \dashint_X \omega\wedge \ip{\overline{\partial}A}{B}\\
\nonumber &\quad  + \frac{1}{2} \int_M
\Big(\ipp{{}^{(k,\kappa)}\bm{j}^\ast(A)}{\alpha^{\hat{\bm z}}_\ast\big({}^{(k,\kappa)}\bm{j}^\ast(A),\kappa\big) + 2 F_M(\kappa)}_{\omega} + \ipp{t^{\hat{\bm z}}_\ast(\kappa)}{\dd_M \kappa +\tfrac{1}{3} [\kappa,\kappa]}_\omega\Big)\quad.
\end{flalign}
In complete analogy to Proposition \ref{propo:EOM}, one can work out the
variation of this action, which 
yields the bulk equations of motion on $X$ 
\begin{flalign}\label{eqn:bulk EOM higher}
\omega\wedge \overline{\partial} B \,=\,0\quad,\qquad
\omega\wedge \overline{\partial} A\,=\,0\quad,
\end{flalign}
and the defect equations of motion on $M$
\begin{flalign}\label{eqn:defect EOM higher}
{\bm j}^\ast\big(\dd_M B + \alpha_\ast(A,B)\big)\,=\ 0\quad,\qquad
{\bm j}^\ast\big(F_M(A) - t_\ast(B)\big)\,=\ 0\quad,
\end{flalign}
where we again use that under the pullback $\bm{j}^\ast$ only the $\dd_M$ component 
of the differential $\overline{\dd} =\dd_M + \overline{\partial}$ survives.


\section{\label{sec:integrable}Construction of $3d$ integrable field theories}
With our preparations from Sections \ref{sec:simplepoles} and \ref{sec:higherpoles},
we are now ready to construct $3$-dimensional integrable field theories on $M$.
The key observation which makes this endeavor possible is that the bulk equations of motion
\eqref{eqn:bulk EOM higher} imply that the connection 
$(A,B)\in \overline{\Omega}^{1,0}(X,\g)\times \overline{\Omega}^{2,0}(X,\h)$
is holomorphic away from the zeros $\bm{\zeta}\subset \CP$ of the meromorphic
$1$-form $\omega$ and that the defect equations of motion
\eqref{eqn:defect EOM higher} are the pullback to the defect of the $M$-relative 
flatness condition for the connection $(A,B)$. These are almost the properties which one requires for
a higher Lax connection. However, as in the context of $4$-dimensional semi-holomorphic Chern-Simons theory, see
\cite[Section 5]{Benini:2020skc}, the following crucial points need further attention:
\begin{enumerate}
\item To qualify as a Lax connection, the connection
$(A,B)\in \overline{\Omega}^{1,0}(X,\g)\times \overline{\Omega}^{2,0}(X,\h)$ must 
not only be holomorphic away from the zeros of $\omega$, which is
implied by the bulk equations of motion \eqref{eqn:bulk EOM higher}, but it 
further must be meromorphic on all of $\CP$.

\item The flatness conditions implied by the defect equations 
of motion \eqref{eqn:defect EOM higher} must lift along $\bm{j}^\ast$
to the $M$-relative flatness conditions $\dd_M B + \alpha_\ast(A,B) =0$
and $F_M(A) - t_\ast(B) = \dd_M A+\tfrac{1}{2}[A,A] - t_\ast(B)=0$ on $X$.

\item The boundary conditions ${}^{(k,\kappa)}\bm{j}^\ast(A,B) \in \Omega^1(M,\g^\diamond)
\times \Omega^2(M,\h^\diamond)$ should admit a unique solution for
the Lax connection $(A,B)\in \overline{\Omega}^{1,0}(X,\g)\times \overline{\Omega}^{2,0}(X,\h)$
in terms of the edge mode fields $(k,\kappa)\in C^\infty(M,G^{\hat{\bm{z}}})\times \Omega^1(M,\h^{\hat{\bm{z}}})$,
making the latter the only degrees of freedom of the $3$-dimensional integrable field theory on $M$.
\end{enumerate}

Although the uniqueness assumption in the third point is not strictly necessary,
it holds in the vast majority of examples in the context of $4$-dimensional semi-holomorphic Chern-Simons theory,
see for instance \cite{Lacroix:2020flf}. We will thus focus in the present work on the situation where the boundary
conditions admit a unique solution and come back to this in Section \ref{subsec:degreecounting} below.

The first two issues can be successfully solved by 
considering a special class of solutions $(A,B)\in \overline{\Omega}^{1,0}(X,\g)\times \overline{\Omega}^{2,0}(X,\h)$
of the bulk equations of motion \eqref{eqn:bulk EOM higher} 
which have a specific behavior towards the zeros of $\omega$.
The following definition originated in \cite[Section 5]{Benini:2020skc}.
\begin{definition}\label{def:admissible}
\begin{itemize}
\item[(a)] Let $V$ be a vector space and consider
the vector space $\overline{\Omega}^{q,0}(X,V)$ of $V$-valued $(q,0)$-forms on the product
manifold $X=M\times C$. We define $\overline{\Omega}^{q,0}_{\omega}(X,V)\subseteq \overline{\Omega}^{q,0}(X,V)$
as the subspace consisting of all $V$-valued $(q,0)$-forms which are meromorphic
on $\CP$ with poles at each zero $y\in\bm{\zeta}$ of $\omega$ 
of order at most that of the zero.

\item[(b)] A connection $(A,B)\in \overline{\Omega}_{\omega}^{1,0}(X,\g)\times 
\overline{\Omega}_{\omega}^{2,0}(X,\h)$ is called \textit{admissible} 
if its $M$-relative curvature
\begin{flalign}
\mathsf{curv}_M(A,B)\,:=\,\Big(F_M(A) - t_\ast(B),\, \dd_M B + \alpha_\ast(A,B)\Big)\,\in\,
\overline{\Omega}^{2,0}_{\omega}(X,\g)\times \overline{\Omega}^{3,0}_{\omega}(X,\h)
\end{flalign}
lies in the subspaces from item (a).
\end{itemize}
\end{definition}

The following key result has been proven in \cite[Lemma 5.5 and Proposition 5.6]{Benini:2020skc}.
\begin{proposition}\label{prop:admissible}
Let $\big((A,B),(k,\kappa)\big)\in \mathcal{F}^{\diamond,0}_{\mathrm{ho}}$ be
any object in the groupoid \eqref{eqn:Mlegconnectionshigher} such 
that $(A,B)\in \overline{\Omega}_{\omega}^{1,0}(X,\g)\times 
\overline{\Omega}_{\omega}^{2,0}(X,\h)$ is an admissible connection.
Then $\big((A,B),(k,\kappa)\big)$ solves the bulk equations of motion \eqref{eqn:bulk EOM higher} 
and the defect equations of motion \eqref{eqn:defect EOM higher} are equivalent
to the $M$-relative flatness conditions
\begin{flalign}
\dd_M B + \alpha_\ast(A,B)\,=\,0 \quad,\qquad F_M(A) - t_\ast(B)\,=\,0
\end{flalign}
on the $5$-dimensional manifold $X$.
\end{proposition}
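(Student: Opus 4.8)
The plan is to split the statement into its bulk part, which holds for \emph{any} admissible connection and needs nothing beyond the definitions, and its defect part, whose nontrivial direction I would reduce to a divisor count on $\CP$; the whole argument runs parallel to \cite[Lemma 5.5 and Proposition 5.6]{Benini:2020skc}.

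For the bulk equations \eqref{eqn:bulk EOM higher} I would first unpack Definition \ref{def:admissible}: saying that $(A,B)\in\overline{\Omega}^{1,0}_\omega(X,\g)\times\overline{\Omega}^{2,0}_\omega(X,\h)$ means exactly that the coefficient functions of $A$ and $B$, in the global Cartesian coordinates on $M=\bbR^3$ and with respect to constant frames of $\g$ and $\h$, extend meromorphically to $\CP$ with poles only at the zeros $\bm\zeta$ of $\omega$. Since $C=\CP\setminus\bm\zeta$ has those points removed, the restrictions of $A$ and $B$ to $X=M\times C$ are holomorphic along the $C$-factor, i.e.\ $\overline{\partial}A=0$ and $\overline{\partial}B=0$ identically on $X$; a fortiori $\omega\wedge\overline{\partial}A=\omega\wedge\overline{\partial}B=0$, which is \eqref{eqn:bulk EOM higher}.

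For the equivalence with the $M$-relative flatness conditions, the direction ``flat on $X$ $\Rightarrow$ \eqref{eqn:defect EOM higher}'' is immediate: one applies $\bm{j}^\ast$ to the identity $\mathsf{curv}_M(A,B)=0$. The substantive direction is the converse, where I would argue as follows. Write $F:=F_M(A)-t_\ast(B)$ and $G:=\dd_M B+\alpha_\ast(A,B)$; admissibility is precisely the statement that $F\in\overline{\Omega}^{2,0}_\omega(X,\g)$ and $G\in\overline{\Omega}^{3,0}_\omega(X,\h)$, so $F$ and $G$ extend meromorphically to $\CP$ with a pole of order at most $\mathrm{ord}_y(\omega)$ at each $y\in\bm\zeta$ and no other poles. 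By the explicit formula \eqref{eqn:jpull maps explicit} for $\bm{j}^\ast$, the defect equations \eqref{eqn:defect EOM higher} say that $\iota_x^\ast(\partial_z^p F)=0=\iota_x^\ast(\partial_z^p G)$ for every $x\in\bm z$ and $0\le p\le n_x-1$, i.e.\ that $F$ and $G$ vanish to order at least $n_x$ in the variable $z$ at each pole $x$ of $\omega$. Fixing a point $m\in M$ and a scalar component $f(m,\cdot)$ of $F$ or of $G$, this is a meromorphic function on $\CP$ whose divisor obeys $\mathrm{div}\big(f(m,\cdot)\big)\ge \sum_{x\in\bm z} n_x\,[x]-\sum_{y\in\bm\zeta}\mathrm{ord}_y(\omega)\,[y]=-\mathrm{div}(\omega)$, and since $\deg\mathrm{div}(\omega)=-2$ (the canonical degree of $\CP$) the right-hand side has degree $+2$. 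A nonzero meromorphic function on $\CP$ has a principal divisor of degree $0$, so this forces $f(m,\cdot)\equiv 0$; letting $m$ vary over $M$ gives $F=0$ and $G=0$ on all of $X$, which are the asserted $M$-relative flatness conditions.

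I expect the one step requiring genuine care to be the passage from the jet condition to the order-of-vanishing bound and then the divisor count: one must verify that the $(n_x-1)$-jet truncation encoded in $\bm{j}^\ast$ really forces $\mathrm{ord}_x\ge n_x$ for every scalar component, and that it is exactly the admissibility bound on $\mathsf{curv}_M(A,B)$ (poles no worse than the zeros of $\omega$) that makes $\mathrm{div}(f(m,\cdot))\ge -\mathrm{div}(\omega)$ incompatible with $f(m,\cdot)\neq 0$ — this is the whole point of Definition \ref{def:admissible}. No structure of the crossed module beyond what is needed to write down $\mathsf{curv}_M$, and nothing about the edge modes $(k,\kappa)$, enters the argument.
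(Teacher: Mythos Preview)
Your argument is correct and is essentially the approach the paper has in mind: the paper does not spell out a proof but simply cites \cite[Lemma 5.5 and Proposition 5.6]{Benini:2020skc}, and the divisor-count you give (vanishing to order $n_x$ at the poles of $\omega$ combined with poles of order at most $\mathrm{ord}_y(\omega)$ at the zeros, forcing $\mathrm{div}(f)\ge -\mathrm{div}(\omega)$ of degree $+2$) is precisely the content of those results transported to the present Lie-$2$-algebra valued setting. Your identification that nothing about $(k,\kappa)$ or the crossed-module structure beyond the definition of $\mathsf{curv}_M$ is needed is also correct.
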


\subsection{\label{subsec:degreecounting} A notion of maximality for isotropic crossed submodules}
In the setting of $4$-dimensional semi-holomorphic Chern-Simons theory, a powerful approach
to solving the boundary conditions for the gauge field in terms of the edge modes was
developed in \cite{Lacroix:2020flf}. An important requirement in the construction of \cite{Lacroix:2020flf}
is that the isotropic subalgebra entering the boundary conditions is \emph{maximal},
namely that its dimension should be half that of the defect Lie algebra.
A sufficient additional condition was then identified for the existence of a \emph{unique} solution to these
boundary conditions, see \cite[Lemma 2.2 and Section 4.4]{Lacroix:2020flf}.

Of course, it would be very desirable to generalize the full construction of \cite{Lacroix:2020flf} to the present
$3$-dimensional setting. Our more modest goal in this section is to identify a suitable notion
of \emph{maximality} for the isotropic crossed submodule
$(G^\diamond,H^\diamond,t^{\hat{\bm z}},\alpha^{\hat{\bm z}})\subseteq
(G^{\hat{\bm{z}}},H^{\hat{\bm{z}}},t^{\hat{\bm z}},\alpha^{\hat{\bm z}})$, for which it can be expected that
the boundary conditions admit a unique solution for the Lax connection
$(A,B)\in \overline{\Omega}^{1,0}(X,\g)\times \overline{\Omega}^{2,0}(X,\h)$
in terms of the edge modes $(k,\kappa)\in C^\infty(M,G^{\hat{\bm{z}}})\times \Omega^1(M,\h^{\hat{\bm{z}}})$.
Specifically, guided by the $4$-dimensional semi-holomorphic Chern-Simons case as an analogy, we
will make a particular choice of ansatz for the meromorphic connection $(A,B)$ and give a simple counting
argument to determine the expected dimensions of $\g^\diamond$ and $\h^\diamond$ for the unique solvability
of the boundary conditions. In the next subsections we will then present concrete examples
which fit within this proposed scheme.
For our argument we assume the following specific properties
of the meromorphic $1$-form $\omega\in\Omega^1(\CP)$.
\begin{assumption}\label{assu:divisible}
The meromorphic $1$-form $\omega\in\Omega^1(\CP)$ 
from \eqref{eqn:omega explicit} only has simple zeros and
its total number of poles (counting multiplicities) is divisible by $3$, i.e.\
\begin{flalign}
\vert \bm{z}\vert \,=\,\sum_{x\in\bm{z}}n_x\,\in\, 3\,\mathbb{Z}_{\geq 1}\quad.
\end{flalign}
\end{assumption}

Let us pick any zero $y_0\in\bm{\zeta}$ of $\omega$ and denote by
$\bm{\zeta}^\prime := \bm{\zeta}\setminus \{y_0\}$ the set of the remaining zeros.
As a consequence of Assumption \ref{assu:divisible}
and the identity $\vert\bm{z}\vert = \vert\bm{\zeta}\vert+2$
for the total numbers of poles and zeros of any meromorphic $1$-form on the Riemann sphere $\CP$,
we observe that $\vert \bm{\zeta}^\prime\vert \in 3\,\mathbb{Z}_{\geq 0}$
is either $0$ (in the case where $\omega$ has only one zero) 
or divisible by $3$. This allows us to choose a decomposition
\begin{flalign}
\bm{\zeta}^\prime\,=\,\bm{\zeta}_1 \sqcup \bm{\zeta}_2\sqcup \bm{\zeta}_3
\end{flalign}
into three subsets of the same size $\vert\bm{\zeta}_1\vert = \vert\bm{\zeta}_2\vert =\vert\bm{\zeta}_3\vert $.
The labels $1,2,3$ will correspond to a choice of coordinates $u_i$, for $i=1,2,3$,
on the $3$-dimensional spacetime  $M=\mathbb{R}^3$. 

Using the above choices, we consider the following ansatz 
\begin{subequations}\label{eqn:admissibleansatz}
\begin{flalign}
A \,&=\, \sum_{i=1}^3 \bigg(A_c^i + \sum_{y\in \bm{\zeta}_i}\frac{A_y^i}{z-y}\bigg)\,\dd u_i\,\in\,
\overline{\Omega}_{\omega}^{1,0}(X,\g)\quad,\\
B \,&=\, \sum_{i,j=1}^3\bigg(B_c^{ij} + \sum_{y\in \bm{\zeta}_i\sqcup \bm{\zeta}_j\sqcup\{y_0\}}\frac{B_y^{ij}}{z-y}\bigg)\,\dd u_i\wedge \dd u_j\,\in\, \overline{\Omega}_{\omega}^{2,0}(X,\h)\
\end{flalign}
\end{subequations}
for a connection $(A,B)\in \overline{\Omega}_{\omega}^{1,0}(X,\g)\times 
\overline{\Omega}_{\omega}^{2,0}(X,\h)$,
where $A^i_c,A^i_y \in C^\infty(M,\g)$ and $B^{ij}_c,B^{ij}_y\in C^\infty(M,\h)$ 
are arbitrary coefficient functions depending only on $M$.
Note that the connection \eqref{eqn:admissibleansatz} 
is admissible in the sense of Definition \ref{def:admissible}.
The number of independent degrees of freedom of this connection, which we
count as scalars on $M$, is given by
\begin{subequations}\label{eqn:DOFcount}
\begin{flalign}
\mathrm{dof}(A) \,&=\, \sum_{i=1}^3\big(\vert \bm{\zeta}_i\vert +1\big)\times \dim(\g)\,=\,\big(\vert\bm{\zeta}\vert + 2\big)\times\dim(\g)\quad,\\
\mathrm{dof}(B) \,&=\,\sum_{i>j}\big(\vert \bm{\zeta}_i \vert+ \vert \bm{\zeta}_j\vert + 2\big)\times\dim(\h)\,=\,
2\times \big(\vert\bm{\zeta}\vert+2\big)\times\dim(\h)\quad.
\end{flalign}
\end{subequations}
These degrees of freedom are constrained by the boundary conditions
\begin{subequations}\label{eqn:bdyconditionsj}
\begin{flalign}
{}^{(k,\kappa)}\bm{j}^\ast(A) \,&=\, k\, {\bm j}^\ast(A)\,k^{-1} -\dd_M k\, k^{-1} - t^{\hat{\bm z}}_\ast(\kappa)\,\in\,\Omega^1(M,\g^\diamond)\quad,\\
{}^{(k,\kappa)}\bm{j}^\ast(B)\,&=\,\alpha^{\hat{\bm z}}_\ast\big(k,{\bm j}^\ast(B)\big) -F_M(\kappa) - 
\alpha^{\hat{\bm z}}_\ast\big({}^{(k,\kappa)}\bm{j}^\ast(A),\kappa\big)  \,\in\,\Omega^2(M,\h^\diamond)\quad.
\end{flalign}
\end{subequations}
Counting the number of boundary conditions (again as scalars on $M$), one finds
\begin{subequations}\label{eqn:bdycount}
\begin{flalign}
\mathrm{bdy}(A) \,&=\, 3\times \big(\dim(\g^{\hat{\bm z}}) - \dim(\g^\diamond)\big)\,=\, 3\times \big(\vert\bm{z}\vert \times \dim(\g) - \dim(\g^\diamond)\big) \quad,\\
\mathrm{bdy}(B) \,&=\, 3\times \big(\dim(\h^{\hat{\bm z}}) - \dim(\h^\diamond)\big)\,=\, 3\times \big(\vert \bm{z}\vert\times \dim(\h) - \dim(\h^\diamond)\big)\quad.
\end{flalign}
\end{subequations}
For the unique solvability for $(A,B)$ of the boundary conditions \eqref{eqn:bdyconditionsj} 
one requires that there are as many boundary conditions as there are degrees of freedom, i.e.\
$\mathrm{dof}(A) = \mathrm{bdy}(A) $ and $\mathrm{dof}(B) = \mathrm{bdy}(B)$.
Using also the identity $\vert\bm{z}\vert = \vert\bm{\zeta}\vert +2$ we conclude that a necessary condition
for the unique solvability of \eqref{eqn:bdyconditionsj}, assuming the ansatz \eqref{eqn:admissibleansatz}, is given by
\begin{flalign}\label{eqn:solve conditions}
\dim(\g^\diamond) \,=\,\tfrac{2}{3} \dim(\g^{\hat{\bm{z}}}) \quad,\qquad
\dim(\h^\diamond) \,=\,\tfrac{1}{3} \dim(\h^{\hat{\bm{z}}}) \,=\,\tfrac{1}{3} \dim(\g^{\hat{\bm{z}}})\quad,
\end{flalign}
where in the last step we used that $\dim(\h) = \dim(\g)$ as a consequence of the non-degenerate
pairing \eqref{eqn: pairing}. The condition \eqref{eqn:solve conditions} implies in particular
that the isotropic crossed submodule
$(G^\diamond,H^\diamond,t^{\hat{\bm z}},\alpha^{\hat{\bm z}})\subseteq 
(G^{\hat{\bm{z}}},H^{\hat{\bm{z}}},t^{\hat{\bm z}},\alpha^{\hat{\bm z}})$
must be \textit{maximal} in the sense that its total dimension
$\dim(\g^\diamond) + \dim(\h^\diamond) = \dim(\g^{\hat{\bm{z}}})$ is half
of the total dimension $\dim(\g^{\hat{\bm{z}}}) + \dim(\h^{\hat{\bm{z}}}) = 2\,\dim(\g^{\hat{\bm{z}}})$
of the ambient crossed module.

\subsection{\label{subsec:ChernSimons}Toy-example: $3$-dimensional Chern-Simons theory}
In this subsection we show how one can recover the usual $3$-dimensional Chern-Simons
theory as a defect theory of our $5$-dimensional semi-holomorphic $2$-Chern-Simons theory.
For this we consider the crossed module of Lie groups
$(G,G,\id,\Ad)$ with $t =\id : G\to G\,,~g\mapsto g$ the identity map
and $\alpha = \Ad : G\times G\to G\,,~(g,g^\prime)\mapsto g\,g^\prime\,g^{-1}$ the adjoint action.
The associated crossed module of Lie algebras is given by $(\g,\g,\id,\ad)$
with $\ad: \g\otimes\g \to \g\,,~(x,x^\prime)\mapsto [x,x^\prime]$ the Lie algebra adjoint action.
For the non-degenerate pairing in \eqref{eqn: pairing} 
we take any non-degenerate $\Ad$-invariant symmetric pairing 
$\ip{\,\cdot\,}{\,\cdot\,} :\g\otimes \g\to \mathbb{C} $ on the Lie algebra $\g$.
For the meromorphic $1$-form $\omega\in\Omega^1(\CP)$ we choose
\begin{flalign}
\omega\,=\, \frac{1-z}{z}\,\dd z\,=\, \frac{\dd z}{z} - \dd z\quad,
\end{flalign}
which has a simple zero at $z=1$,
a simple pole at $z=0$ and a double pole at $z=\infty$. 
Note that Assumption \ref{assu:divisible} is satisfied.
To avoid confusion, let us highlight that we choose for convenience in this and 
the next example a coordinate $z$ on $\CP$ in which $\infty$ is a pole of $\omega$,
while in Sections \ref{sec:simplepoles} and \ref{sec:higherpoles}
the coordinate was chosen such that $\infty$ corresponds to a zero of $\omega$.

The associated crossed module of jet groups \eqref{eqn:jetgroups}
is given by $(G^{\hat{\bm{z}}},G^{\hat{\bm{z}}},\id,\Ad)$
with
\begin{subequations}
\begin{flalign}
G^{\hat{\bm{z}}}\,=\, G\times \big(G\ltimes \tilde{\g}\big)\quad,
\end{flalign}
where the factor $G$ corresponds to the simple pole at $z=0$ and the
semi-direct product $G\ltimes \tilde{\g}$ corresponds to the double pole at $z=\infty$.
The notation $\tilde{\g}$ is used to distinguish between the Lie algebra $\g$ and the Abelian
Lie group $\tilde{\g}:=\g$ with group operation $+$ and identity element $0\in\g$.
The group structure reads explicitly as
\begin{flalign}
\big(g_0,(g_\infty,x_\infty)\big)\,\big(g_0^\prime,(g_\infty^\prime,x_\infty^\prime)\big) 
\,=\, \big(g_0\,g_0^\prime, 
\big(g_\infty\,g_\infty^\prime,\, x_\infty + g_\infty\,x_\infty^\prime\,g_\infty^{-1}\big)\big)\quad,
\end{flalign}
\end{subequations}
for all $\big(g_0,(g_\infty,x_\infty)\big),\big(g^\prime_0,(g^\prime_\infty,x^\prime_\infty)\big)\in 
G^{\hat{\bm{z}}}$, and the identity
element is $1_{G^{\hat{\bm{z}}}}=\big(1_G,(1_G,0)\big)$.
The corresponding crossed module of Lie algebras is given
by $(\g^{\hat{\bm{z}}},\g^{\hat{\bm{z}}},\id,\ad)$ with
\begin{subequations}
\begin{flalign}
\g^{\hat{\bm{z}}}\,=\, \g\times \big(\g\ltimes\g_{\mathrm{ab}}\big)\quad,
\end{flalign}
where $\g_{\mathrm{ab}}$ denotes the Abelian Lie algebra given by 
the vector space $\g$ and the trivial Lie bracket.
The Lie algebra structure reads explicitly as
\begin{flalign}
\big[\big(x_0,(x_\infty,y_\infty)\big),\big(x_0^\prime,(x_\infty^\prime,y_\infty^\prime)\big)\big]
\,=\, \big([x_0,x_0^\prime],\big([x_\infty,x_\infty^\prime], [x_\infty,y_\infty^\prime] + [y_\infty,x_\infty^\prime]\big)\big)\quad,
\end{flalign}
\end{subequations}
for all $\big(x_0,(x_\infty,y_\infty)\big),\big(x_0^\prime,(x_\infty^\prime,y_\infty^\prime)\big)\in \g^{\hat{\bm{z}}}$.

The pairing $\ipp{\,\cdot\,}{\,\cdot\,}_\omega: \g^{\hat{\bm{z}}}\otimes \g^{\hat{\bm{z}}}\to\mathbb{C}$ 
from \eqref{eqn:defectpairing higher} reads in the present example as
\begin{flalign}
\ipp{\big(x_0,(x_\infty,y_\infty)\big)}{\big(x_0^\prime,(x_\infty^\prime,y_\infty^\prime)\big)}_\omega
\,=\,\ip{x_0}{x_0^\prime} - \ip{x_\infty}{x_\infty^\prime} + \ip{x_\infty}{y_\infty^\prime} 
+ \ip{y_\infty}{x_\infty^\prime}\quad,
\end{flalign}
for all $\big(x_0,(x_\infty,y_\infty)\big),\big(x_0^\prime,(x_\infty^\prime,y_\infty^\prime)\big)\in \g^{\hat{\bm{z}}}$.
A possible choice for an isotropic crossed submodule 
$(G^\diamond,H^\diamond,\id,\Ad)\subseteq (G^{\hat{\bm{z}}},G^{\hat{\bm{z}}},\id,\Ad)$
is given by
\begin{flalign}\label{eqn:diamondEX1}
G^\diamond\,=\, G \times \big(\{1_G\}\ltimes \tilde{\g}\big)\quad,\qquad
H^\diamond\,=\, \{1_G\} \times \big(\{1_G\}\ltimes \tilde{\g}\big)\quad.
\end{flalign}
Note that this choice satisfies our maximality condition \eqref{eqn:solve conditions}.

The ansatz \eqref{eqn:admissibleansatz} for the Lax connection specializes in the present example to
\begin{subequations}\label{eqn:admissibleansatzEX1}
\begin{flalign}
A \,&=\,A_c\,=\, \sum_{i=1}^3 A_c^i\,\dd u_i\quad,\\
B \,&=\,  B_c + \frac{B_1}{z-1} \,=\, 
\sum_{i,j=1}^3\bigg(B_c^{ij} + \frac{B_1^{ij}}{z-1}\bigg)\,\dd u_i\wedge \dd u_j\quad.
\end{flalign}
\end{subequations}
Our goal is to determine the forms $A_c\in\Omega^1(M,\g)$ and $B_c,B_1\in\Omega^2(M,\g)$ 
by solving the boundary conditions \eqref{eqn:bdyconditionsj}. 
Let us consider a general edge mode field
$(k,\kappa)\in C^\infty(M,G^{\hat{\bm{z}}})\times \Omega^1(M,\g^{\hat{\bm{z}}})$
in this model, which we can write more explicitly as
\begin{flalign}
k\,=\,\big(k_0,(k_\infty,l_\infty)\big)\quad,\qquad \kappa = \big(\kappa_0,(\kappa_\infty,\lambda_\infty)\big)\quad.
\end{flalign}
This can be simplified considerably by using the gauge transformations in 
\eqref{eqn:edgemode gauge transformation}, with $\bm{\iota}$
replaced by $\bm{j}$ since we are in the context of higher-order poles,
in order to gauge fix the edge modes. For the transformation parameters 
$(g^\diamond,\gamma^\diamond)\in C^\infty(M,G^\diamond)\times \Omega^1(M,\h^\diamond)$
and $(g,\gamma)\in C^\infty(M,G)\times \Omega^1(M,\g)$ which are constant along $\CP$, 
the component $k$ transforms as
\begin{flalign}
\nonumber k^\prime\,&=\, g^\diamond\,k\,\bm{j}^\ast(g)^{-1}\,=\, \big(g_0^\diamond,(1_G,x_\infty^\diamond)\big)\,
\big(k_0,(k_\infty,l_\infty)\big)\, \big(g^{-1},(g^{-1},0)\big)\\
\, &=\,\big(g_0^\diamond\,k_0\,g^{-1},(k_\infty\,g^{-1}, x^\diamond_\infty + l_\infty)\big)\quad,
\end{flalign}
which becomes the identity $k^\prime = (1_G,(1_G,0))$ when choosing 
$g = k_\infty$, $g_0^\diamond = k_\infty\,k_0^{-1}$ and $x_\infty^\diamond = -l_\infty$.
This allows us to fix without loss of generality the gauge in which $k = 1_{G^{\hat{\bm{z}}}} 
=  (1_G,(1_G,0))$ is the identity. Under residual gauge transformations, which are characterized
by $g^\diamond = 1_{G^\diamond}$ and $g=1_G$, the component $\kappa$ transforms as
\begin{flalign}
\nonumber \kappa^\prime\,&=\, \gamma^\diamond + \kappa -\bm{j}^\ast(\gamma)
\,=\,\big(0,(0,\gamma_\infty^\diamond)\big) +  \big(\kappa_0,(\kappa_\infty,\lambda_\infty)\big)
- \big(\gamma,(\gamma,0)\big)\\
\,&=\, \big(\kappa_0-\gamma,(\kappa_\infty-\gamma, \gamma_\infty^\diamond + \lambda_\infty)\big)\quad,
\end{flalign}
which becomes $\kappa^\prime = (0,(\kappa_\infty-\kappa_0,0))$ when choosing 
$\gamma = \kappa_0$ and $\gamma^\diamond_\infty = -\lambda_\infty$.
Hence, the general form of the gauge fixed edge mode is 
\begin{flalign}
k \,=\, 1_{G^{\hat{\bm{z}}}} \,=\, (1_G,(1_G,0))\quad,\qquad \kappa \,=\, (0,(\kappa_\infty,0))\quad.
\end{flalign}
Working out the first boundary condition \eqref{eqn:bdyconditionsj} for the 
ansatz \eqref{eqn:admissibleansatzEX1} and the gauge fixed edge mode yields
\begin{subequations}
\begin{flalign}
{}^{(k,\kappa)}\bm{j}^\ast(A) \, =\,\bm{j}^\ast(A) - \kappa \,=\, \big(A_c,(A_c-\kappa_\infty,0)\big)\,\in\,
\Omega^{1}(M,\g^\diamond)\quad,
\end{flalign}
from which we deduce using also \eqref{eqn:diamondEX1} that $A_c=\kappa_\infty$. For the second boundary
condition \eqref{eqn:bdyconditionsj} we then find
\begin{flalign}
\nonumber {}^{(k,\kappa)}\bm{j}^\ast(B)\,&=\bm{j}^\ast(B) - F_M(\kappa) - 
\ad\big({}^{(k,\kappa)}\bm{j}^\ast(A),\kappa\big)\\
\,&=\,\big(B_c - B_1,\big(B_c-F_M(\kappa_\infty),B_1\big)\big)\,\in\, \Omega^{1}(M,\h^\diamond)\quad,
\end{flalign}
\end{subequations}
from which we deduce using also \eqref{eqn:diamondEX1} 
that $B_c = F_M(\kappa_\infty)$ and $B_1 = B_c = F_M(\kappa_\infty)$.
The candidate Lax connection for this model is thus given by
\begin{flalign}\label{eqn:LaxEX1}
(A,B) \,=\, \bigg(\kappa_\infty, \frac{z}{z-1}\,F_M(\kappa_\infty)\bigg)\,\in\,\overline{\Omega}^{(1,0)}_\omega(X,\g)\times \overline{\Omega}^{(2,0)}_\omega(X,\g)\quad.
\end{flalign}
Inserting this result into the action \eqref{eqn:extended action higher Mlegs} 
yields the defect action
\begin{flalign}\label{eqn:actionEX1}
S_M(\kappa_\infty) \,=\, - \int_M \ip{\kappa_\infty}{\tfrac{1}{2}\dd_M\kappa_\infty + \tfrac{1}{3!}[\kappa_\infty,\kappa_\infty]}\quad,
\end{flalign}
which coincides in this example with the usual $3$-dimensional Chern-Simons action for the edge mode 
$\kappa_\infty\in\Omega^1(M,\g)$. The equation of motion is the flatness condition
$F_M(\kappa_\infty)=0$, which implies that the Lax connection \eqref{eqn:LaxEX1}
is $M$-relative flat $\mathrm{curv}_M(A,B)=0$ when going on-shell.
We further observe that the defect action \eqref{eqn:actionEX1} is gauge-invariant
under additional gauge transformations which take the usual form 
${}^g\kappa_\infty = g\,\kappa_\infty\,g^{-1} - \dd_M g\,g^{-1}$,
for $g\in C^\infty(M,G)$. We expect that this is a remnant 
of the $2$-categorical nature of higher connections 
(see Remark \ref{rem:2gauge}), but since we currently 
do not know how to consistently include these aspects in our approach
we cannot give a precise argument or proof for this claim.

From the point of view of integrable field theory, the example 
determined by the action \eqref{eqn:actionEX1} and the corresponding 
Lax connection \eqref{eqn:LaxEX1} is somewhat trivial. When going on-shell, the Lax connection
simplifies to $(A,B)\vert_{\mathrm{on-shell}} = (\kappa_\infty,0)$, i.e.\ it does not
have any $z$-dependence. So the conserved charges, which one may construct
by taking holonomies of the Lax connection, are simply ordinary Wilson loops
for the flat Chern-Simons gauge field $\kappa_\infty$. The more involved example
which we will study in the next subsection will have more interesting integrability features.

\subsection{\label{subsec:Ward}Example: Ward equation}
In this subsection we derive and study an example of a $3$-dimensional
integrable field theory which is related to the Ward equation \cite{Wmodel1, Wmodel2}.
For this we consider the shifted tangent crossed module of Lie groups
$T[1]G := (G,\tilde{\g},1_G,\Ad)$, where we use again the notation $\tilde{\g}$ 
to distinguish between the Lie algebra $\g$ and the Abelian
Lie group $\tilde{\g}:=\g$ with group operation $+$ and identity element $0\in\g$.
The map $t = 1_G: \tilde{\g}\to G\,,~x\mapsto 1_G$ is constantly assigning the
identity element $1_G\in G$ and $\alpha = \Ad : G\times \tilde{\g} \to \tilde{\g}\,,~(g,x)\mapsto g\,x\,g^{-1}$
is the adjoint action. For the non-degenerate pairing \eqref{eqn: pairing}
we take any non-degenerate $\Ad$-invariant symmetric pairing on the Lie algebra $\g$.
For the meromorphic $1$-form $\omega\in\Omega^1(\CP)$ we choose
\begin{flalign}
\omega\,=\, \frac{z\,\prod_{i=1}^3 (z-a_i)}{(z-r)^2\,(z-s)^2}\,\dd z\,=\, 
\bigg(\frac{\ell^1_r}{(z-r)^2} + \frac{\ell^0_r}{z-r}+ \frac{\ell^1_s}{(z-s)^2}
+ \frac{\ell^0_s}{z-s}+1\bigg)\,\dd z\quad,
\end{flalign}
which has a four simple zeros at $z=0,a_1,a_2,a_3$
and three double poles at $z=r,s,\infty$,
hence Assumption \ref{assu:divisible} is satisfied.
As in the previous example of Subsection \ref{subsec:ChernSimons},
we use again a coordinate $z$ on $\CP$ in which $\infty$ is a pole of $\omega$.
(Note that the constants $\ell^1_r,\ell^0_r,\ell^1_s,\ell^0_s$
in the second expression are fixed in terms of the constants $a_1,a_2,a_3,r,s$
in the first one.)

The associated crossed module of jet groups \eqref{eqn:jetgroups}
is given by $(G^{\hat{\bm{z}}},\tilde{\g}^{\hat{\bm{z}}},1_{G^{\hat{\bm{z}}}},\Ad)$
with
\begin{subequations}
\begin{flalign}
G^{\hat{\bm{z}}}\,&=\, \big(G\ltimes \tilde{\g}\big)\times \big(G\ltimes \tilde{\g}\big)\times 
\big(G\ltimes \tilde{\g}\big)\quad,\\
\tilde{\g}^{\hat{\bm{z}}}\,&=\,\big(\tilde{\g}\times \tilde{\g}\big)\times\big(\tilde{\g}\times \tilde{\g}\big)\times \big(\tilde{\g}\times \tilde{\g}\big)\quad,
\end{flalign}
\end{subequations}
where the three factors correspond to the three double poles at $r,s,\infty$.
The group structures on the individual factors read explicitly as
\begin{flalign}
(g,x)\,(g^\prime,x^\prime)\,=\,\big(g\,g^\prime, x + g\,x^\prime\,g^{-1}\big)\quad,\qquad
(x,y)\,(x^\prime,y^\prime) \,=\, \big(x+x^\prime,y+y^\prime\big)\quad,
\end{flalign}
for all $(g,x),(g^\prime,x^\prime)
\in G\ltimes \tilde{\g}$ and $(x,y),(x^\prime,y^\prime)\in \tilde{\g}\times \tilde{\g}$, 
and the identity elements of the individual factors
are $1_{G\ltimes \tilde{\g}}=(1_G,0)$ and $1_{\tilde{\g}\times \tilde{\g}} = (0,0)$.
The action $\Ad: G^{\hat{\bm{z}}}\times \tilde{g}^{\hat{\bm{z}}}\to\tilde{g}^{\hat{\bm{z}}} $
reads on each factor as
\begin{flalign}
\Ad\big((g,x),(x^\prime,y^\prime)\big)\,=\,\big(g\,x^\prime\,g^{-1}, g\,y^\prime\,g^{-1} 
+ [x,g\,x^\prime\,g^{-1}]\big)\quad,
\end{flalign}
for all $(g,x)\in G\ltimes \tilde{\g}$ and $(x^\prime,y^\prime)\in \tilde{\g}\times \tilde{\g}$.

The corresponding crossed module of Lie algebras is given
by $(\g^{\hat{\bm{z}}},\g_{\mathrm{ab}}^{\hat{\bm{z}}},0,\ad)$ with
\begin{subequations}
\begin{flalign}
\g^{\hat{\bm{z}}}\,&=\, \big(\g\ltimes\g_{\mathrm{ab}}\big)\times \big(\g\ltimes\g_{\mathrm{ab}}\big)\times\big(\g\ltimes\g_{\mathrm{ab}}\big)\quad,\\
\g_{\mathrm{ab}}^{\hat{\bm{z}}}\,&=\, \big(\g_{\mathrm{ab}}\times \g_{\mathrm{ab}}\big)\times \big(\g_{\mathrm{ab}}\times \g_{\mathrm{ab}}\big)\times \big(\g_{\mathrm{ab}}\times \g_{\mathrm{ab}}\big)\quad.
\end{flalign}
\end{subequations}
The Lie algebra structure on $\g_{\mathrm{ab}}$ is the trivial one
and on the individual factors of $\g^{\hat{\bm{z}}}$ the Lie bracket reads as
\begin{flalign}
\big[(x,y),(x^\prime,y^\prime)\big]\,=\,\big([x,x^\prime], [x,y^\prime] + [y,x^\prime]\big)\quad,
\end{flalign}
for all $(x,y),(x^\prime,y^\prime)\in \g\ltimes\g_{\mathrm{ab}}$. The map $t^{\hat{\bm{z}}}_\ast =0$
is trivial and $\ad: \g^{\hat{\bm{z}}}\otimes\g_{\mathrm{ab}}^{\hat{\bm{z}}}\to \g_{\mathrm{ab}}^{\hat{\bm{z}}}$ 
is given on each factor by
\begin{flalign}
\ad\big((x,y),(x^\prime,y^\prime)\big) 
\,=\,\big([x,x^\prime], [x,y^\prime] + [y,x^\prime]\big)\quad,
\end{flalign}
for all $(x,y)\in \g\ltimes\g_{\mathrm{ab}}$ and $(x^\prime,y^\prime)\in \g_{\mathrm{ab}}\times \g_{\mathrm{ab}}$.

The pairing $\ipp{\,\cdot\,}{\,\cdot\,}_\omega: \g^{\hat{\bm{z}}}\otimes \g_{\mathrm{ab}}^{\hat{\bm{z}}}\to\mathbb{C}$ 
from \eqref{eqn:defectpairing higher} reads in the present example as
\begin{flalign}
\nonumber &\ipp{\big((x_r,y_r),(x_s,y_s),(x_\infty,y_\infty)\big)}{\big((x^\prime_r,y^\prime_r),(x^\prime_s,y^\prime_s),(x^\prime_\infty,y^\prime_\infty)\big)}_\omega\\
\nonumber &\qquad\qquad \,=\, \ell^0_r\,\ip{x_r}{x^\prime_r} + \ell^1_r\,\big(\ip{x_r}{y_r^\prime} + \ip{y_r}{x_r^\prime}\big)\\
\nonumber &\qquad\qquad \quad + \ell^0_s\,\ip{x_s}{x^\prime_s} + \ell^1_s\,\big(\ip{x_s}{y_s^\prime} + \ip{y_s}{x_s^\prime}\big)\\
&\qquad\qquad \quad -\big(\ell^0_r + \ell^0_s\big)\, \ip{x_\infty}{x^\prime_\infty} - 
\ip{x_\infty}{y^\prime_\infty}- \ip{y_\infty}{x^\prime_\infty}\quad,
\end{flalign}
for all $\big((x_r,y_r),(x_s,y_s),(x_\infty,y_\infty)\big)\in \g^{\hat{\bm{z}}}$
and $\big((x^\prime_r,y^\prime_r),(x^\prime_s,y^\prime_s),(x^\prime_\infty,y^\prime_\infty)\big)
\in \g_{\mathrm{ab}}^{\hat{\bm{z}}}$. A possible choice for an isotropic
crossed submodule $(G^{\diamond},H^{\diamond},1_{G^{\hat{\bm{z}}}},\Ad)\subseteq 
(G^{\hat{\bm{z}}},\tilde{\g}^{\hat{\bm{z}}},1_{G^{\hat{\bm{z}}}},\Ad)$
is given by
\begin{subequations}\label{eqn:diamondEX2}
\begin{flalign}
G^{\diamond}\,&=\,  \big(\{1_G\}\ltimes \tilde{\g}\big)\times \big(G\ltimes \tilde{\g}\big)\times 
\big(\{1_G\}\ltimes \tilde{\g}\big)\quad,\\
H^{\diamond}\,&=\, \big(\{0\}\times \tilde{\g}\big)\times \big(\{0\}\times \{0\}\big)\times 
\big(\{0\}\times \tilde{\g}\big)\quad.
\end{flalign}
\end{subequations}
Note that this choice satisfies our maximality condition \eqref{eqn:solve conditions}.

Choosing $z=0$ as the distinguished zero of $\omega$,
the ansatz \eqref{eqn:admissibleansatz} for the Lax connection specializes in the present example to
\begin{subequations}\label{eqn:admissibleansatzEX2}
\begin{flalign}
A \,&=\,\sum_{i=1}^3 \bigg(A_c^i + \frac{A^i_{a_i}}{z-a_i}\bigg)\,\dd u_i\quad,\\
B \,&=\,\sum_{i,j=1}^3\bigg(B_c^{ij} + \frac{B_{a_i}^{ij}}{z-a_i}
+ \frac{B_{a_j}^{ij}}{z-a_j} + \frac{B_0^{ij}}{z}\bigg)\,\dd u_i\wedge \dd u_j\quad.
\end{flalign}
\end{subequations}
We now determine the coefficient functions 
$A_c^i,A^i_{a_i} \in C^\infty(M,\g)$ and $B_c^{ij},B_{a_i}^{ij},B_{a_j}^{ij},B_{0}^{ij}
\in C^\infty(M,\g_{\mathrm{ab}})$ by solving the boundary conditions \eqref{eqn:bdyconditionsj}. 
For this it is again convenient to use the gauge transformations in 
\eqref{eqn:edgemode gauge transformation}, with $\bm{\iota}$
replaced by $\bm{j}$ since we are in the context of higher-order poles,
in order to gauge fix the edge modes $(k,\kappa)\in C^\infty(M,G^{\hat{\bm{z}}})\times 
\Omega^1(M,\g_{\mathrm{ab}}^{\hat{\bm{z}}})$ according to
\begin{flalign}
k \,=\,\big((k_r,0),(1_G,0),(1_G,0)\big)\quad,\qquad
\kappa\,=\, \big((\kappa_r,0),(0,\lambda_s),(\kappa_\infty,0)\big)\quad.
\end{flalign}
Working out the first boundary condition in \eqref{eqn:bdyconditionsj}
for the ansatz \eqref{eqn:admissibleansatzEX2} and the gauge fixed edge mode yields
\begin{subequations}\label{eqn:bdyEX2}
\begin{flalign}
A\vert_{z=r} \,=\, k_r^{-1}\,\dd_M k_r\quad,\qquad A\vert_{z=\infty}\,=\, 0\quad,
\end{flalign}
and the second boundary condition yields
\begin{flalign}
B\vert_{z=r}\,&=\, k_r^{-1}\,(\dd_M \kappa_r)\,k_r\quad,\qquad 
B\vert_{z=\infty}\,=\, \dd_M\kappa_\infty \quad,\qquad\\
B\vert_{z=s}\,&=\,0\quad,\qquad
\partial_z B\vert_{z=s}\,=\, \Lambda_s \,:=\,
\dd_M\lambda_s + \big[A\vert_{z=s},\lambda_s\big]\quad.
\end{flalign}
\end{subequations}
The system of equations \eqref{eqn:bdyEX2} can be solved for the coefficient functions
appearing in the ansatz \eqref{eqn:admissibleansatzEX2}, which gives
\begin{subequations}\label{eqn:bdysolutionEX2}
\begin{flalign}
A_c^i\,&=\ 0 \quad,\\[5pt]
A_{a_i}^i\,&=\ (r-a_i)\, k_r^{-1}\,\partial_{u_i} k_r\quad,\\[5pt]
B_c^{ij}\,&=\,(\dd_M\kappa_\infty)^{ij} \quad,\\[5pt]
B_{a_i}^{ij}\,&=\, \tfrac{(r-a_i)\,(s-a_i)^2}{a_i\,(a_i-a_j)}\, \Big( 
\tfrac{r\, (r-a_j)}{(r-s)^2}\, k_r^{-1}\,(\dd_M\kappa_r)^{ij}\,k_r \,-\, 
\tfrac{s\,(s-a_j)}{r-s}\,\Lambda_s^{ij} \,-\,
(\dd_M\kappa_\infty)^{ij} \Big)\quad,\\[5pt]
B_{a_j}^{ij}\,&=\, \tfrac{(r-a_j)\,(s-a_j)^2}{a_j\,(a_j-a_i)}\, \Big( 
\tfrac{r \,(r-a_i)}{(r-s)^2}\, k_r^{-1}\,(\dd_M\kappa_r)^{ij}\,k_r \,-\, 
\tfrac{s\, (s-a_i)}{r-s}\,\Lambda_s^{ij} \,-\,
(\dd_M\kappa_\infty)^{ij} \Big)\quad,\\[5pt]
B_0^{ij}\,&=\, \tfrac{r\,s^2}{a_i\,a_j}\, \Big( 
\tfrac{(r-a_i)\, (r-a_j)}{(r-s)^2}\, k_r^{-1}\,(\dd_M\kappa_r)^{ij}\,k_r \,-\, 
\tfrac{(s-a_i)\,(s-a_j)}{r-s}\,\Lambda_s^{ij} \,-\,
(\dd_M\kappa_\infty)^{ij} \Big)\quad.
\end{flalign}
\end{subequations}
Hence, we have uniquely solved the boundary conditions
for the connection $(A,B)$ from \eqref{eqn:admissibleansatzEX2} 
in terms of the edge mode $(k,\kappa)$.

Inserting this result into the action \eqref{eqn:extended action higher Mlegs},
and noting that the Chern-Simons term vanishes since  $t_\ast^{\hat{\bm{z}}} =0$ in the present example,
yields the defect action
\begin{multline}\label{eqn:actionEX2}
S_M(k_r,\kappa_r,\lambda_s,\kappa_\infty) \,=\, \int_M \bigg( \ell^1_s\,\ip{A\vert_{z=s}}{\dd_M\lambda_s + \tfrac{1}{2}\big[A\vert_{z=s},\lambda_s\big]}\\
+ \ell_r^1\,\ip{\partial_z A\vert_{z=r}}{k_r^{-1}\,(\dd_M\kappa_r)\,k_r} 
- \ip{\partial_{z^{-1}} A\vert_{z=\infty}}{\dd_M\kappa_\infty}\bigg)\quad,
\end{multline}
where the values of $A$ and its $z$ derivative at the various poles $z=r,s,\infty$
are determined in terms of the edge mode by \eqref{eqn:admissibleansatzEX2} and
\eqref{eqn:bdysolutionEX2}. The corresponding Lax connection $(A,B)$
for this theory is given by inserting \eqref{eqn:bdysolutionEX2} into \eqref{eqn:admissibleansatzEX2}.
As a consequence of Proposition \ref{prop:admissible}, the Euler-Lagrange equations
for the action \eqref{eqn:actionEX2} are equivalent to the $M$-relative flatness
conditions for the Lax connection, which in our present example read as
\begin{flalign}\label{eqn:EOMEX2}
\dd_M B + [A,B]\,=\,0\quad,\qquad \dd_M A +\tfrac{1}{2}[A,A]\,=\,0
\end{flalign}
since $t_\ast =0$.

By a slightly lengthy computation, the system of equations \eqref{eqn:EOMEX2} can be 
worked out component-wise by inserting \eqref{eqn:admissibleansatzEX2} together with 
the explicit coefficient functions given in \eqref{eqn:bdysolutionEX2}. One then finds that 
the $2$-form equation $\dd_M A +\tfrac{1}{2}[A,A]=0$ is equivalent to the flatness condition
\begin{flalign}\label{eqn:EOMEX2A}
\dd_M A\vert_{z=s} + \tfrac{1}{2}\big[A\vert_{z=s},A\vert_{z=s}\big]\,=\,0
\end{flalign}
for the connection $A\vert_{z=s} \,=\, k_r^{-1}\,\dd_M^{1,-1} k_r \in\Omega^1(M,\g)$
which is expressed here in terms of the edge mode $k_r\in C^\infty(M,G)$ and
the weighted derivative defined for general $m, n \in \ZZ$ as
$\dd_M^{m,n} := \sum_{i=1}^{3} (r-a_i)^m (s-a_i)^n \,\dd u_i \wedge\partial_{u_i}$.
The top-form equation $\dd_M B + [A,B]=0$ is equivalent to 
\begin{subequations}\label{eqn:EOMEX2B}
\begin{flalign}
(s-r)\, \ell^1_s ~ \dd_M^{0,-1} \Big(\dd_M\lambda_s + \big[A\vert_{z=s},\lambda_s\big]\Big) + P\,=\,0\quad,
\end{flalign}
where we have introduced the short-hand notation
\begin{flalign}
P\,&:=\, \dd_M^{1,0} \big( \dd_M \kappa_\infty \big) + \big[\partial_{z^{-1}} A\vert_{z=\infty}, \dd_M\kappa_\infty \big]+
\ell_r^1 ~\dd_M^{-1,0} \big( k_r^{-1}\,(\dd_M\kappa_r)\,k_r \big)\quad.
\end{flalign}
\end{subequations}

The top-form equation of motion \eqref{eqn:EOMEX2B}
is rather non-transparent and hence difficult to analyze in full generality. 
We will now show that, restricting to a special class of solutions, it is related to the Ward equation.
For this we assume that the edge mode fields $\kappa_r,\kappa_\infty\in\Omega^1(M,\g_{\mathrm{ab}})$ 
are de Rham closed, i.e.\ $\dd_M\kappa_r = \dd_M\kappa_\infty =0$, 
and that $\lambda_s = \sum_{i=1}^3\lambda_s^i \,\dd u_i$ has constant coefficient functions, 
i.e.\ $\partial_{u_i}\lambda_s^j=0$ for all $i,j\in\{1,2,3\}$.
It then follows that $P=0$, $\dd_M\lambda_s =0$ and $\dd_M^{0,-1}\lambda_s =0$, 
so the top-form equation \eqref{eqn:EOMEX2B} simplifies to
\begin{flalign}
\big[\dd_M^{0,-1}A\vert_{z=s}, \lambda_s\big]\,=\,0\quad.
\end{flalign}
Inserting $A\vert_{z=s} \,=\, k_r^{-1}\,\dd_M^{1,-1} k_r$ and working out the weighted
derivatives, one finds that this is equivalent to the equation
\begin{flalign}\label{eqn:Wardequationpre}
\,&\, \Big[(r-a_3)\, \partial_{u_2} \big(k_r^{-1}\,\partial_{u_3} k_r\big)-
(r-a_2)\, \partial_{u_3} \big(k_r^{-1}\,\partial_{u_2} k_r\big),(s-a_1)\,\lambda_s^1\Big]\\[4pt]
\nonumber +\,&\, \Big[(r-a_1)\, \partial_{u_3} \big(k_r^{-1}\,\partial_{u_1} k_r\big)-
(r-a_3)\, \partial_{u_1} \big(k_r^{-1}\,\partial_{u_3} k_r\big),(s-a_2)\,\lambda_s^2\Big]\\[4pt]
\nonumber +\,&\, \Big[(r-a_2)\, \partial_{u_1} \big(k_r^{-1}\,\partial_{u_2} k_r\big)-
(r-a_1)\, \partial_{u_2} \big(k_r^{-1}\,\partial_{u_1} k_r\big),(s-a_3)\,\lambda_s^3\Big]\,=\,0\quad.
\end{flalign}
If we now choose the constants $\lambda_s^i$ such that $\eta_s := (s-a_i)\,\lambda_s^i$, for all $i=1,2,3$,
we obtain
\begin{subequations}\label{eqn:Wardequation}
\begin{flalign}
\bigg[\sum_{i,j=1}^3 N_{ij}\, \partial_{u_i} \big(k_r^{-1}\,\partial_{u_j} k_r\big),\eta_s\bigg]\,=\,0\quad,
\end{flalign}
where $N_{ij}$ are the entries of the matrix
\begin{flalign}
N\,=\,\begin{pmatrix}
0 & r-a_2 & a_3-r\\
a_1-r & 0 & r-a_3\\
r-a_1 & a_2-r & 0
\end{pmatrix}\quad.
\end{flalign}
\end{subequations}
Decomposing $N=g + \tau$ into its symmetric part $g$ and 
antisymmetric part $\tau$ gives
\begin{flalign}\label{eqn:metric}
g \,=\,\begin{pmatrix}
0 & \tfrac{a_1-a_2}{2} & \tfrac{a_3-a_1}{2}\\
\tfrac{a_1-a_2}{2} & 0 & \tfrac{a_2-a_3}{2} \\
\tfrac{a_3-a_1}{2}& \tfrac{a_2-a_3}{2} & 0
\end{pmatrix} \quad,\qquad
\tau \,=\,\begin{pmatrix}
0 & r-\tfrac{a_1+a_2}{2} & \tfrac{a_1+a_3}{2}-r\\
\tfrac{a_1+a_2}{2}-r & 0 & r-\tfrac{a_2+a_3}{2}\\
r-\tfrac{a_1+a_3}{2} & \tfrac{a_2+a_3}{2}-r & 0
\end{pmatrix} \quad.
\end{flalign}
Considering the trace and the determinant of the symmetric part $g$,
one deduces that 1.)~the eigenvalues of $g$ are all non-zero since by hypothesis
$a_i\neq a_j$, for all $i\neq j$, and 2.)~one eigenvalue has the opposite sign
of the other two eigenvalues. This means that $g$ defines a Lorentzian metric.
Using the freedom to multiply the equation of motion \eqref{eqn:Wardequation} by $-1$,
we can assume without loss of generality that $g$ has signature $(-++)$.
We also observe that the coordinates $u_i$ which were used in describing
the admissible pole structure of the connection $(A,B)$ in \eqref{eqn:admissibleansatzEX2}
all turn out to be null with respect to the metric \eqref{eqn:metric}.
Concerning the antisymmetric part $\tau$, we apply the Hodge operator $\ast$
associated with $g$ and an arbitrary choice of orientation of $M=\mathbb{R}^3$
and find that the resulting covector $v=\ast(\tau)$ has squared norm $\vert\!\vert v\vert\!\vert^2_g =1$ 
with respect to the Lorentzian metric $g$, hence it defines a normalized spacelike covector.

In order to make the relationship between the equation of motion \eqref{eqn:Wardequation}
and Ward's equation \cite{Wmodel1, Wmodel2} more explicit, we can transform
the null coordinates $(u_1,u_2,u_3)$ to new coordinates $(t,x,y)$ in which the metric
takes the standard Minkowski form $\mathrm{diag}(-1,1,1)$. In these coordinates
the equation of motion \eqref{eqn:Wardequation} then reads as
\begin{flalign}\label{eqn:Wardequationexplicit}
&\bigg[-\partial_t\big(k_r^{-1}\,\partial_t k_r\big)
+\partial_x\big(k_r^{-1}\,\partial_x k_r\big) 
+\partial_y\big(k_r^{-1}\,\partial_y k_r\big) \\
\nonumber &\qquad \qquad +a\, \Big(\partial_x\big(k_r^{-1}\,\partial_y k_r\big) -\partial_y\big(k_r^{-1}\,\partial_x k_r\big)\Big) \\
\nonumber &\qquad \qquad +b\, \Big(\partial_y\big(k_r^{-1}\,\partial_t k_r\big) -\partial_t\big(k_r^{-1}\,\partial_y k_r\big)\Big) \\
\nonumber &\qquad\qquad  +c\, \Big(\partial_t\big(k_r^{-1}\,\partial_x k_r\big) -\partial_x\big(k_r^{-1}\,\partial_t k_r\big)\Big) 
, \eta_s\bigg]\,=\,0\quad,
\end{flalign}
where $(a,b,c)$ are the components of the normalized spacelike covector
$v$ in this choice of coordinates, i.e.\ they satisfy $-a^2 + b^2 + c^2 =1$.
The expression in the first entry of the Lie bracket in \eqref{eqn:Wardequationexplicit} 
is precisely the left-hand side of Ward's equation, including Ward's normalization 
condition for the covector $v$. Any solution
$k_r\in C^\infty(M,G)$ to Ward's equation is thus a solution
of our top-form equation of motion $\dd_M B +[A,B]=0$, provided that the other edge mode fields
$\kappa_r,\kappa_\infty$ are chosen to be de Rham closed and that $\lambda_s = \sum_{i=1}^3\lambda_s^i \,\dd u_i$ 
has constant coefficient functions such that $(s-a_i)\,\lambda_s^i=(s-a_j)\,\lambda_s^j$, 
for all $i,j\in\{1,2,3\}$. We would like to note that our equation of motion \eqref{eqn:Wardequationexplicit} 
also captures solutions to the inhomogeneous Ward equation with right-hand side
given by a current $j\in\Omega^1(M,\g)$ which lies in the kernel of $[\,\cdot\,,\eta_s]$, 
i.e.\ $[j,\eta_s]=0$.

It remains to investigate the $2$-form equation \eqref{eqn:EOMEX2A}.
Working in our original set of null coordinates $(u_1,u_2,u_3)$, the three components
of this equation can be written explicitly as
\begin{flalign}\label{eqn:Aequation}
(r-a_j)\, \partial_{u_i} \big(k_r^{-1}\,\partial_{u_j} k_r\big)-
(r-a_i)\, \partial_{u_j} \big(k_r^{-1}\,\partial_{u_i} k_r\big)\,=\,0\quad,
\end{flalign}
for $i,j\in\{1,2,3\}$  with $i<j$. 
Note that these are precisely the individual summands entering the Ward equation \eqref{eqn:Wardequationpre}.
One possible mechanism to solve both equations of motion \eqref{eqn:Aequation}
and \eqref{eqn:Wardequation}, as required for the full flatness of the Lax connection $(A,B)$,
is to consider solutions $k_r \in C^\infty(M,G)$ of the Ward equation
which are constant along one of the null coordinates $u_i$ of spacetime. 
The solutions one obtains in this way would then be `chiral' in
this chosen null direction. For such chiral solutions of the Ward equation,
one can then construct families of conserved charges by taking both $1$-dimensional
and $2$-dimensional holonomies \cite{SchreiberWaldorf,Joao,Waldorf2} 
of the associated fully flat Lax connection $(A,B)$, see also Subsection \ref{subsec:2hol}.

We would like to conclude this section by observing that, even without enforcing the 
very restrictive $2$-form equation \eqref{eqn:Aequation}, our approach leads to a family of
conserved charges for solutions to the top-form equation $\dd_M B + [A,B]=0$, and hence in particular
for general solutions to the Ward equation. The origin of these conserved charges 
lies in the fact that in the present example the Lie group $H=\tilde{\g}$ is Abelian with group operation $+$,
hence the exponential map and $2$-dimensional holonomy simplify drastically. This
allows us to build a conserved charge for every homogeneous
$\ad$-invariant polynomial $p\in (\mathrm{Sym}^n\g^\ast)^\g$ of degree $n$ on the Lie algebra $\g$:
Consider the product manifold $M^{n} = M\times \cdots \times M$ and denote by $\mathrm{pr}_i : M^{n}\to M$
the projection onto the $i$-th factor. From these data we can define the differential
form 
\begin{flalign}
p(B)\,:=\,p\big(\mathrm{pr}_1^\ast(B)\wedge\cdots \wedge\mathrm{pr}_n^\ast(B)\big)\,\in\, \Omega^{2\,n}(M^{n})\quad,
\end{flalign}
which as a consequence of the top-form equation $\dd_M B + [A,B]=0$ is closed. 
Indeed, from the Leibniz rule for $\dd_{M^{n}}$ and the Lie bracket, one observes that
\begin{flalign}
\nonumber \dd_{M^{n}}p(B) \,&=\,p\Big(\dd_{M^n}\Big(\mathrm{pr}_1^\ast(B)\wedge\cdots \wedge\mathrm{pr}_n^\ast(B)\Big)
+\Big[\sum_{i=1}^n\mathrm{pr}_i^\ast(A),\mathrm{pr}_1^\ast(B)\wedge\cdots \wedge\mathrm{pr}_n^\ast(B) \Big]\Big) \\
\,&=\,\sum_{i=1}^n p\Big(\mathrm{pr}_1^\ast(B)\wedge\cdots\wedge \mathrm{pr}_i^\ast\big(\dd_M B+[A,B]\big)\wedge\cdots \wedge\mathrm{pr}_n^\ast(B)\Big)\,=\,0\quad.
\end{flalign}
Picking any family of Cauchy surfaces $\Sigma_1,\dots,\Sigma_n\subset M$, one obtains a multi-local 
conserved charge
\begin{flalign}
Q_p(B)\,:=\,\int_{\Sigma_1\times\cdots\times\Sigma_n}p(B)
\end{flalign}
which depends meromorphically on the spectral parameter $z\in \CP$.


\addtocontents{toc}{\SkipTocEntry}
\section*{Data availability statement}
All data generated or analyzed during this study are contained in this document. 

\addtocontents{toc}{\SkipTocEntry}
\section*{Conflict of interest statement}
The authors have no conflict of interest to declare that are relevant to the content of this article.


\end{document}